\def\date{\today}  
\newtheorem{proposition}{proposition}[section]
\newtheorem{lemma}[proposition]{Lemma}
\newtheorem{theorem}[proposition]{Theorem}
\newtheorem{claim}[proposition]{Claim}
\newcommand{\qed}{\hfill$\square$\bigskip}
\newcommand{\DEF}[1]{{\em #1\/}}
\newcommand\myitemsep{\setlength{\itemsep}{0pt}}
\newcommand\fw{\hbox{\tt fw}}  
\newcommand{\eopf}{$\square$}
\newcommand\drop[1]{}
\def\showlabel#1{}
\def\showfiglabel#1{}
\newcommand\bsize{{\tt bsize}}
\newenvironment{proof}%
{\noindent{\bf Proof.}\ }%
{\hfill\eopf\par\bigskip}%
\begin{document}
\newif\ifproofmode
\proofmodetrue

\baselineskip=13pt \phantom{a}\vskip .25in \centerline{{\large{\bf
Graph Isomorphism for Bounded Genus Graphs In Linear Time}}}
\vskip.4in

\centerline{{\bf Ken-ichi Kawarabayashi}\footnote{Research partly
supported by Japan Society for the Promotion of Science,
Grant-in-Aid for Scientific Research\\ Email address: {\tt k\_keniti@nii.ac.jp}}} \centerline{National Institute of
Informatics and JST ERATO Kawarabayashi Large Graph Project} \centerline{2-1-2 Hitotsubashi, Chiyoda-ku, Tokyo
101-8430, Japan}


\vskip 0.8in
\centerline{\bf Abstract}
\bigskip
\parshape=1.5truein 5.5truein

For every integer $g$, isomorphism of graphs of Euler genus at most $g$ can be decided in linear time.

This improves previously known algorithms whose time complexity is
$n^{O(g)}$ (shown in early 1980's), and in fact, this is the first
fixed-parameter tractable algorithm for the graph isomorphism problem for
bounded genus graphs in terms of the Euler genus $g$.
Our result also generalizes the seminal result of Hopcroft and Wong in 1974, which says that
the graph isomorphism problem can be decided in linear time for planar graphs.

Our proof is quite lengthly and complicated, but if we are satisfied with an $O(n^3)$ time algorithm for the same problem, the proof is shorter and easier.

\bigskip
\bigskip
\bigskip

\noindent March 17, 2012, revised \date.


\medskip
{\bf Keywords}: Graph isomorphism, Map isomorphism, Linear time algorithm,
Surface, Face-width, Polyhedral embedding.
\vfil\eject

\ifproofmode
  \def\sectionbreak{\vfil\eject}
   \newcount\remarkno
   \def\REMARK#1{{%
      \footnote{\baselineskip=11pt #1
      \vskip-\baselineskip}\global\advance\remarkno by1}}
\else
  \def\sectionbreak{}
  \def\REMARK#1{}
\fi

\baselineskip 13.3pt

\section{Introduction}

\subsection{The Graph Isomorphism Problem}

The graph isomorphism problem asks whether or not two given graphs
are isomorphic. It is considered by many as one of the most
challenging problems today in theoretical computer science. While
some complexity theoretic results indicate that this problem might not
be NP-complete (if it were, the polynomial hierarchy would collapse to
its second level, see \cite{BM,BHZ,OMW,GS,sch}), no polynomial time algorithm
is known for it, even with extended resources like randomization or
quantum computing.

 On the other hand, there is a number of important classes of
graphs on which the graph isomorphism problem is known to be
solvable in polynomial time. For example, in 1990, Bodlaender
\cite{bdtr} gave a polynomial time algorithm for the graph
isomorphism problem for graphs of bounded tree-width.
Many NP-hard problems can be solved in polynomial time, even
in linear time, when input is restricted to graphs of tree-width
at most $k$ \cite{Arn,bod}. So, Bodlaender's result may not be
surprising, but the time complexity in \cite{bdtr} is $O(n^{k+2})$, and
no one could improve the time complexity to $O(n^{O(1)})$
until quite recently \cite{focs14}.
This indicates that even for graphs of bounded tree-width,
the graph isomorphism problem is not trivial at all.

Another important family of graphs is the planar graphs.
In 1966, Weinberg \cite{wein} gave a very simple $O(n^2)$
algorithm for the graph isomorphism problem for planar graphs.
This was improved by Hopcroft and Tarjan \cite{HT1,HT2}
to $O(n\log n)$.
Building on this earlier work, Hopcroft and Wong \cite{HW}
published in 1974 a seminal paper, where they presented a
linear time algorithm for the graph isomorphism problem for planar graphs.

There are some other classes of graphs on which the graph isomorphism
problem is solvable in polynomial time. This includes
minor-closed families of graphs \cite{miller2,minorclosed,Po89}, and graphs without a fixed graph as a topological minor \cite{gromar12}.
A powerful approach based on group theory was introduced by Babai \cite{Babai79}.
Based on this approach, Babai et al.\ \cite{babai1} proved that the graph isomorphism
problem is polynomially solvable for graphs of bounded eigenvalue multiplicity,
and Luks \cite{luks} described his well-known group theoretic
algorithm for the graph isomorphism problem for graphs of bounded degree.
Babai and others \cite{babai2,babai3} investigated the graph isomorphism
problem for random graphs.


\subsection{Bounded Genus Graphs}

Leaving the plane to consider graphs on surfaces of higher genus,
the graph isomorphism problem seems much harder. In 1980, Filotti,
Mayer \cite{FM} and Miller \cite{miller1} showed that for every
orientable surface $S$, there is a polynomial time algorithm for the
graph isomorphism problem for graphs that can be embedded in $S$,
but the time complexity is $n^{O(g)}$,
where $g$ is the Euler genus of $S$.
Lichtenstein \cite{li} gives an $O(n^3)$ algorithm for the graph
isomorphism problem for projective planar graphs. These works came out
in the early 1980's. These classes of graphs were extensively
studied from other perspectives. For example, Grohe and Verbitsky
\cite{grohe1,grohe2}, who studied this problem from a logic point of view,
made some interesting progress. However, no one could improve the time
complexity in the last 30 years. This can be perhaps explained in
the following way. We can rather easily reduce the problem to
3-connected graphs.
For planar graphs, the famous result of Whitney tells us that
embeddings of 3-connected graphs in the plane are (combinatorially)
unique. This allows us to reduce the graph isomorphism problem to the map isomorphism problem, which is
easier (see Hopcroft and Wong \cite{HW} and Theorem \ref{thm:main3}).
But for every nonsimply connected surface $S$, there exist
3-connected graphs with exponentially many embeddings.
This makes an essential difference between planar graphs and graphs in surfaces of
higher genus. In addition, Thomassen \cite{thomassen} proved that it is NP-complete to determine Euler genus of
a given graph.


A graph $G$ embedded in a surface $S$ has \DEF{face-width} or
\DEF{representativity} at least $k$, $\fw(G)\ge k$, if every
non-contractible closed curve in the surface intersects the graph in
at least $k$ points. This notion turns out to be of great importance in
the graph minor theory of Robertson and Seymour, cf.~\cite{MKsurvey},
and in topological graph theory, cf.~\cite{MT}. If an embedding of $G$ in $S$ is of face-width $k$, then
we sometimes call this embedding \DEF{face-width $k$ embedding}.

If $G$ is 3-connected and $\fw(G)\ge3$, then the embedding has
properties that are characteristic for 3-connected planar graphs.
The main property is that the faces are all simple polygons and
that they intersect nicely -- if two distinct faces are not disjoint,
their intersection is either a single vertex or a single edge.
Therefore such embeddings are sometimes called \DEF{polyhedral embeddings}.

%
%
The important property about 3-connected graphs that have a polyhedral embedding in a surface is the following
in \cite{kmstoc08,MR}.
\begin{lemma}
\label{finitely}
\showlabel{finitely}
Let $G$ be a 3-connected polyhedrally embeddable graph in a surface $S$ of Euler genus $g$.
There is a function $f(g)$ such that $G$ has at most $f(g)$ different polyhedral embeddings in $S$.
\end{lemma}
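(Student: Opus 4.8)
The plan is to reduce the lemma to a statement about embeddings of prescribed face-width, and then to exploit the rigidity that large face-width imposes on a $3$-connected graph. First I would recall the standard fact that every polyhedral embedding has $\fw(G)\ge 3$, so it suffices to bound the number of embeddings of $G$ in $S$ with $\fw\ge 3$. The face-width hypothesis is essential here: as noted in the introduction, for every surface $S$ other than the sphere there are $3$-connected graphs with exponentially many embeddings in $S$, so any proof must use $\fw\ge 3$ (together with $3$-connectivity) in a crucial way. Fix one polyhedral embedding $\Pi_0$ of $G$ in $S$ as a reference. The strategy is to show that any other polyhedral embedding $\Pi$ of $G$ in $S$ agrees with $\Pi_0$ ``locally everywhere'', so that the only remaining freedom is a bounded amount of ``global'' combinatorial data depending only on $g$.

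The local agreement is a Whitney-type phenomenon for surfaces. For a $3$-connected graph with $\fw\ge 3$, the way in which the faces fit together near any vertex is forced by $G$ alone; here I would invoke the local-uniqueness results of representativity theory (Robertson and Vitray; see \cite{MT}), which say that if $\fw(G)\ge 3$ then, around each vertex $v$, the cyclic arrangement of the edges and faces incident with $v$ --- and indeed the embedding restricted to a bounded-radius disk around $v$ --- is determined by $G$ up to reflection, independently of the embedding. Consequently $\Pi$ and $\Pi_0$ share all of these ``local patches'', and $\Pi$ can differ from $\Pi_0$ only in how the patches are glued across non-contractible curves: equivalently, in the signature of the embedding along a basis of $H_1(S;\mathbb{Z}_2)\cong\mathbb{Z}_2^{\,g}$, and in a bounded amount of ``re-routing'' data for a system of non-contractible cycles that cuts $S$ open. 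The signature contributes at most $2^{g}$ possibilities, so everything reduces to bounding the re-routing data by a function of $g$.

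This last step is the main obstacle. The difficulty is that a cutting system realizing the difference between $\Pi$ and $\Pi_0$ may consist of arbitrarily long cycles --- the edge-width of a polyhedral embedding is not bounded in terms of $g$ --- so one cannot simply enumerate the relevant cycles. Instead one must show that, modulo the local rigidity above, any two polyhedral embeddings of $G$ in $S$ are related by at most $f_1(g)$ elementary re-embedding moves, each offering at most $f_2(g)$ choices: the point is that $3$-connectivity forces each such move to act on a disk-bounded patch that is itself rigid, while the Euler genus $g$ bounds both how many independent places such a move can be applied and how much twisting it can introduce. Carrying this out --- controlling precisely the ways in which long cutting cycles can be re-routed and re-twisted while still yielding a \emph{polyhedral} embedding --- is the substance of the flexibility theory of polyhedral embeddings of Mohar and Robertson \cite{MR}, and I would build the proof on that machinery. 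Combining the bound on move-sequences with the $2^{g}$ signatures and the finitely many reference patches yields an upper bound $f(g)$ on the number of polyhedral embeddings of $G$ in $S$.
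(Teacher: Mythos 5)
Your proposal ultimately rests on the flexibility theory of polyhedral embeddings of Mohar and Robertson \cite{MR}, which is exactly what the paper does: Lemma \ref{finitely} is not proved in the paper at all but is quoted as a known result from \cite{kmstoc08,MR}. So your approach is essentially the same as the paper's (your preliminary reductions are fine, and the hard bounded-flexibility step is delegated to the same machinery the paper cites).
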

In fact, in \cite{kmstoc08}, the following was shown.
\begin{theorem}
\label{thm:main2}
\showlabel{thm:main2}
For each surface $S$, there
is a linear time algorithm for the following problem: Given an
integer $k \geq 3$ and a graph $G$, either find an embedding of $G$
in $S$ with face-width at least $k$, or conclude that $G$ does not
have such an embedding. Moreover, if there is an embedding in $S$ of
face-width at least $k$ and $G$ is 3-connected, the algorithm gives rise to all embeddings with
this property.
Furthermore, the number of such embeddings is
at most $f(g)$, where $f(g)$ comes from Lemma \ref{finitely}.
\end{theorem}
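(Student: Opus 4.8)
\medskip
\noindent\textbf{Proof idea.}
The plan is to build the algorithm on a linear-time surface-embedding subroutine together with the flexibility theory of polyhedral embeddings, in three stages. \emph{First, reduction to the $3$-connected case.} Using the linear-time $3$-block / SPQR decomposition of $G$, one reduces the decision problem and the task of producing a single embedding with $\fw\ge k$ to the $3$-connected pieces. The relevant structural fact is that an embedding of face-width at least $3$ forces one side of every separation of $G$ of order at most $2$ to lie in a disk; this tightly constrains the low-connectivity structure (for a non-planar surface it even forces $G$ to be $2$-connected with essentially one non-planar $3$-connected piece). Enumerating \emph{all} embeddings with $\fw\ge k$ is only required for $3$-connected $G$, as in the statement, and this is handled in the remaining stages.

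\emph{Second, the structure of polyhedral embeddings.} Let $G$ be $3$-connected and fix one embedding $\Pi_0$ in $S$ with $\fw(\Pi_0)\ge k$; since $k\ge 3$, $\Pi_0$ is polyhedral, so its faces are induced cycles and any two faces meet in at most a vertex or an edge. By the flexibility theory of polyhedral embeddings (\cite{MR}; see also \cite{kmstoc08}), every polyhedral embedding of $G$ in $S$ is obtained from $\Pi_0$ by a bounded sequence of \emph{elementary re-embeddings}, each a local surgery along a non-contractible simple closed curve meeting $G$ in a bounded number of points, or a rearrangement inside a bounded ``flexible'' piece; there are only $O(g)$ locations at which such an operation acts non-trivially, and the number of embeddings so produced is at most $f(g)$, as in Lemma \ref{finitely}. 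Crucially, the effect of an elementary re-embedding on the face-width is local: the only candidates for short non-contractible curves in the new embedding lie near the surgery curves, so the face-width of each resulting embedding can be read off from bounded-size local data rather than recomputed globally.

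\emph{Third, the linear-time algorithm.} Run a linear-time test for embeddability of $G$ in the fixed surface $S$; if it fails, no embedding with $\fw\ge k$ exists. Otherwise take the embedding $\Pi$ it produces and, using the flexibility theory, either repair it into an embedding with $\fw\ge k$ by one elementary re-embedding or certify that none exists (equivalently, one checks a finite family of obstructions depending only on $S$ and $k$, each testable in linear time). In the $3$-connected case, finally locate in one linear-time pass the $O(g)$ re-embedding sites and bounded flexible pieces, enumerate the at most $f(g)$ combinations of elementary re-embeddings, discard those whose face-width drops below $k$ via the local computation above, and output the rest.

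\emph{The main obstacle} is entirely the word ``linear''. A polynomial-time algorithm---the promised $O(n^3)$ version---is comparatively easy: find an embedding, list the polynomially many candidate re-embeddings, and for each recompute the face-width from scratch, keeping it iff it is at least $k$. For linear time one cannot afford any global face-width computation beyond possibly the first, so that locating the $O(g)$ surgery curves and flexible pieces, deciding the $\fw\ge k$ problem, and filtering the at most $f(g)$ enumerated embeddings must all rely on bounded-size local information extracted during the single linear-time embedding pass. Making the flexibility theory of \cite{MR} quantitatively sharp enough to support this, and verifying that the low-connectivity reduction of the first stage itself stays linear, is where essentially all the work lies.
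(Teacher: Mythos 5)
Your proposal does not follow the paper's route, and it contains a genuine gap at its core. Note first that the paper does not reprove this statement: Theorem \ref{thm:main2} is quoted from \cite{kmstoc08}, and what the paper reproduces of its machinery (Section \ref{secmin} and the appendix) shows what the actual proof rests on, namely: $G$ has an embedding in $S$ of face-width at least $k$ if and only if $G$ contains one of a bounded family of bounded-size \emph{minimal face-width-$k$ embeddings} as a surface minor; these surface minors can be detected and located in linear time by the irrelevant-vertex reduction to bounded tree-width followed by dynamic programming (Theorem \ref{find1} and Lemma \ref{irrelevant11} in the appendix); the found minor is then expanded to an embedded subgraph of $G$ (Lemma \ref{expansion}) whose embedding is extended to an embedding of $G$ by Mohar's extension algorithm (Theorem \ref{spec}); finally, for $3$-connected $G$ the enumeration and the bound $f(g)$ come from Lemma \ref{finitely}. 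None of this appears in your argument.

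The gap in your proposal is the middle of your third stage: you take the arbitrary embedding $\Pi$ produced by the linear-time embeddability test and claim one can ``repair it into an embedding with $\fw\ge k$ by one elementary re-embedding or certify that none exists,'' alternatively by checking ``a finite family of obstructions depending only on $S$ and $k$.'' No such repair lemma or obstruction family is established, and the flexibility theory you invoke (\cite{MR}) cannot supply it: that theory compares \emph{polyhedral} embeddings of a $3$-connected graph with one another, whereas $\Pi$ may have face-width $1$ or $2$, and a graph can have exponentially many such low-face-width embeddings that are not related to any polyhedral embedding by a bounded number of local surgeries. Deciding whether a face-width-$\ge k$ embedding exists at all is precisely the hard content of the theorem, and your sketch assumes it rather than proves it. (Your claim that a single elementary re-embedding suffices is also false even between two polyhedral embeddings.) Secondary issues: for $k>3$ you still need to decide which of the $\le f(g)$ polyhedral embeddings have face-width $\ge k$, and asserting this can be ``read off from bounded-size local data'' is unsupported; and the first-stage reduction is only sketched, whereas the enumeration statement in the theorem is anyway restricted to $3$-connected $G$. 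By contrast, the surface-minor approach of \cite{kmstoc08} sidesteps all of this, which is why the paper relies on it.
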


We have to require the face-width of the embedding to be at least 3
in Theorem \ref{thm:main2}, since there are 3-connected graphs with
exponentially many embeddings in any surface (other than the
sphere). If we want to have a unique embedding in the surface of
Euler genus $g$ (which is an analogue of Whitney's theorem on
the uniqueness of an embedding in the plane), then the face-width must
be $\Theta(\log g/\log\log g)$. Sufficiency of this was proved
in \cite{uni1,uni2}, necessity in \cite{Archd}.

\subsection{Our Main Result}

Our main result of this paper is the following.

\begin{theorem}
\label{thm:main1} \showlabel{thm:main1}
For every integer $g$, isomorphism of graphs of Euler genus at most $g$ can be decided in linear time.
\end{theorem}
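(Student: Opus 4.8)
The plan is to produce, in linear time, a canonical form for each input graph $G$ of Euler genus at most $g$, so that two such graphs are isomorphic if and only if their canonical forms coincide; deciding isomorphism is then a linear-time equality test. The first step is the standard reduction to $3$-connected graphs: compute the block tree, and then decompose each block into its triconnected components (a Tutte/SPQR tree), both in linear time. Each $3$-connected triconnected component, being obtained from $G$ by contractions, is a minor of $G$ and hence has Euler genus at most $g$; since Euler genus is minor-monotone, this is exactly what lets us apply the bounded-genus machinery to the pieces. It therefore suffices to canonize the (coloured) decomposition tree together with canonical forms of its $3$-connected pieces: the tree part is handled by the linear-time Aho--Hopcroft--Ullman tree-canonization, and since the triconnected components have total size $O(n)$ and pairwise share only $O(1)$ vertices, doing each piece in time linear in its own size keeps the whole reduction linear.

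So assume $G$ is $3$-connected of Euler genus at most $g$, and split on face-width. If $G$ admits an embedding of face-width at least $3$ in some surface of Euler genus at most $g$, apply Theorem~\ref{thm:main2}: in linear time we obtain all such embeddings, and by Lemma~\ref{finitely} there are at most $f(g)$ of them; each is a polyhedral map (a rotation system with edge signs). For maps the isomorphism problem is much easier, because the faces are simple polygons meeting pairwise in at most a vertex or an edge, so one can attach to each map a canonical code via a Weinberg/Hopcroft--Wong-style boundary traversal -- this is the map-isomorphism result referred to above (Theorem~\ref{thm:main3}). Output the sorted list of the $\le f(g)$ codes as the canonical form of $G$: since being polyhedral is intrinsic, any graph isomorphism $G_1 \to G_2$ permutes polyhedral embeddings, so $G_1 \cong G_2$ if and only if these lists agree. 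Note that ``$G$ has a face-width $\ge 3$ embedding in some surface of Euler genus $\le g$'' is itself an isomorphism-invariant dichotomy, so this case cannot be confused with the next one.

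The remaining, and genuinely hard, case is that $G$ is $3$-connected, embeds in some surface of Euler genus $\le g$, but \emph{only} with face-width at most $2$; equivalently, every such embedding has a non-contractible simple closed curve meeting $G$ in at most two vertices. Cutting the surface along such a curve and capping the resulting boundary circle(s) with discs yields either one graph on a surface of strictly smaller Euler genus, or two graphs on surfaces whose Euler genera sum to strictly less, in each case with a bounded number of distinguished (``coloured'') vertices -- the images of the curve-vertices -- and a bounded amount of combinatorial data recording how the pieces are to be reglued. We recurse on the pieces, re-running the block/triconnected decomposition before each cut since cutting can destroy $3$-connectivity; the Euler genus drops at every cut, so the recursion has depth $O(g)$, and since there are only boundedly many (a function of $g$) inequivalent ways to perform each cut, the whole recursion tree has size bounded in terms of $g$. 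Its leaves are either planar (handled by Hopcroft--Wong) or have polyhedral embeddings in surfaces of smaller Euler genus (handled as in the previous paragraph); the pieces overlap in only $O(1)$ vertices, so their total size stays $O(n)$; and one finally canonizes the bounded-depth ``decomposition-and-cut'' tree, symmetrizing over the boundedly many cut choices at each node so that the output is independent of the input labelling.

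The main obstacle -- and where essentially all the length and difficulty of the argument will lie -- is this last case. One must (i) make the genus-reducing cuts \emph{canonically}: either pin down one cut invariantly, or enumerate all ``equivalent'' short non-contractible curves and prove there are only boundedly many, so that isomorphic inputs generate identical decomposition-and-cut trees; (ii) control the interaction between cutting at $\le 2$ vertices and the resulting loss of $3$-connectivity, so that the nested recursion (reduce connectivity, then reduce face-width and genus, then reduce connectivity again, \dots) still terminates with only $f(g)$ leaf pieces of total size $O(n)$; and (iii) fit every ingredient -- the face-width test and enumeration of polyhedral embeddings via Theorem~\ref{thm:main2}, the map canonization via Theorem~\ref{thm:main3}, the surgery, and the tree canonization -- into a single linear-time budget. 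The coarser $O(n^3)$ variant mentioned in the abstract can afford much cruder choices at each of these points, which is why it is so much shorter.
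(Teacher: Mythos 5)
Your overall architecture (reduce to $3$-connected pieces via block/triconnected trees, handle the face-width $\ge 3$ case by enumerating the at most $f(g)$ polyhedral embeddings and invoking map isomorphism, and in the face-width $\le 2$ case cut along short non-contractible curves and recurse with the genus dropping) matches the paper's plan. But the proof has a genuine gap exactly where you flag the ``main obstacle'': you never establish how the genus-reducing cuts are made canonically, and the two escape routes you sketch do not work as stated. In particular, the claim that ``there are only boundedly many (a function of $g$) inequivalent ways to perform each cut'' is false for the face-width-two case: a $3$-connected graph with only face-width-two embeddings can have $\Omega(n)$ distinct non-contractible curves meeting the graph in exactly two vertices within a single homotopy class (this is precisely the exponential-flexibility phenomenon of Figure~\ref{figa}), so enumerating cuts and symmetrizing over them cannot be bounded in $g$. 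The alternative, ``pin down one cut invariantly,'' is exactly what requires the paper's heavy machinery: the minimal face-width-two skeletons ${\bf F'}$ (Theorem~\ref{forbidden}, Lemma~\ref{expansion}), the bound on homotopy classes (Lemma~\ref{homology}), and, crucially, Theorem~\ref{algotwo}, which does \emph{not} choose a single curve but extracts, via the unique triconnected component decomposition of the cut-open graph and an extremal choice of cylinder pieces, one canonical cylinder $L'_i$ per homotopy class that provably contains \emph{all} order-two curves of that class; canonicity rests on the Canonical Lemma, whose proof needs the minimal genus-$g$ subgraph $F$ of Theorem~\ref{obst} and the crossing argument of Lemma~\ref{canon1}. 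Without some substitute for this, isomorphic inputs can produce non-isomorphic decomposition-and-cut trees, and your final equality test is unsound.

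Similarly, in the face-width-one case you need (and do not prove) that the set of vertices hit by order-one non-contractible curves has size bounded in $g$ and admits a canonical left/right splitting of the incident edges; this is Lemma~\ref{faceone}, again relying on Theorem~\ref{obst}, Lemma~\ref{project} (order-one curves must be orientation-preserving, else the genus was not minimal), and Lemma~\ref{canon1}. Two further points you leave untreated: the degenerate torus/Klein-bottle situation where the whole graph is the cylinder and no piece of smaller genus remains (handled separately in Theorem~\ref{algotorus} via a unique circular chain decomposition), and the fact that Theorem~\ref{thm:main3} gives map \emph{isomorphism testing}, not a canonical code for bounded-genus maps, so your ``canonical form'' framing needs either a canonization strengthening or a switch to pairwise comparison of the boundedly many leaf pieces (the latter is what the paper does). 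As written, the proposal is a correct high-level roadmap but defers precisely the content that constitutes the proof.
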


Let us point out that the proof is quite lengthly and complicated,
but if we are satisfied with an $O(n^3)$ time algorithm for the
same problem, the proof becomes shorter and easier.
In particular, the proof of Theorem \ref{algotwo}, which is
the most technical in our proof, becomes much simpler (we will mention this point in the proof of Theorem \ref{algotwo}).

Theorem \ref{thm:main1} is a generalization of the seminal
result of Hopcroft and Wong \cite{HW} that says that there is a
linear time algorithm for the graph isomorphism problem for planar
graphs. As remarked above, the time complexity of previously known
results for the graph isomorphism problem for graphs embeddable in a
surface of the Euler genus $g$ is $n^{O(g)}$, and this was proved in the
early 1980's. Theorem \ref{thm:main1} is the first improvement in
these 30 years, and the first fixed-parameter tractable result in
terms of the Euler genus $g$ for the graph isomorphism problem of this
class of graphs.

Let us point out that if we are satisfied with an $O(n^3)$ time algorithm
for  Theorem \ref{thm:main1}, the proof will be much easier and simpler.
Indeed, it seems to us that the hard part of our proof will be significantly simplified (cf., proofs of Theorem \ref{algotwo}
and Lemma \ref{faceone}).

In Section \ref{overview}, we shall give overview of
our algorithm.  Before that, we give several basic definitions.

\subsection{Basic Definitions}

Before proceeding, we review basic definitions concerning our work.

For basic graph theoretic definitions, we refer the reader to the book
by Diestel \cite{diestel}. For the notions of topological graph theory
we refer to the monograph by Mohar and Thomassen \cite{MT}.
A separation $(A,B)$ is a pair of sets $G=A \cup B$ such that there are no edges
between $A-B$ and $B-A$. The order of the separation $(A,B)$ is $|A \cap B|$.
By an \DEF{embedding} of a graph in a surface $S$ we mean
a \DEF{$2$-cell embedding} in $S$, i.e., we always assume that every face
is homeomorphic to an open disk in the plane.
Such embeddings can be
represented combinatorially by means of \DEF{local rotation} and
\DEF{signature}. See \cite{MT} for details. The local rotation and signature
define \DEF{rotation system}.
We define the \DEF{Euler genus} of a surface $S$ as $2-\chi(S)$,
where $\chi(S)$ is the Euler characteristic of $S$. This parameter coincides
with the usual notion of the genus, except that it is twice as large if
the surface is orientable.

A graph $G$ embedded in a surface $S$ has \DEF{face-width}
(or \DEF{representativity}) at least $\theta$ if every closed curve
in $S$, which
intersects $G$ in fewer than $\theta$ vertices and does not cross edges is contractible
(null-homotopic) in $S$. Alternatively, the \DEF{face-width}
of $G$ is equal to the minimum number of
facial walks whose union contains a cycle which is non-contractible
in $S$. It is known that if face-width of $G$ is at least two, then every face bounds a disk.
See \cite{MT} for further details.
Given a non-contractible curve in a
non-orientable surface, there are two kind of non-contractible curves; either
orientation-preserving or not orientation-preserving.

Let $W$ be an embedding of $G$ in a surface $S$ (given by means of a
rotation system and a signature). A \DEF{surface minor} is defined as follows. For each edge $e$ of $G$,
$W$ induces an embedding of both $G-e$ and $G/e$ ($/$ means contraction). The induced embedding
of $G/e$ is always in the same surface (unless $e$ is a loop), but the removal of $e$ may give
rise to a face which is not homeomorphic to a disk, in which case
the induced embedding of $G-e$ may be in another surface (of smaller genus).
A sequence of contractions and deletions of edges results in
a $W'$-embedded minor $G'$ of $G$, and we say that the $W'$-embedded minor
$G'$ is a \DEF{surface minor} of the $W$-embedded graph $G$.

Let $K$ be a subgraph of $G$. A \DEF{$K$-bridge} $B$ in $G$ (or a
\DEF{bridge} $B$ of $K$ in $G$) is a subgraph of $G$ which is either an
edge $e \in E(G) \backslash E(K)$ with both endpoints in $K$, or it
is a connected component of $G-K$ together with all edges (and their
endpoints) between the component and $K$. The vertices of $B \cap K$
are the \DEF{attachments} of $B$. A vertex of $K$ of degree
different from 2 in $K$ is called a \DEF{branch vertex} of $K$.
A \DEF{branch} of $K$ is any path in $K$ (possibly closed) whose
endpoints are branch vertices but no internal vertex on this path is
a branch vertex of $K$. Every subpath of a branch $e$ is a
\DEF{segment} of $e$.
If a $K$-bridge is attached to a single branch
$e$ of $K$, it is said to be \DEF{local}. Otherwise it is called
\DEF{stable}. The number of branch vertices of $K$ is denoted by
$\bsize(K)$.

In this paper, we use the concept ``cylinder''. Let $G$ be a graph
embedded in a surface $S$. Let $C_1, C_2$ be non-contractible curves
in the same homotopy in $S$ (which is not a sphere) that do not cross.
Then a \emph{cylinder} $W$ is an embedded subgraph of $G$ bounded by curves $C_1,C_2$. So $W$ can be considered as a plane graph with the outer face boundary $C'_1$, and with the inner face boundary face $C'_2$, such that the face $C'_i$ is obtained by
cutting along this curve $C_i$ for $i=1,2$.
Hence all the vertices of $G$ hitting the curve $C_i$ must be in the face $C'_i$ of the cylinder for $i=1,2$. Note that $C'_1$ and $C'_2$ could intersect, but since $C_1, C_2$ do not cross, we may assume that
$C'_1$ is the outer face boundary and $C'_2$ is the inner face boundary.

\drop{
\subsection{Tree-Decomposition and Tree-width}

A \DEF{tree decomposition} of a graph $G$ is a pair $(T,R)$, where
$T$ is a tree and $R$ is a family $\{R_t \mid t \in V(T)\}$ of
vertex sets $R_t\subseteq V(G)$, such that the following two
properties hold:

\begin{enumerate}
\item[(W1)] $\bigcup_{t \in V(T)} R_t = V(G)$, and every edge of $G$ has
both ends in some $R_t$.
\item[(W2)] If  $t,t',t''\in V(T)$ and $t'$ lies on the path in $T$
between $t$ and $t''$, then $R_t \cap R_{t''} \subseteq R_{t'}$.
\end{enumerate}

The \emph{width} of a tree decomposition $(T,R)$ is $\max\{|R_t|\mid
t\in V(T)\}-1$, and the \DEF{tree width} of $G$ is defined as
the minimum width taken over all tree decompositions of $G$.
The \emph{adhesion} of our decomposition $(T, R)$ for $tt' \in T$ is $R_t\cap R_{t'}$.

One of the most important results about graphs whose tree-width is
large is the existence of a large grid minor or, equivalently, a large
wall. Let us recall that an \DEF{$r$-wall} is a graph which is
isomorphic to a subdivision of the graph $W_r$ with vertex set
$V(W_r) = \{ (i,j) \mid 1\le i \le r,\ 1\le j \le r \}$ in which two
vertices $(i,j)$ and $(i',j')$ are adjacent if and only if one of
the following possibilities holds:
\begin{itemize}
\item[(1)] $i' = i$ and $j' \in \{j-1,j+1\}$.
\item[(2)] $j' = j$ and $i' = i + (-1)^{i+j}$.
\end{itemize}

We can also define an $(a \times b)$-wall in a natural way, so that
an $r$-wall is the same as an $(r\times r)$-wall. It is easy to
see that if $G$ has an $(a \times b)$-wall, then it has an
$(\lfloor\frac{1}{2}a\rfloor \times b)$-grid minor, and conversely,
if $G$ has an $(a \times b)$-grid minor, then it has an $(a \times
b)$-wall. Let us recall that the $(a \times b)$-grid is the
Cartesian product of paths $P_a\times P_b$.

%

The main result in \cite{RS5} says the following (see also
\cite{rein,yusuke,reed1,RST2}).

\begin{theorem}\label{gridgeneral}
For every positive integer $r$, there exists a constant $f(r)$ such
that if\/ a graph $G$ is of tree-width at least $f(r)$, then $G$
contains an $r$-wall.
\end{theorem}

Very recently, Chekuri and Chuzhoy \cite{ChekuriChuzhoy} gives a polynomial upper bound
for $f(r)$.
The best known lower
bound on $f(r)$ is of order $\Theta (r^2 \log r)$, see~\cite{RST2}.

Let $H$ be an $r$-wall in $G$. If $G$ is embedded in a surface $S$,
then we say that the wall $H$ is \DEF{flat} if the outer cycle of $H$
bounds a disk in $S$ and $H$ is contained in this disk.
The following theorem follows from Demaine et al. (Theorem 4.3) \cite{demaine1}, together with Thomassen \cite{carsten}
(see Proposition 7.3.1 in \cite{MT}).

\begin{theorem}
\label{grid1} \showlabel{grid1}
Suppose $G$ is embedded in a surface with Euler genus $g$. For
any $l$, if
$G$ is of tree-width at least $400lg^{3/2}$, then it contains a flat $l$-wall.
If there is no flat
$l$-wall in $G$, then tree-width of $G$ is less than
$400lg^{3/2}$.
\end{theorem}
}

\subsection{2-connected components, Triconnected components and decomposition}

In this paper, we want to work on 3-connected graphs. The
importance of 3-connectivity stems from the fact that if a planar graph is 3-connected (triconnected), then it has a unique embedding on a sphere. Hence
an efficient algorithm that decomposes a graph into triconnected components is sometimes
useful as a subroutine in problems like planarity testing and planar graph isomorphism.

We now define this decomposition formally.

A \emph{biconnected component tree decomposition} of a given graph $G$ consists of a tree-decomposition $(T,R)$
such that for every $tt' \in E(T)$, $R_t \cap R'_t$ consists of a single vertex and for every $t \in T$, $R_t$ consists of a 2-connected graph (i.e., block).
$T$ is called a \emph{biconnected component tree}.

Let $G$ be a 2-connected graph. A \emph{triconnected component tree decomposition} of $G$ consist of
a tree-decomposition $(T,R)$
such that for every $tt' \in E(T)$, $R_t \cap R'_t$ consists of exactly two vertices and for every $t \in T$,
the torso $R_t^*$, which is obtained from $R_t$ by adding an edge between $R_t \cap R_{t'}$ for all $tt' \in T$, consists
of a 3-connected graph (i.e., a 3-connected graph or a triangle or a $k$-bond for $k \geq 3$, i.e., two vertices with $k$ edges between them).
$T$ is called a \emph{triconnected component tree}.

The followings are known in \cite{bill}. Their algorithmic parts are from Hopcroft
and Tarjan \cite{tri}.
\begin{theorem}\label{2conunique}\showlabel{2conunique}
For any graph $G$, a biconnected component tree decomposition is unique. Moreover, there is an $O(n)$ time algorithm to construct
a biconnected component tree decomposition.
\end{theorem}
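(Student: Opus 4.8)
The plan is to recognize that a biconnected component tree decomposition of $G$ is precisely the classical block–cut(vertex) tree of $G$, and then to prove the two assertions separately: (i) uniqueness, by showing that the blocks and the cut vertices are canonical invariants of $G$ and that they force both the bags and the tree structure; and (ii) the $O(n)$ bound, by quoting the depth‑first search procedure of Hopcroft and Tarjan~\cite{tri,HT1}, which computes exactly this object in time linear in the size of $G$.

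First I would establish that the bags are canonical. Recall that the relation on $E(G)$ defined by ``$e=f$, or $e$ and $f$ lie on a common cycle'' is an equivalence relation (transitivity being the standard $\theta$‑graph/ear argument, see~\cite{diestel}), and that its classes are exactly the edge sets of the blocks of $G$. Hence the blocks, and therefore the set of cut vertices (those vertices lying in two or more blocks), are determined by $G$ alone. Form the bipartite graph $H$ whose nodes are the blocks together with the cut vertices, a cut vertex $v$ being joined to a block $B$ whenever $v\in B$. Then $H$ is a forest with one tree per component of $G$: it is connected on each component because $G$ is, and it is acyclic because a cycle in $H$ would exhibit two distinct blocks sharing two cut vertices, which are then joined by two internally disjoint paths and hence lie on a common cycle — contradicting that they are different equivalence classes. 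Assigning to each block‑node (resp.\ cut‑vertex node) its block (resp.\ its single vertex) turns $H$ into a tree decomposition with adhesion one whose bags are blocks, so it is a biconnected component tree decomposition. For the converse I would use the standard fact that in any tree decomposition of adhesion at most one every $2$‑connected subgraph lies inside a single bag (a shared separator of size $\le 1$ would be a cut vertex of that subgraph); thus each bag, being $2$‑connected, sits inside a unique block, and each block, being $2$‑connected, is not fragmented across bags, so the bags carrying edges are exactly the blocks and the edges of $T$ and the placement of cut‑vertex bags are then forced by the connectedness axiom of tree decompositions. (With only block‑nodes recorded, the sole residual freedom — the order of bags along a path formed by three or more blocks through one common cut vertex — disappears once one keeps the cut‑vertex nodes, i.e.\ the block–cut tree $H$ itself, which is the canonical object; this is the convention of~\cite{bill}.)

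Next I would treat the algorithmic part by citing the linear‑time DFS of Hopcroft and Tarjan~\cite{tri,HT1}: a single depth‑first traversal computes, for each vertex $u$, the value $\mathrm{low}(u)$ (the least DFS‑number reachable from $u$'s subtree using at most one back edge); a non‑root vertex $u$ with parent $p$ is a cut vertex iff $\mathrm{low}(u)\ge \mathrm{dfsnum}(p)$, and the root is a cut vertex iff it has at least two DFS‑children; and the edges of each block are recovered by popping them from a stack maintained during the search. This runs in time $O(|V(G)|+|E(G)|)$, and $H$ is read off within the same bound; in the setting of this paper $|E(G)|=O(|V(G)|)$ (and $n$ measures the input size), which gives $O(n)$.

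The hard part is not the algorithm, which is a direct appeal to Hopcroft–Tarjan, but the uniqueness argument — specifically pinning down that an \emph{arbitrary} biconnected component tree decomposition must coincide with the block–cut tree. The two points to get right there are that no block can be split among several bags meeting pairwise in at most one vertex, and that once the bags are identified with the blocks (and cut vertices), the tree's edges are determined; both follow from the adhesion‑one hypothesis together with the subtree/connectedness axioms of tree decompositions, but they are where the genuine content of the ``uniqueness'' claim resides.
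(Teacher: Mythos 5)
Your proposal is correct, and it does more than the paper does: the paper offers no proof of this statement at all, simply quoting it as known, with the decomposition theory attributed to Cunningham and Edmonds \cite{bill} and the algorithmic part to Hopcroft and Tarjan \cite{tri}. On the algorithmic side you make the same appeal to the Hopcroft--Tarjan DFS/lowpoint computation, so there is no real difference there (and your remark that $|E(G)|=O(n)$ in the bounded-genus setting closes the gap between $O(|V|+|E|)$ and $O(n)$). On the uniqueness side you argue from first principles --- blocks as the classes of the ``lie on a common cycle'' relation on edges, acyclicity of the block--cut graph, and the adhesion-one fact that a $2$-connected subgraph cannot be split across bags --- which is a self-contained, standard route that in effect reproves the relevant part of \cite{bill} rather than citing it. One point you handle that the paper glosses over is worth emphasizing: under the paper's literal definition (every bag is a block, adjacent bags meet in a single vertex, no cut-vertex bags), the tree $T$ is \emph{not} unique whenever three or more blocks share a common cut vertex, since any ordering of those bags along a path yields a valid decomposition; what is unique is the set of bags (the blocks) together with the block--cut tree, which is the canonical object actually used in Step~1 of the paper's algorithm. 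Your parenthetical observation identifying the block--cut tree (with cut-vertex nodes retained) as the canonical object is exactly the right way to make the uniqueness claim precise.
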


\begin{theorem}\label{3conunique}\showlabel{3conunique}
For any 2-connected graph $G$, a triconnected component tree decomposition is unique. Moreover, there is an $O(n)$ time algorithm to construct
a triconnected component tree decomposition.
\end{theorem}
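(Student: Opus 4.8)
The plan is to obtain Theorem~\ref{3conunique} from the classical theory of $3$-connectivity: the combinatorial uniqueness is Tutte's decomposition of a $2$-connected graph into its triconnected components, and the linear running time is the Hopcroft--Tarjan algorithm \cite{tri}; reference \cite{bill} already states both in the form used here, so the task is to reconstruct that argument. First I would recall the notion of a \emph{split pair} of a $2$-connected graph $H$ --- a pair $\{u,v\}$ that is either joined by an edge of $H$ or is a $2$-separator of $H$ --- and the operation of \emph{splitting} $H$ at $\{u,v\}$: replace $H$ by its \emph{split components}, one per $\{u,v\}$-bridge, each obtained from that bridge by adding a virtual edge joining $u$ and $v$. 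Splitting $G$ repeatedly at split pairs, one reaches a family of pieces admitting no further nontrivial split; each such piece is $3$-connected, a cycle, or a bond (two vertices joined by $k\ge 2$ parallel edges). One then performs the canonical clean-up: merge every maximal collection of pieces that are pairwise identified along a single common virtual edge and are all bonds into one bond, and likewise merge a maximal series chain of cycles identified consecutively along virtual edges into one cycle. After this, every remaining piece is $3$-connected, a $k$-bond with $k\ge 3$, or a cycle (a triangle being the smallest case); these are the torsos $R_t^{*}$, and the identification pattern among them is the tree $T$, with $R_t\cap R_{t'}$ the common vertex pair.

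For uniqueness, the point is that ``a triconnected component tree decomposition is unique'' means the outcome above is independent of the order in which splits are performed. I would prove this via the standard confluence argument: given two split pairs $\{u,v\}$ and $\{x,y\}$ of a piece, either they are \emph{non-crossing} --- neither separates the other --- in which case splitting at $\{u,v\}$ and then at $\{x,y\}$ yields the same multiset of sub-pieces as the reverse order; or they \emph{cross}, and then the four vertices $u,v,x,y$ with their ``quadrant'' bridges form a configuration whose iterated splits followed by the bond/series merges reconstruct the same torsos and the same gluing tree whichever separator is used first. A cleaner, order-free route, which I would actually write, is to define $e\sim f$ on $E(G)$ when no split pair of $G$ puts $e$ and $f$ in different split components, to check that each equivalence class together with the virtual edges at its bounding split pairs is exactly one torso of the above type, and to observe that the bipartite incidence between these classes and the split pairs bounding them is a tree; since classes, split pairs, and incidences are defined with no reference to any order, canonicity is immediate, and it remains only to verify that this canonical object coincides with the result of the splitting-plus-merging procedure.

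For the algorithmic part I would invoke Hopcroft and Tarjan \cite{tri}: using depth-first search and the structure of palm trees, all split pairs of a $2$-connected graph and the associated split components are found in $O(n+m)$ time, and the subsequent bond and series merges are a single linear-time pass contracting the auxiliary tree; this produces $T$, the bags $R_t$, and the torsos $R_t^{*}$ in $O(n+m)$ total time. In the setting of this paper the input graphs have $m=O(n)$ (they are of bounded Euler genus, and in any case one may first replace each class of parallel edges by a bond in a linear-time preprocessing step), so the bound is $O(n)$ as stated.

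The delicate part is not the algorithm but the uniqueness proof: without the merging steps the decomposition is genuinely non-canonical (a cycle can be cut into several smaller cycles, a bond into several thinner bonds, in many inequivalent ways), so ``triconnected component'' must be \emph{defined} with the merges built in, and the crux is to show that exactly those merges are forced --- equivalently, that the crossing/non-crossing dichotomy for split pairs has no further cases and that the edge-equivalence classes are well defined. Reconciling the two viewpoints (splitting-with-merging versus edge-equivalence classes) and checking that the degenerate torsos are precisely bonds and cycles is then bookkeeping, with no new ideas beyond Tutte's.
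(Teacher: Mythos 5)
Your proposal is correct and matches the paper's treatment: the paper gives no proof of Theorem \ref{3conunique} at all, simply citing Cunningham--Edmonds \cite{bill} for the uniqueness of the decomposition and Hopcroft--Tarjan \cite{tri} for the linear-time construction, which are exactly the sources and the standard splitting/merging (Tutte) argument you reconstruct. Your sketch of the confluence/edge-equivalence uniqueness argument and the $O(n+m)$ DFS-based algorithm is the accepted content behind those citations, so nothing further is needed.
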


\section{Overview of our algorithm}
\label{overview}

Theorem \ref{thm:main1} can be shown by two steps. The first step is our structural theorems. This is the most technical part.
So let us give a sketch of our proof in the next subsection.
The second step is concerning ``map isomorphism'' which will
be detained in the following subsection.

\subsection{Structural results and their proof techniques}

Our main structural result is concerning a 3-connected graph $G$ that can be embedded in the surface $S$ of Euler genus $g$, but
cannot be embedded in a surface $S'$ of Euler genus at most $g-1$. Let us point that the standard arguments allow us to reduce to 3-connected graphs
in linear time (see Section \ref{secmain} for more details). Thus the main arguments in this paper deal with 3-connected graphs.

Below, if we say an embedding of $G$ then it means an embedding of $G$ in $S$ of Euler genus $g$.

If $G$ has a polyhedral embedding, then apply Theorem \ref{thm:main2} to obtain all polyhedral embeddings in $O(n)$ time (there are at most $f(g)$
different polyhedral embeddings, where $f(g)$ comes from Lemma \ref{finitely}). This means that we can test graph isomorphism of
two graphs $G_1, G_2$ if both $G_1$ and $G_2$
have polyhedral embeddings, because we have all different polyhedral embeddings of $G_1$ and $G_2$, respectively (Indeed, this is exactly the main result in \cite{kmstoc08}. Essentially, we can reduce the graph isomorphism problem to the ``map isomorphism problem'', because one map of a polyhedral embedding of $G_1$ is map isomorphic to some map of a polyhedral embedding of $G_2$, if $G_1$ and $G_2$ are isomorphic. See Theorem \ref{thm:main3}). So the difficult case is when
$G$ does not have a polyhedral embedding. So let us consider the following case:

\medskip

{\bf Case A.} $G$ does not have any polyhedral embedding, but has an embedding of face-width exactly two.

\medskip

One difference between Case A and the polyhedral embedding case is that there may be exponentially many embeddings.
Figure \ref{figa} illustrates an example on a torus that has exponentially many embeddings.
To see this, degree four vertices could be embedded in two ways.
\begin{figure*}
\centering
\includegraphics[height=6cm]{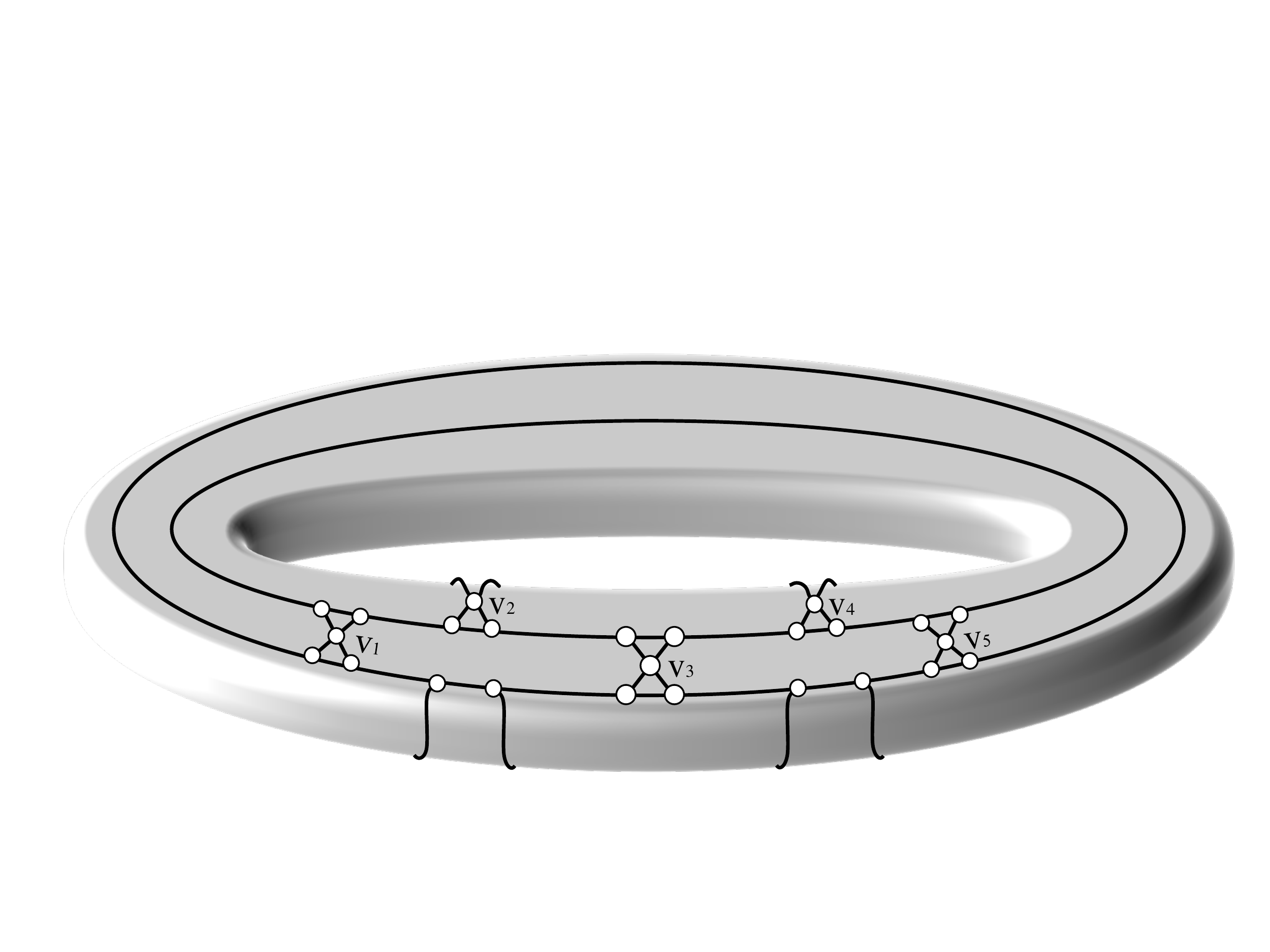}
\caption{Exponentially many embeddings on torus}
  \label{figa}
\end{figure*}
So the embedding is more flexible and the flexibility of ``bridges'' is the main issue.
But we can see from Figure \ref{figa} that if we cut along some two non-contractible curves of order two, then we obtain a ''thin'' cylinder that contains all flexible bridges.

To be more precise, let us look at Figure \ref{fige}. What we want is to take a curve $C_1$ hitting only $c,d$ and a curve hitting $C_2$ hitting only $e,f$.
Then we obtain the graph bounded by $C_1$ and $C_2$, which is the ''cylinder'' we want to take and which contains all flexible bridges.
Then we want to recurse our algorithm to
the rest of the graph. Note that all the non-contractible curves that are homotopic to $C_1$ (and $C_2$) and that hits exactly two vertices are in
this ``thin'' cylinder. Moreover the rest of the graph can be embedded in a surface of smaller Euler genus. 

This figure motivates us what to do.
Specifically, concerning the structural result for Case A, we try to find, in $O(n)$ time,
a constant-sized collection of pairs of subgraphs that contain
all non-contractible curves that hit exactly two vertices in some embedding of face-width two, as follows:
\begin{figure*}
\centering
\includegraphics[height=4cm]{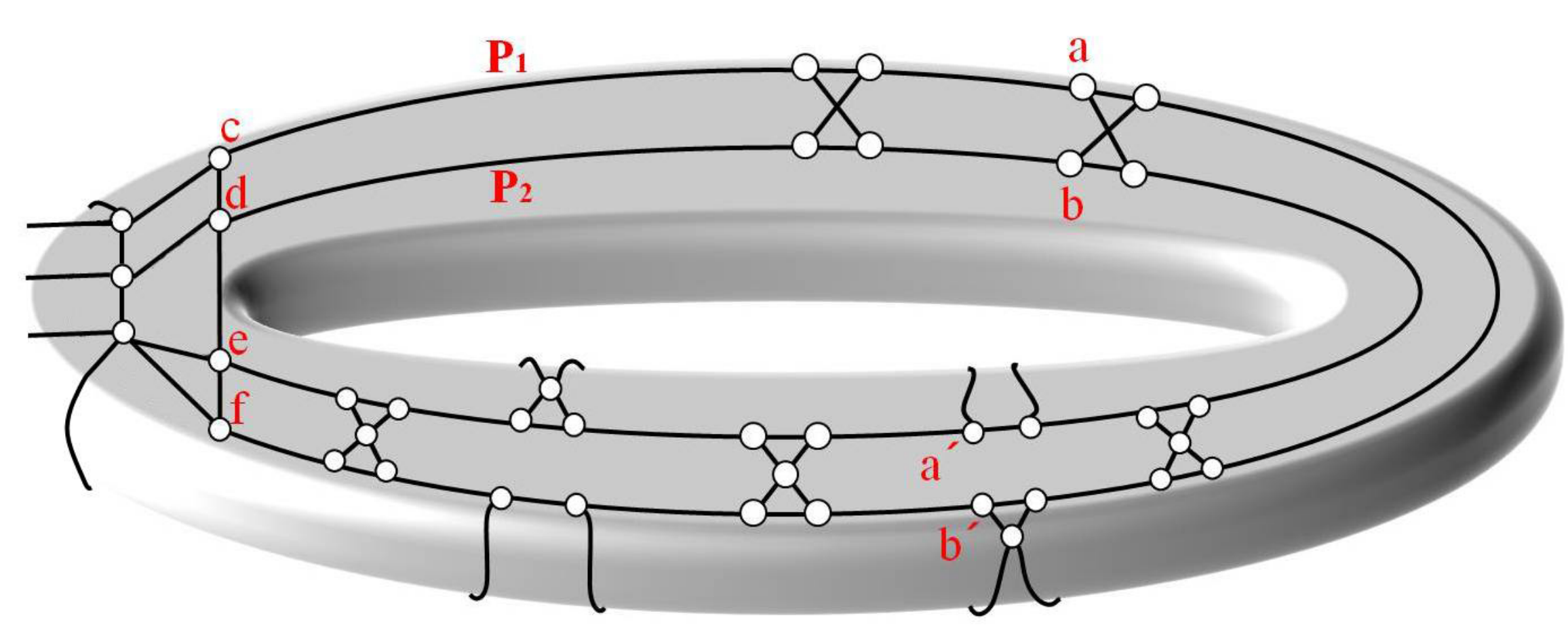}
\caption{Finding a constant-sized collection of pairs of subgraphs of $G$ that contain
all non-contractible curves that hit exactly two vertices in some embedding of face-width two.}
  \label{fige}
\end{figure*}

{\bf Structural Result:}
There is a $q'(g)$ for some function $q'$ of $g$ such that
\begin{enumerate}
\item
there are $q' \leq q'(g)$ pairs $(G'_1,L'_1),\dots,(G'_{q'},L'_{q'}) \in \mathcal{Q}$,
\item
pairs $(G'_i, L'_i)$ are canonical in a sense that graph isomorphism would preserve these pairs (see more details at the end of Case A for the meaning of this item),
\item
for all $i$, $G=G'_i \cup L'_i$ and $|G'_i \cap L'_i| = 4$,
\item
for all $i$, $G'_i$ can be embedded in a surface of Euler genus at most $g-1$,
\item
for all $i$, $L'_i$ is a cylinder with the outer face $F_1$ and
the inner face $F_2$ with the following property: there
 is a non-contractible curve $C_j$ that hits exactly two vertices $x_j,y_j$ in some embedding of $G$ of face-width two for $j=1,2$, and $x_1, y_1$ are contained in $F_1$ and $x_2, y_2$ are contained in $F_2$  (so $L'_i$ attaches to the rest of the graph $G'_i$ at vertices $x_1, x_2, y_1, y_2$),
\item
an embedding of $G$ of face-width two in $S$ can be obtained from
some embedding of $G'_i$ in  a surface of Euler genus at most $g-1$ and
the embedding of the cylinder $L'_i$ by identifying the respective copies of $x_1,x_2, y_1$ and $y_2$ in $G'$ and $L'$ (so $G'_i$ also contains all the vertices
$x_1,x_2,y_1,y_2$ and they are on the border of $G'_i$ and $L'_i$, respectively), and
\item
for any non-contractible curve that hits exactly two vertices $x, y$ in some embedding of $G$ of face-width two, both $x$ and $y$ are contained in $L'_i$ for some $i$.
\end{enumerate}

In Figure \ref{fige}, the cylinder bounded by the non-contractible curve $C_1$ hitting only $c,d$ and the non-contractible curve $C_2$ hitting only $e,f$,
is $L'_i$, and the rest graph obtained by splitting $c,d,e,f$ is $G'_i$.

\medskip

\paragraph{Remark for the non orientation-preserving case.}
We need to clarify difference between the orientation-preserving case and the non orientation-preserving case.
In 1-7 above, we only deal with the orientation-preserving curve. On the other hand, when we deal with the non orientation-preserving curve,
there is one difference. Namely in 5, the definition of the cylinder
is different. Figure \ref{figb} tells us what happens to the non-orientation-preserving curve. We first split $a$ and $b$ into $a,a'$ and $b,b'$ respectively.
Then we flip the component containing $a'$ and $b'$. This is what happens in Figure \ref{figb}.

Now suppose there is a non orientation-preserving curve $C$ of order exactly two. Then it is straightforward to see that there is a face $W$ that
any non orientation-preserving curve of order exactly two that is homotopic to $C$ must hit two vertices of $W$ (see Figure \ref{figc}).
Following Figure \ref{figc}, we cut along the curve through $a$ and $b$, and then split $a$ and $b$ into $a,a'$ and $b,b'$ respectively, and finally we flip the component containing $a'$ and $b'$, as in Figure \ref{figb}. Then we obtain the situation as in Figure \ref{figd}. Namely, we have a new face $W'$
which is obtained from $W$ by taking the part between $a$ and $b$, and the flipped part between $b$ and $a$ (i.e., the upper part between $b'$ and $a'$ in Figure \ref{figd}). Then all non-contractible curves of order exactly two that are homotopic to $C$ must hit two vertices of the resulting face $W'$, with one vertex in the upper part between $b'$ and $a'$, and the other vertex in the lower part between $a$ and $b$.

Intuitively,
what we need for 5 is to cut along $e=x_1, d=y_1$, and to cut along $c=x_2, f=y_2$ in Figures \ref{figc} and \ref{figd}, with the condition that there is no
non-contractible curve of order exactly two that hits two vertices of the face $W'$, with one vertex in the upper part between $e$ and $c$ and the other vertex in the lower part between $f$ and $d$. Then what we obtain is the following:

\begin{quote}[5']
$L'_i$ is a planar graph with the outer face boundary $W$ with four vertices $x_1, y_2, x_2,y_1$ appearing in this order listed when we walk along $W$, with
the following property: there
 is a non-contractible curve $C_j$ that hits exactly two vertices $x_j,y_j$ in some embedding of $G$ of face-width two for $j=1,2$.
 See Figure \ref{figh}, which will be explained later.
\end{quote}

But all other points (1-4, 6,7) are the exactly same.

\begin{figure*}
\centering
\includegraphics[height=6cm]{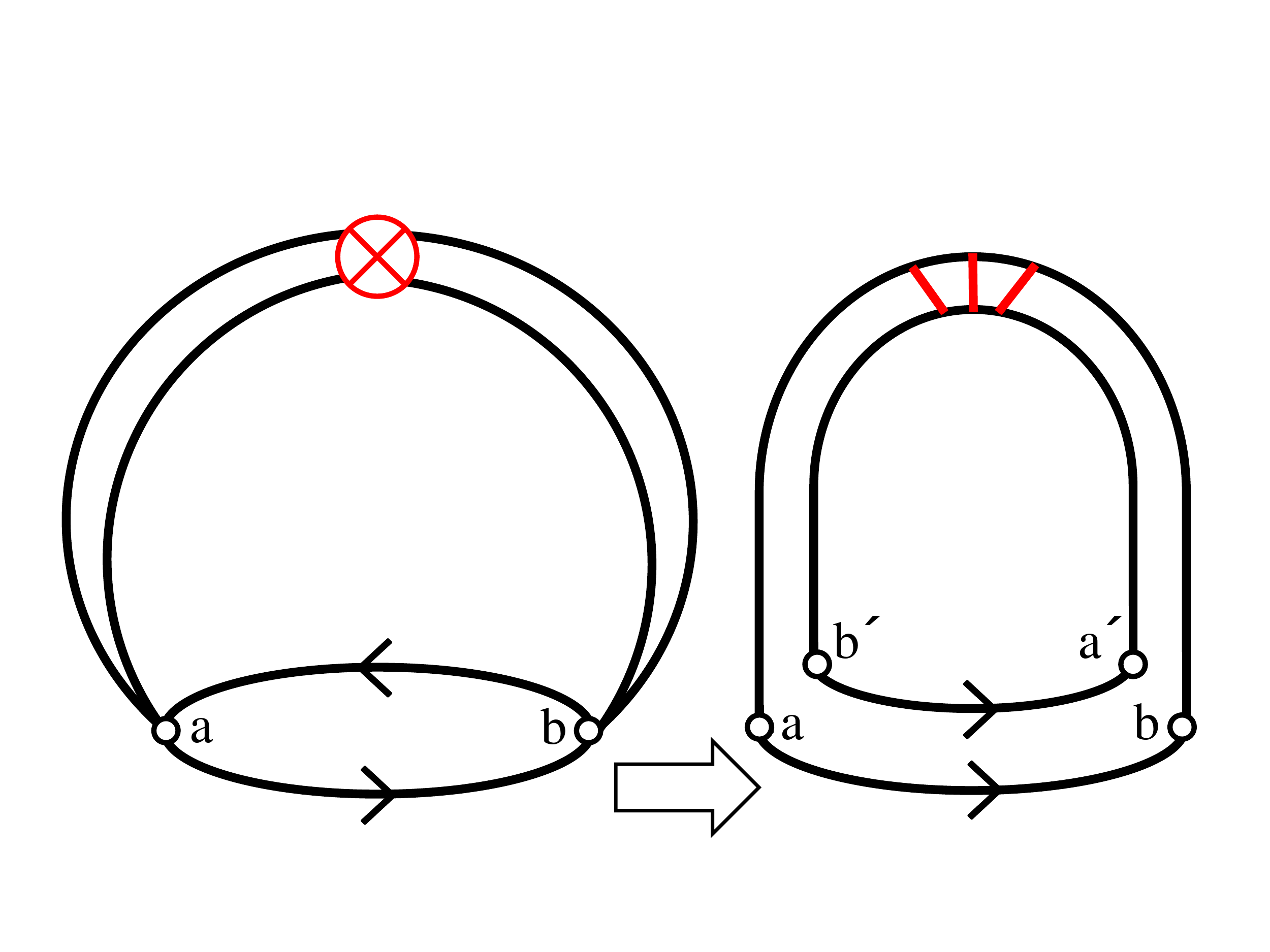}
\caption{Cutting along a non-orientation-preserving curve}
  \label{figb}
\end{figure*}

\begin{figure*}
\centering
\includegraphics[height=6cm]{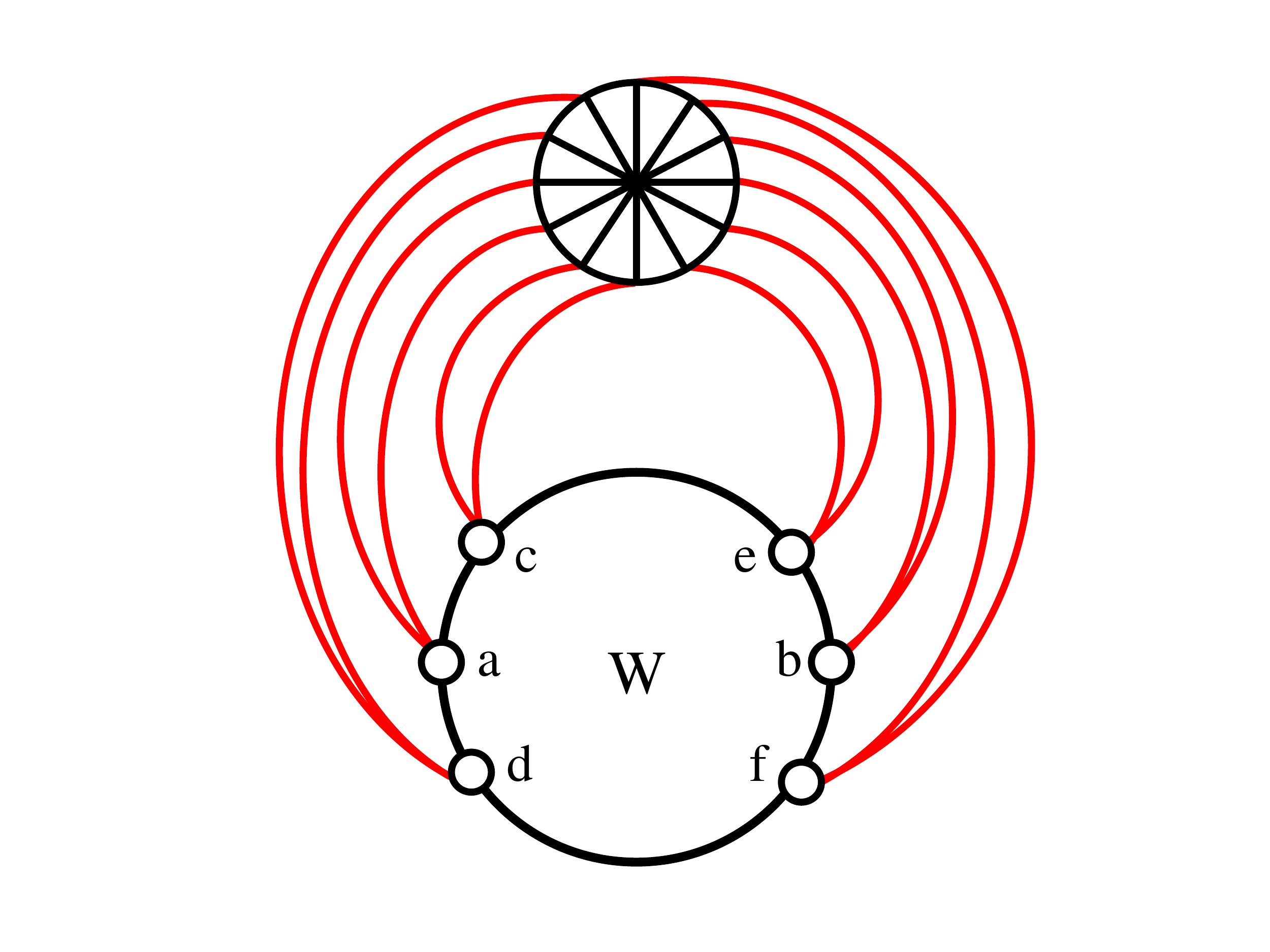}
\caption{Non orientation preserving curves of order two}
  \label{figc}
\end{figure*}

\begin{figure*}
\centering
\includegraphics[height=8cm]{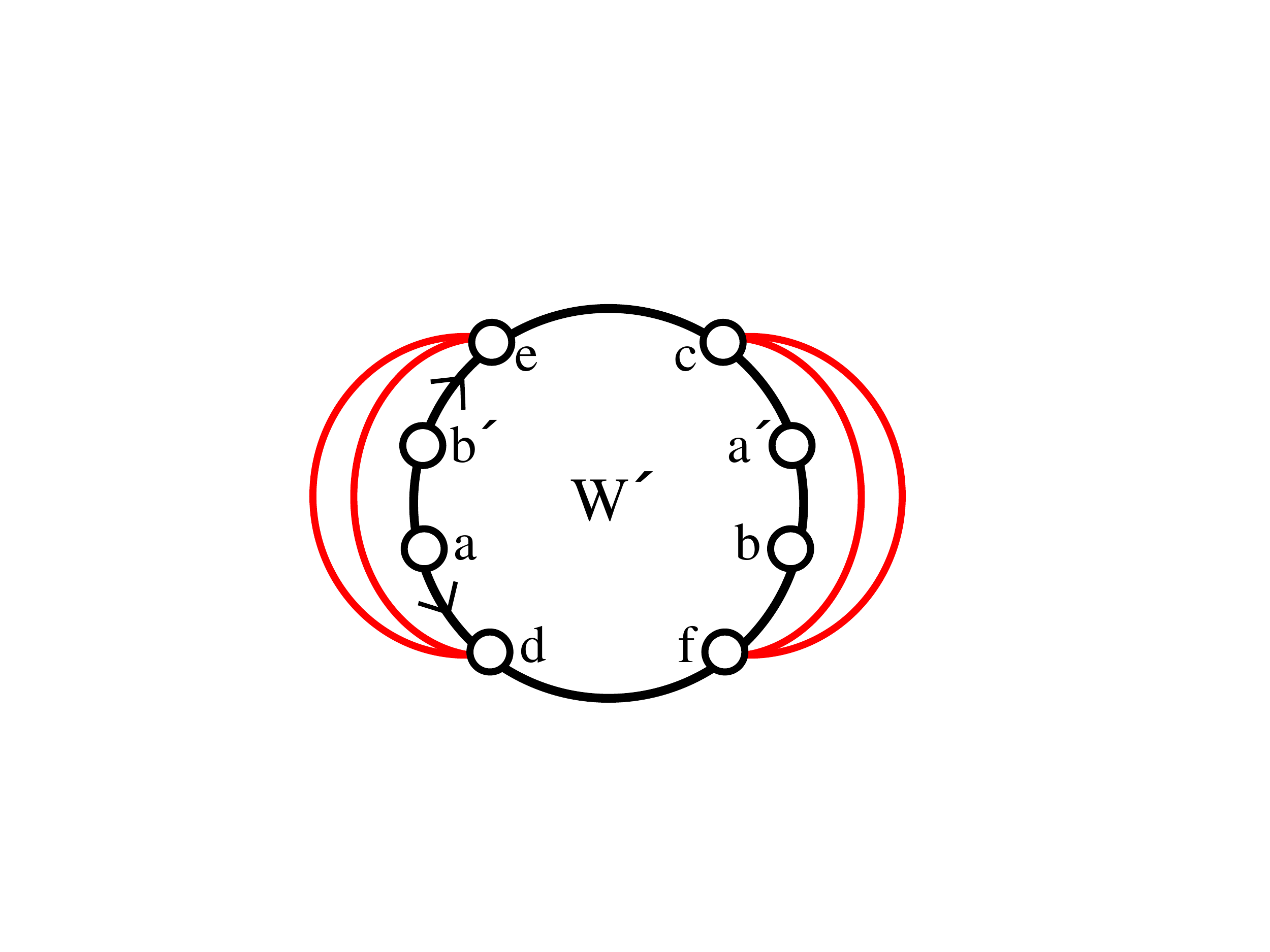}
\caption{Cutting along a non orientation-preserving curve. We cut along the curve through $a$ and $b$, and flip the component
as in Figure \ref{figb}.}
  \label{figd}
\end{figure*}

\medskip

{\bf Remark 1.} Let us observe that we only care about non-contractible curves of length two that are NOT separating, because the graph $G$ is 3-connected.
Moreover, if $G$ is a cylinder with the boundaries $C_1$ and
$C_2$ (so $G$ is obtained by gluing $C_1$ and $C_2$), then we have to do something else because in this case $G'_i$ could be empty but
$G$ itself is $L'_i$.
This is exactly the case when the surface $S$ is torus or the Kleinbottle, and moreover, cutting along a non-contractible curve of length two
reduces the Euler genus by two (thus when $S$ is the Kleinbottle, $H$ neither is surface-separating nor hits only one crosscap).
This ``degenerated'' case has to be dealt with separately, which is done in Theorem \ref{algotorus}.\par

\bigskip

The proof for this structural result consists of the following two step solutions:
\begin{enumerate}
\item[(1)]
Find a set of ``subgraphs(skeletons)'' ${\bf F'}$ in $G$ that can be extended to all the face-width two embeddings of $G$, in $O(n)$ time. Moreover, each subgraph $F' \in {\bf F'}$ has bounded number of branch vertices (that only depends on Euler genus $g$). The important property of ${\bf F'}$
is that each face-two embedding of $G$ can be obtained by
extending some member in ${\bf F'}$ (see below for more details).

Specifically
subgraphs(skeletons) ${\bf F'}$, together with some choice of ``bridge'' embeddings, give rise to all the face-width two embeddings of $G$.
\item[(2)]
Given a set of the subgraphs ${\bf F'}$,
we want to (in a canonical way) produce pairs $(G'_i,L'_i)$ (as above) that cover all the vertices that are contained in some non-contractible
curve of order two in some embedding of $G$.
\end{enumerate}

Let us give more details for (1) first.
The idea is that any embedding of $G$ in the surface $S$ of Euler genus $g$ can be obtained as the following two-stage process.
\begin{enumerate}
\item
We choose the subgraph $F'$ together with its embedding,
in a set of (embedded) subgraphs ${\bf F'}$ of $G$.
So $F'$ can be thought of a ``skeleton'' for the embedding of $G$.
\item
For every bridge of $F'$ in $G$, we choose a face of the embedding of $F'$ where to draw this bridge.
\end{enumerate}

Below, we present the properties of the subgraph $F'$ and of the set ${\bf F'}$, which we find in $O(n)$ time\footnote{Finding the set ${\bf F'}$ in $O(n)$ is also one of the most technical part. The proof was given in \cite{kmstoc08}, but for the completeness, we give a proof
in Section \ref{secmin} and in Section \ref{appendix1}.}, and are detailed in Lemma~\ref{expansion}.

\begin{enumerate}
\item
For each $F' \in {\bf F'}$,
$F'$ is in one of minimal (with respect to edge deletion and contraction) graphs of face-width two in $S$ of Euler genus $g$.
\item
$|{\bf F'}| \leq l(g)$ for some function $l$ of $g$.
\item
For each $F' \in {\bf F'}$, $\bsize(F') \leq l'(g)$ for some function $l'$ of $g$.
\item
For every embedding of $G$ of face-width two in $S$, there is a subgraph $F'$ (with
its corresponding embedding $II$ of face-width two) in ${\bf F'}$ such that the embedding $II$ of
$F'$ can be extended to this embedding of $G$.

Hence the embedding of $G$
can be seen as the embedding of $F'$, with some bridges embedded into   faces of the embedding of $F'$.
\item
Moreover, we can assume that every aforementioned bridge of $F'$ in $G$ is stable.
\end{enumerate}
More details concerning (1) are described in Section~\ref{secmin}.

\medskip

Let us move to (2).
We now try to (in a canonical way) produce pairs $(G'_i,L'_i)$ that cover all vertices $Q$ that are contained in some non-contractible
curve that hits exactly two vertices in some embedding of $G$ that extends the embedding of the skeleton $F' \in {\bf F'}$.

Here is a crucial observation.
\begin{quote}
Since the embedding of $F'$ is already of face-width two, all such vertices $Q$ are, in fact, in $F'$ (i.e., any non-contractible curve
of order two has to hit two vertices of $F'$). See Sections \ref{secmin} and \ref{sectwo} for more details.
\end{quote}

Here, we need to bound the number of homotopy types.
In Section~\ref{sechom} (see Lemma~\ref{homology}), it is shown that
there are at most $f(g)$ homotopy classes to consider. More specifically, we show that
curves from at most $f(g)$ homotopy classes may hit exactly two vertices of $F'$.
So it remains to produce pairs $(G'_i,L'_i)$ separately for one fixed graph $F' \in {\bf F'}$ and for one fixed homotopy class, which hereafter we assume.

\medskip

The rest of arguments in (2) are detailed in Theorems~\ref{algotwo} and~\ref{algotorus}. Here we give a sketch of proof of Theorem~\ref{algotwo}, which is
one of the most technical parts in this paper. For simplicity, let us first focus on an orientation-preserving curve.
Roughly, the argument goes as follows.

\medskip

{\bf Phase 1.} We try to find one such a non-contractible curve $C'$ (for some embedding $II'$ of $G$ that extends the embedding $II$ of $F'$). 
This is actually the most technical part of the proof in Theorem \ref{algotwo}, see Claim \ref{clone}.
Indeed, in the proof of Theorem \ref{algotwo}, we give a lengthly and involved proof to find such a non-contractible curve $C'$
in linear time\footnote{If we
are satisfied with an $O(n^3)$ algorithm for Theorem \ref{thm:main1}, then Phase 1 is much easier; we just guess these two vertices $x',y'$, and then add
two ``dummy vertices $z_1,z_2$ to both $G$ and $F$,
such that both $z_1$ and $z_2$ are only adjacent to both $x'$ and $y'$. Let $G'$ be the resulting graph of $G$ and $F'$ be the resulting graph of $F$.
Then we just need to figure out whether or not $G'$ has a face-two embedding that extends the embedding of $F'$. This can be done
in linear time. See more details in Remark 3 right after Theorem \ref{algotwo}.}. Let $x',y'$ be the vertices of $F'$ that this curve hits.


\medskip

{\bf Phase 2.} Once we find such two vertices $x',y'$ from Phase 1, we cut the graph along this curve (i.e., split $x'$ and $y'$ into two copies $x'_1, x'_2$ and $y'_1, y'_2$, respectively, and split the incident edges into the ''left'' side and the ``right'' side, such that the ''left'' side of edges of $x'$ ($y'$, resp.) are only incident with $x'_1$ ($y'_1$, resp.)). See Figure \ref{figf}. Let us remind the reader that at this moment, we only focus on the orientation-preserving curve.
\begin{figure*}
\centering
\includegraphics[height=6cm]{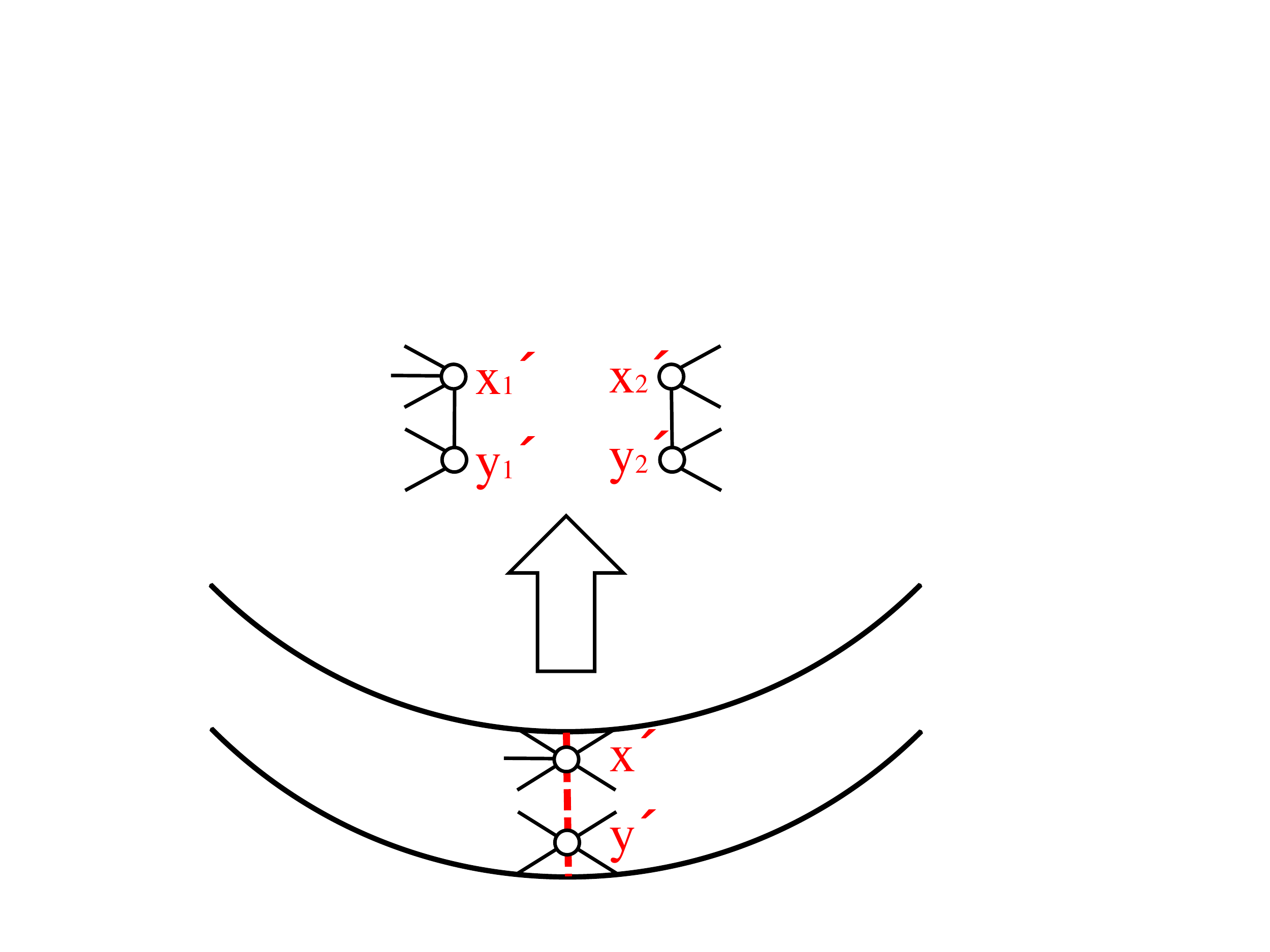}
\caption{Split $x'$ and $y'$ into two copies $x'_1, x'_2$ and $y'_1, y'_2$, respectively.}
  \label{figf}
\end{figure*}

We add the edges $x'_1y'_1$ and $x'_2y'_2$, and
let $G'$ be the modified graph of $G$ after the cutting. If there is a cutvertex in the modified graph $G'$, then it would be a witness for face-width one in the aforementioned embedding (otherwise it would be also a cutvertex in $G$, a contradiction because $G$ is 3-connected). So we can confirm that $G'$ is 2-connected. Hence there are two disjoint paths $P_1, P_2$ between $(x'_1,y'_1)$ and $(x'_2,y'_2)$. In Figure \ref{fige}, if we cut the surface with a non-contractible curve hitting only $a,b$ or $a',b'$, then we can obtain two disjoint paths obtained by $P_1, P_2$.

\medskip

{\bf Phase 3.}  We now apply Theorem~\ref{3conunique} to $G'$ to obtain a triconnected component tree decomposition $(T,R)$.
Note that the triconnected component tree decomposition is unique by Theorem~\ref{3conunique}.
Since $G$ is 3-connected, it can be shown that for any $tt' \in T$, $R_t \cap R_{t'}$ must contain one vertex in $P_1$ and the other vertex in $P_2$
(for otherwise if the separation does not involve at least one of $P_1, P_2$, then
there would be a 2-separation in $G'$ which would be also a 2-separation of $G$, a contradiction to the 3-connectivity of $G$. Note
that edges $x'_1y'_1$ and $x'_2y'_2$ are present, so both $x'_i$ and $y'_i$ are in the same component for $i=1,2$.).
This indeed implies that $T$ is a path $P$ with two endpoints $a, b$ such that
$R_a$ contains both $x'_1$ and $y'_1$ and $R_b$ contains both $x'_2$ and $y'_2$.

\medskip

{\bf Phase 4.} Take the vertex $v$ of $P$ such that
$\bigcup_{v \in P'} R_t$ induces a cylinder $T_1$ with $x'_1, y'_1$ in the outer face boundary $C_1$ and with $v_1,v_2$ in the inner face boundary $C_2$,
subject to that $P'$ is as long as possible, where $P'$ is a subpath of $P$ between $a$ and $v$, and $v_1, v_2 \in R_v \cap R_{v''}$ with $vv'' \in E(P)$ and $v'' \not\in P'$.
Since $\bigcup_{v \in P'} R_t$ induces a cylinder $T_1$, any non-contractible curve hitting only $v_1, v_2$ is in the same homotopy class as $C'$.

Similarly, we take the vertex $v'$ of $P$ such that
$\bigcup_{v' \in P''} R_t$ induces a cylinder $T_2$ with $y'_2, x'_2$ in the outer face boundary $C'_1$ and with $v'_1,v'_2$ in the inner face boundary $C'_2$,
subject to that $P''$ is as long as possible, where $P''$ is a subpath of $P$ between $b$ and $v'$, and $v'_1, v'_2 \in R_{v''} \cap R_{v'}$ with $v'v'' \in E(P)$ and $v'' \not\in P''$.
Again since $\bigcup_{v' \in P''} R_t$ induces a cylinder $T_2$, any  non-contractible curve hitting only $v'_1, v'_2$ is in the same homotopy class as $C'$.

\medskip

Then the cylinder bounded by $C_2$ and $C'_2$ (which is union of the cylinders $T_1$ and $T_2$) yields a desired pair $(G'_i,L'_i)$, where $L_i$ is the cylinder.

\medskip

\paragraph{Correctness.}
We now show that this choice allows us to be canonical; essentially this claim follows from the following two facts:
\begin{enumerate}
\item
The facts that we took the extremal $R_v, R_{v'}$, and
\item
the triconnected component tree decomposition is unique by Theorem~\ref{3conunique}.
\end{enumerate}

It can be shown that
if we start with a different non-contractible curve in the same homotopy class (as $C'$) that hits exactly two vertices,
it is hidden somewhere in the cylinder we constructed, and we would find the same cylinder. This indeed allows us to work on the same graph that can be embedded in a surface of smaller Euler genus, because for each
homotopy class, we obtain the same graph $G_i$. Let us give more intuition from Figure \ref{fige}.
If we start with the curve hitting only $a$ and $b$, we would obtain
the cylinder bounded by curves hitting $c,d$ and $e, f$, respectively. This cylinder certainly contains the curve $C'$ hitting $a'$ and $b'$.
Even we start with the curve $C'$, we would obtain the same cylinder.

%

\medskip

{\bf Remark 2.} Let us briefly look at the non orientation-preserving case. As in Phase 1, suppose we find one such a non-contractible curve $C'$; let $x',y'$ be the vertices of $F'$ that this curve hits. As in Phase 2, we cut the graph along this curve (i.e., twisting the edges of
one part of $x',y'$ by reversing their order in the embedding allows us to split the incident edges into two parts, so that we can define $x'_1,x'_2,y'_1,y'_2$. See Figures \ref{figc} and \ref{figd}.).  In Phase 2, we obtain two disjoint paths $P_1, P_2$, but in this case, $P_1$ joins $x'_1$ and $y'_1$, and $P_2$ joins $x'_2$ and $y'_2$. See Figure \ref{figg}. The rest of the arguments is the same. Note that the ``cylinder'' we shall find corresponds to Figure \ref{figh}.
Namely, we first follow $v_1$ to $v'_2$ along the face $W$, then walk from $v'_2$ to $v'_1$ through the non-contractible curve, then walk from $v'_1$ to $v_2$ through the face $W$, and finally walk from $v_2$ to $v_1$ through the non-contractible curve.
Thus we can obtain $L'_i$ which is a planar graph with the outer face $W'$ with four vertices $v_1,v'_2,v'_1,v_2$ appearing in this order listed when we walk along $W$.
This finishes Case A.

\begin{figure*}
\centering
\includegraphics[height=8cm]{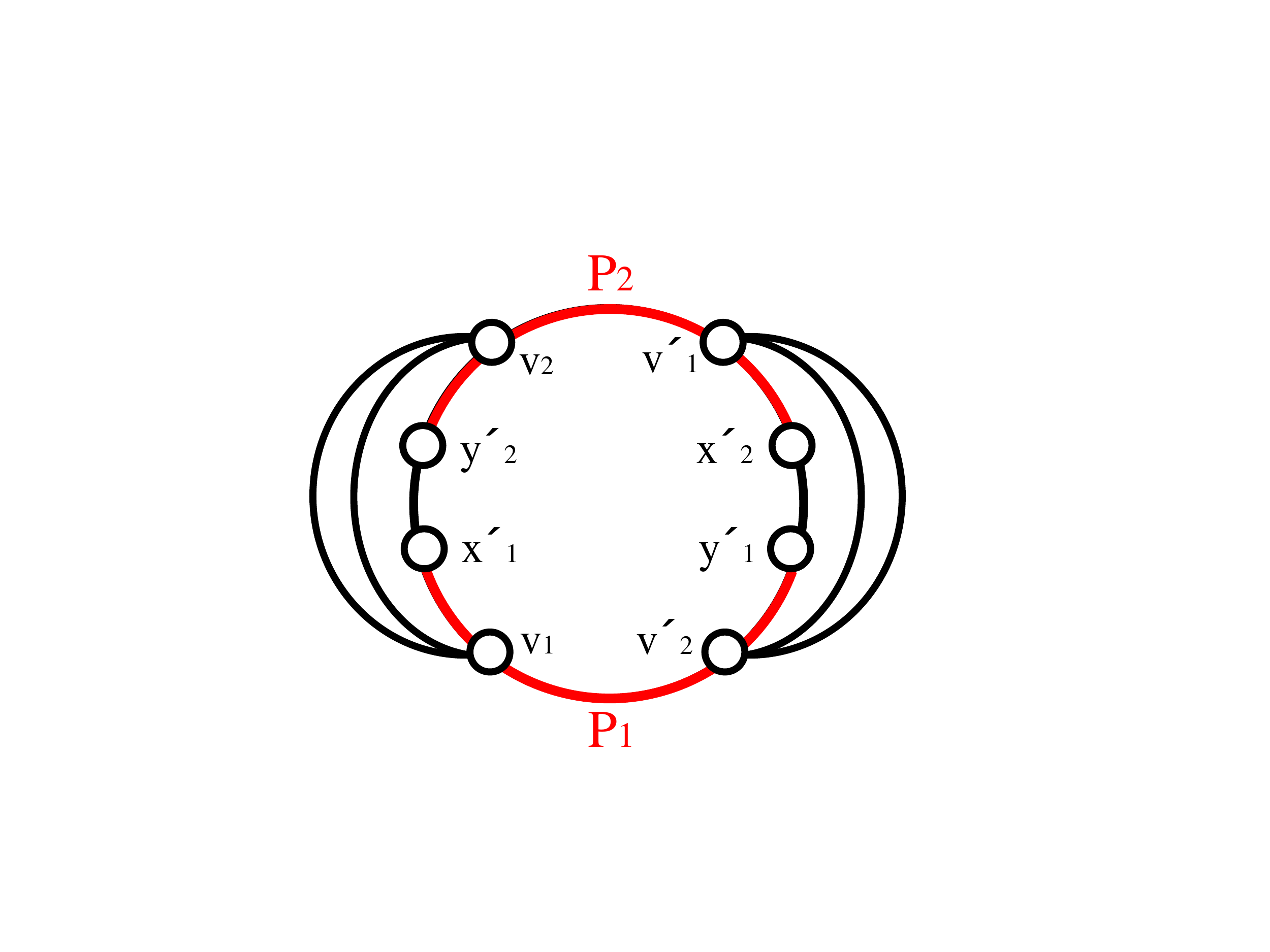}
\caption{The non orientation-preserving case}
  \label{figg}
\end{figure*}

\begin{figure*}
\centering
\includegraphics[height=8cm]{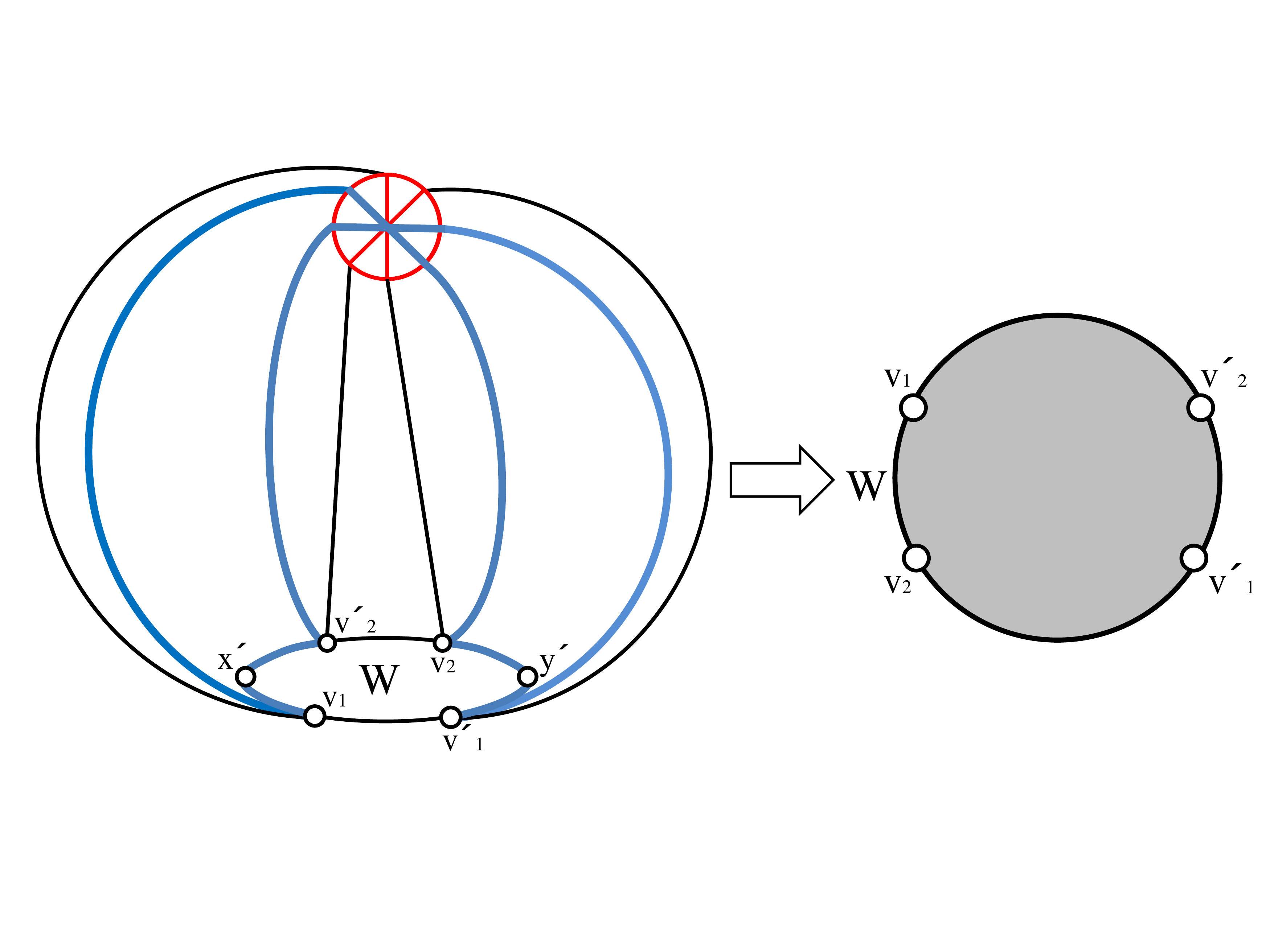}
\caption{Obtaining a cylinder for the non orientation-preserving case}
  \label{figh}
\end{figure*}

\medskip

{\bf Case B.} $G$ does not have any face-width two embedding, but has an embedding of face-width exactly one.

\medskip

In this case, we also have a ``two steps'' solution, as in Case A. As for the first step, we also obtain a set of ``skeletons'', as in Case A.
Then for the second step, we try to obtain, in $O(n)$ time,
the set of vertices $V_1$ of order $q(g)$ (for some function $q$ of $g$) in $O(n)$ time such that for each vertex $c \in V_1$, the following property holds;
\begin{quote}
there is an embedding of $G$ of face-width exactly one with a non-contractible curve $C$ hitting only $c$.
\end{quote}
Moreover, none of the vertices in $G-V_1$ satisfies this property and we are canonical (we will clarify what this means later).

\medskip

Let us look at the first step. To this end, we need the following result,  Theorem 6.1 of Mohar \cite{mohar2} (see Theorem~\ref{obst} later):
\begin{quote}
In $O(n)$, we can obtain a
subgraph $F$ of $G$ that cannot be embedded in a surface of smaller Euler
genus, but can be embedded in $S$. Moreover, $F$ is minimal with respect to this property (i.e, any deletion of
an edge or a vertex of $F$ results in a graph that is embeddable in a surface of smaller Euler genus), and $\bsize(F) \leq l''(g)$ for some function $l''$ of $g$.
\end{quote}

We find all embeddings of $F$ ${\bf F''}= \{\hat F_1,\dots,\hat F_l\}$ such that each of them can be extended to an embedding of $G$
in $O(n)$ time. This is possible since $l''(g)$ is a fixed constant only depending on $g$ (so $l$ is also a fixed constant only depending on $g$).

Moreover we also show a kind of the converse;
\begin{quote}
For each face-width one embedding of $G$ in $S$, there is an
embedding $\hat F_i$ of $F$ in ${\bf F''}$ such that
the embedding $\hat F_i$ can be
extended to the embedding of $G$.
\end{quote}
This can be shown by enumerating all the embeddings of $F$ in $S$ (this is possible since, again,
$l''(g)$ is a fixed constant only depending on $g$).
For more details, see Section~\ref{secone}.

\medskip

Let us move to the second step.
To this end, we first note that a non-contractible curve hitting exactly one vertex
must be orientation-preserving (see Lemma \ref{project}).
To find the vertex set $V_1$, here is a crucial observation.
\begin{quote}
Since the embedding of $F$ in $S$ is already of face-width one, all such vertices $V_1$ are in fact in $F$ (i.e., any non-contractible curve
of order one has to hit one vertex of $F$).  See Sections \ref{secmin} and \ref{secone} for more details.
\end{quote}

In Section~\ref{sechom} (see Lemma~\ref{homology}), it is shown that
there are at most $f(g)$ homotopy classes to consider.
More specifically, we can show that
curves from at most $f(g)$ homotopy classes may hit exactly one vertex of $F$.
So it remains to find such vertices separately for  one fixed homotopy class and for one fixed embedding of $F$.

Our important step is the following; We will show in Lemma \ref{canon1} that
if we walk along a face $W$ in the embedding of $F$, there are no four branches $R_1, R_2, R_3, R_4$ appearing in this order listed when we walk along $W$, such that $R_1=R_3$ and $R_2=R_4$, i.e, $R_1$ appears
twice in $W$ and $R_2$ appears twice in $W$ too.
This implies that
\begin{quote}
we are {\em canonical} in the following sense; suppose there is a non-contractible curve $C'$ that hits exactly one vertex $v$ in a branch $P$ of $F$
in an embedding of $G$ that extends the embedding of $F$. Then $C'$ uniquely splits the incidents edges of $v$ into the ``left'' side and the ``right side'' (note that the curve $C$ is orientation-preserving).
\end{quote}

This allows us to show the following, which will be proved in Lemma~\ref{faceone}.
\begin{quote}
The stable bridges, together 3-connectivity of $G$, give $O(1)$ candidates for an intersection point of a non-contractible curve hitting exactly one vertex  on every face of the embedding of $F$. Moreover, we are canonical.
\end{quote}
This allows us to obtain
the set of vertices $V_1$, as above, in $O(n)$ time.

\subsection{How do the structural results help?}

Our second step is about map isomorphism. Let us first mention
that a \DEF{map} is a graph together with a (2-cell) embedding
in some surface, and that \DEF{map isomorphism} between two
maps is an isomorphism of underlying graphs which preserves
the facial walks of the maps.
For the map isomorphism problem for graphs embeddable in a surface $S$ of Euler genus at most $g$,
we know the following result in \cite{kmstoc08}, which we shall use.

\begin{theorem}
\label{thm:main3}
\showlabel{thm:main3}
For every surface $S$ (orientable or non-orientable),
there is a linear time algorithm to decide whether or not
two embedded graphs in $S$ represent isomorphic maps\footnote{It is trivial to do this in $O(n^2)$ time, as two embeddings are fixed (so we just guess 
which vertex of one graph can map to which vertex of the other vertex).}.
\end{theorem}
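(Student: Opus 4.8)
\medskip
\noindent\textbf{Proof sketch (plan).}
First I would reduce to connected graphs: a $2$-cell embedding forces each graph to be connected, so I may discard at once the cases where $|V(G_1)|\ne|V(G_2)|$ or $|E(G_1)|\ne|E(G_2)|$ (the face counts are then equal by Euler's formula). Next I would encode each map $(G_i,\Pi_i)$ by its \emph{flag system}: the set $\Phi_i$ of flags (incident vertex--edge--face triples, of which there are $\Theta(n)$), together with three fixed-point-free involutions $\tau_0^{(i)},\tau_1^{(i)},\tau_2^{(i)}$ that change, respectively, the vertex, the edge, and the face of a flag while keeping the other two coordinates fixed; these satisfy $(\tau_0^{(i)}\tau_2^{(i)})^2=\mathrm{id}$, and since the map is connected the group they generate acts transitively on $\Phi_i$. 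This data is computable from the rotation system and signature in $O(n)$ time, uniformly for orientable and non-orientable $S$ (the signature only enters the definition of the $\tau_j$). A map isomorphism $(G_1,\Pi_1)\to(G_2,\Pi_2)$ is then precisely a bijection $\Phi_1\to\Phi_2$ conjugating $\tau_j^{(1)}$ to $\tau_j^{(2)}$ for $j=0,1,2$, that is, a colour-preserving isomorphism of the associated cubic, properly $3$-edge-coloured "flag graphs".

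The crucial \emph{rigidity} observation is that, because the flag graph is connected and an isomorphism commutes with all three involutions, a map isomorphism is uniquely determined by the image of a single flag. Consequently, having fixed one flag $\phi^\ast\in\Phi_1$, for each of the $\Theta(n)$ candidate images $\psi\in\Phi_2$ there is at most one map isomorphism sending $\phi^\ast$ to $\psi$, and whether it exists can be decided by breadth-first ``flag chasing'': one propagates the partial correspondence along $\tau_0,\tau_1,\tau_2$ through the whole flag set and checks consistency, in $O(n)$ time per candidate. Trying all $\psi$ already yields the trivial $O(n^2)$ algorithm mentioned in the footnote, and hence proves the statement in quadratic time -- which is all that is needed for the $O(n^3)$ version of Theorem~\ref{thm:main1}.

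To obtain \emph{linear} time I would compute a canonical form of each map rather than search over all roots. From a root flag $\phi$, a fixed canonical traversal of the flag graph produces a string $\mathrm{code}(\Phi_i,\phi)$ of length $O(n)$ such that $(\Phi_1,\phi)$ and $(\Phi_2,\psi)$ are isomorphic as rooted maps iff their codes coincide; hence $(G_1,\Pi_1)\cong(G_2,\Pi_2)$ iff $\min_\phi \mathrm{code}(\Phi_1,\phi)=\min_\psi \mathrm{code}(\Phi_2,\psi)$. Computing this minimum directly is again quadratic, so instead I would run a Hopcroft--Wong-style iterative reduction on the flag graph itself: repeatedly perform canonical local reductions that suppress or merge the locally forced structure -- in the planar case these are exactly the degree/face reductions of \cite{HW}, and here one works on the flag graph and uses that the genus is a fixed constant to bound the irreducible ``core'' -- each step recording a fragment of the canonical code and refining a partition of the flags, with the whole process amortized to $O(n)$ via the data-structuring technique behind fast DFA minimisation. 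Two maps are isomorphic precisely when these canonical forms agree, and equal canonical forms can be converted into an explicit map isomorphism by one further pass of flag chasing. The main obstacle is exactly this last point: proving that the reduction and partition-refinement bookkeeping runs in \emph{amortized linear} time rather than merely quadratic, which is the technically delicate part that generalises the Hopcroft--Wong argument from the plane to an arbitrary fixed surface.
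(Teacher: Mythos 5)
You should first note that the paper does not actually prove Theorem \ref{thm:main3}: it is imported wholesale from \cite{kmstoc08} (Kawarabayashi--Mohar, STOC'08), and the only internal remark is the footnote that an $O(n^2)$ algorithm is trivial because both embeddings are fixed. So there is no in-paper proof to match your argument against; the comparison has to be with what your sketch itself establishes.

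On its own terms, your proposal proves the easy half but not the theorem. The flag-system encoding, the observation that a map isomorphism is a colour-preserving isomorphism of the $3$-edge-coloured flag graphs, the rigidity claim that the image of one flag determines the whole isomorphism (by transitivity of the group generated by $\tau_0,\tau_1,\tau_2$ on a connected flag set), and the resulting $O(n)$-per-root flag-chasing test are all correct; this gives exactly the $O(n^2)$ algorithm the footnote alludes to, and would indeed suffice for an $O(n^3)$ version of Theorem \ref{thm:main1}. But the statement being proved is \emph{linear} time, and that is precisely where your argument stops being a proof and becomes a plan: you propose a Hopcroft--Wong-style sequence of canonical local reductions on the flag graph, with partition refinement amortized ``as in DFA minimisation,'' and you yourself flag that establishing the amortized $O(n)$ bound, specifying the reduction rules that remain valid on a surface of positive genus, and handling the bounded-size irreducible core are the delicate points. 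Those points are not incidental -- they are the entire content of the linear-time result in \cite{kmstoc08}, which is obtained by generalising the Hopcroft--Wong reduction machinery to embedded graphs of bounded genus. So your route is aligned in spirit with the actual source of the theorem, but as written it has a genuine gap: no concrete set of surface-valid reductions is given, no argument that each reduction preserves map isomorphism and strictly shrinks the instance, and no amortized running-time analysis; without these the canonical-form step could just as well be quadratic, and the theorem as stated is not established.
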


The key of our algorithm for Theorem \ref{thm:main1} is that the first structural results allow us to reduce
the graph isomorphism problem for bounded genus graphs to the map isomorphism problem, which can be done by Theorem \ref{thm:main3}.

\subsection{Overview of our graph isomorphism algorithm}

We now give an overview of our algorithm for Theorem
\ref{thm:main1}. Suppose we want to test the graph
isomorphism of two graphs $G_1,G_2$, both admit an embedding in a surface $S$ of Euler genus $g$.

Let us give overview of our algorithm.

\medskip

{\bf Step 1.} Making both $G_1$ and $G_2$ 3-connected.

\medskip

Our first step is to reduce both graphs $G_1$ and $G_2$ to be 3-connected.
This is quite standard in this literature, see \cite{logspace, jgaa}, so we omit details, which will be described in Section \ref{secmain}.

\drop{
but for the completeness, we give a sketch here. More details will be described in Section \ref{secmain}.

 So let us consider the following decompositions of $G_1,G_2$, respectively.
\begin{enumerate}
\item
Decompose $G_1$ and $G_2$ into biconnected components by constructing
a biconnected component tree decomposition $(T, R)$.
\item
Decompose each biconnected component of $G_1$ and of $G_2$
into 3-connected components
by constructing a triconnected component tree decomposition $(T',R')$.
\end{enumerate}

Again, by Theorems \ref{2conunique} and \ref{3conunique}, both
the biconnected component tree decomposition and
the triconnected component tree decomposition are unique.
For our convenience, let us assume that both $T$ and $T'$ are rooted.

For each biconnected component $R_t$ of the biconnected component tree decomposition $(T, R)$, we assign colors to the adhesion sets of $R_t \cap R_{t'}$ for each $tt' \in T$. This allows us to define a ``rooted tree'', where for each edge $tt' \in T$ where $t$ is closer to the root, we can define the subtree $T''$ of $T$ that takes all nodes of $T$ that are in
the component of $T-tt'$ that does not contain the root. The vertex of $G$ corresponding to $R_t \cap R_{t'}$ (which we call ``rooted vertex'') will tell us which node of $T''$ is the root. Thus the colored vertex can be thought of a way to tell the parent node of any subtree of $T$.

It follows that given two graphs $G_1$ and $G_2$ and
their biconnected component tree decompositions
$(T_{G_1}, R_{G_2})$ and $(T_{G_1}, R_{G_2})$, $G_1$ and $G_2$ are isomorphic if and only if for each $t \in T_{G_1}$ and its corresponding
node $t' \in T_{G_2}$, the subtree $T''_{G_1}$ rooted at $t$ is
isomorphic to  the subtree $T''_{G_2}$ rooted at $t'$, and moreover,
the graph induced by $\bigcup_{t \in T''_{G_1}} R_t$
is isomorphic to the graph induced by
$\bigcup_{t' \in T''_{G_2}} R_t$ with respect to
the rooted vertex (see Theorem 5.8 in \cite{logspace}, or \cite{jgaa}).

Similarly, for each triconnected component $R'_t$ of the triconnected component tree decomposition $(T', R')$, we assign colors to the adhesion sets of $R'_t \cap R'_{t'}$ for $tt' \in T'$. Note that $|R'_t \cap R'_{t'}| =2$ for each $tt' \in T'$.
This, again, allows us to define a ``rooted tree'', where for each edge $tt' \in T'$ where $t$ is closer to the root, we can define the subtree $T''$ of $T'$ that takes all nodes of $T'$ that are in
the component of $T'-tt'$ that does not contain the root. The vertices of $G$ corresponding to $R'_t \cap R'_{t'}$ (which we call ``rooted vertices'') will tell us which node of $T''$ is the root. Thus the colored vertices can be thought of a way to tell the parent node of any subtree of $T'$.
Let us observe that $\{u, v\} = R'_t \cap R'_{t'}$
receive the same color,
but there is an ``orientation'' between $u$ and $v$. This way,
we make sure that how we glue $R'_{t}$ and $R'_{t'}$ together
at $u, v$.
(More precisely, we make sure that $u \in R'_{t'}$ should not map to $v \in R'_t$, and $v \in R'_{t'}$ should not map to $u \in R'_t$).

It follows that given two biconnected graphs $G_1$ and $G_2$ and
their triconnected component tree decompositions
$(T'_{G_1}, R'_{G_2})$ and $(T'_{G_1}, R'_{G_2})$, $G_1$ and $G_2$ are isomorphic if and only if for each $t \in T'_{G_1}$ and its corresponding
node $t' \in T'_{G_2}$, the subtree $T''_{G_1}$ rooted at $t$ is
isomorphic to  the subtree $T''_{G_2}$ rooted at $t'$, and moreover,
the graph induced by $\bigcup_{t \in T''_{G_1}} R'_t$
is isomorphic to the graph induced by $\bigcup_{t' \in T''_{G_2}} R'_t$ with respect to the rooted
vertices (see Theorem 4.2 in \cite{logspace}, or \cite{jgaa}).

From these observations, we can reduce our problem to graph isomorphism for each ``non-planar''
3-connected component of $G_1$ and of $G_2$, respectively. For more details, please refer to Section \ref{secmain}.
Hereafter,
we may assume that both graphs $G_1$ and $G_2$ are $3$-connected non-planar.
}

\medskip

{\bf Step 2.} Finding the minimum Euler genus of a surface $S$ for which both $G_1$ and $G_2$ can be embedded.

\medskip

Our second step is to see if we can embed both $G_1$ and $G_2$ in a fixed surface.
This can be  done by a result of
Mohar \cite{mohar1,mohar2}.

\begin{theorem}[Mohar \cite{mohar1,mohar2}]
\label{3} For fixed $g$, there is a linear time algorithm to give
either an embedding of a given graph $G$ in a surface of Euler genus $g$ or
a minimal forbidden minor for the surface of Euler genus $g$ in $G$.
\end{theorem}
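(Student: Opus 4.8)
\noindent\emph{Overall strategy.} The plan is to prove this by induction on $|V(G)|$ with $g$ fixed, the inductive step either reducing $G$ to a graph of bounded tree-width (handled directly) or to a strictly smaller graph with the same embeddability behaviour. Throughout I would use the fact (Robertson--Seymour, cf.\ \cite{MKsurvey}) that the class of graphs embeddable in a fixed surface $S$ is minor-closed, hence characterized by a finite set $\mathrm{Obs}(S)$ of minor-minimal obstructions, each of bounded size; the required output in the negative case is a minor model in $G$ of some member of $\mathrm{Obs}(S)$.

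\noindent\emph{Reduction to $3$-connected graphs.} First I would build, via Theorems~\ref{2conunique} and~\ref{3conunique}, the biconnected and then the triconnected component trees of $G$ in linear time. The Euler genus is additive over $1$-sums, and over a $2$-sum it changes by a bounded amount depending only on the two parts and on which of the (essentially two) identifications along the $2$-cut is used. Consequently, deciding whether $G$ embeds in a surface of Euler genus $\le g$ becomes a bottom-up dynamic program over these decomposition trees, whose state at an adhesion set records the face/rotation data, along that $\le 2$-element set, of a genus-$\le g$ partial embedding of the torso; since $g$ is fixed and adhesion sets have size at most $2$, there are only $O_g(1)$ states, so the program runs in linear time, and if $G$ is non-embeddable an obstruction localizes to a bounded union of triconnected components and is then lifted back through the $2$-sums. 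Hence it suffices to treat a single $3$-connected $G$.

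\noindent\emph{Tree-width dichotomy.} Compute a tree decomposition of $G$ in linear time. If the width is at most a threshold $w(g)$, a bounded-tree-width dynamic program that builds a genus-$\le g$ embedding of the torso bag by bag decides embeddability and, if positive, outputs a rotation system; if negative, then since $G$ contains some member of $\mathrm{Obs}(S)$ as a minor and $\mathrm{Obs}(S)$ is finite of bounded size, a further bounded-tree-width search extracts a minor model of such a member. If instead the width exceeds $w(g)$, then by the grid/wall theorem $G$ contains a large wall $W$. If $G$ nevertheless embeds in genus $\le g$, then any such embedding forces a large subwall of $W$ to be flat, i.e.\ to lie inside a disk of the surface (cf.\ the flat-wall phenomenon for bounded-genus graphs, \cite{MT}); taking $w(g)$ large enough, this implies that a linear-size ``central'' part $U$ of that subwall is \emph{irrelevant} --- $G$ embeds in genus $\le g$ if and only if the graph obtained from $G$ by deleting (or contracting) $U$ does, and every minor model of a member of $\mathrm{Obs}(S)$ survives this operation. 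If on the other hand no flat subwall of the required size can be certified, then the pattern by which $W$ attaches to the rest of $G$ already exhibits a bounded-size chunk containing a member of $\mathrm{Obs}(S)$, which we output. Reducing $U$ and recursing shrinks $|V(G)|$ by a constant fraction at each step, so the total running time telescopes to $O_g(n)$.

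\noindent\emph{Main obstacle.} The hard part is the large-tree-width case: showing that a sufficiently large wall in a bounded-genus $3$-connected graph produces an irrelevant part of linear size, that deleting it provably preserves both genus-$\le g$-embeddability and the obstruction, and that the whole reduction --- including locating the flat subwall --- runs in linear time without knowing an embedding in advance. This is the genus analogue of the irrelevant-vertex technique and rests on the structural statement that large tree-width in an embeddable graph must ``concentrate'' inside a disk. A secondary subtlety is the $2$-sum step: the Euler genus is genuinely not additive over $2$-sums, so the dynamic-programming states must track both ways of re-gluing and the possible genus savings, and the obstruction has to be lifted correctly through the decomposition.
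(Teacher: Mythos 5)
The paper does not prove this statement at all: Theorem \ref{3} is imported as a black box from Mohar \cite{mohar1,mohar2} (with the Kawarabayashi--Mohar--Reed algorithm \cite{KMR} offered as an alternative), so there is no internal proof to match your argument against. Judged as a stand-alone proof, your outline has genuine gaps, and they sit exactly where the difficulty of Mohar's theorem lies. In the large-tree-width branch you assert that a large wall in a genus-$\le g$ graph yields a \emph{linear-size} irrelevant set $U$ whose deletion preserves both embeddability and every obstruction, that flatness of a subwall can be certified in linear time without knowing an embedding, and that otherwise a bounded-size chunk containing a member of $\mathrm{Obs}(S)$ pops out of the attachment pattern --- none of this is argued, and the claim that each round removes a constant fraction of the vertices (so that the recursion telescopes to $O_g(n)$) is precisely the quantitative statement one has to work for. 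The irrelevant-vertex technique, as used in this paper (Theorem \ref{delete}, Lemma \ref{irrelevant11}), only licenses deleting vertices that are deeply nested inside flat disks, and the linear total running time is obtained there not by carving out a linear-size piece of one wall but by a contraction scheme (Theorem \ref{algmatch2}: repeatedly contracting induced matchings or deleting twin degree-$2$ vertices) combined with a two-phase surface-folio dynamic program; your ``delete a linear-size central part $U$ in one shot'' step is not justified and, as stated, is not known to be sound. A further soft spot is your reliance on the finiteness of $\mathrm{Obs}(S)$: this makes the algorithm non-constructive for general $g$ (the lists are unknown beyond the projective plane), whereas Mohar's theorem produces a minimal forbidden minor without assuming access to such a list; relatedly, the $2$-sum dynamic program needs the precise additivity statement for Euler genus over $\le 2$-cuts (the version the paper quotes adds the virtual edge on the cut pair), and ``$O_g(1)$ states recording face/rotation data along the cut'' is asserted rather than defined and verified.

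It is also worth noting that even where your sketch is morally on the right track, it reconstructs the later Kawarabayashi--Mohar--Reed route (tree-width dichotomy, walls, irrelevant parts, bounded-width dynamic programming) rather than Mohar's original argument, which proceeds by an inductive embedding-extension scheme over bridges and does not rest on the grid/wall theorem. So the proposal is best read as a plausible plan in the spirit of \cite{KMR} and of the machinery in the paper's appendix, but with the central lemmas (flat-wall detection, linear-size irrelevant sets, geometric shrinkage, and constructive extraction of a minimal forbidden minor) left unproven, it does not constitute a proof of Theorem \ref{3}.
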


Alternatively, we can use a new linear time algorithm by
Kawarabayashi, Mohar and Reed \cite{KMR}. Hereafter, we assume that
both $G_1$ and $G_2$ can be embedded in the surface of Euler genus $g$ (otherwise clearly $G_1$ and $G_2$ are not isomorphic).

In fact, we
would like to know the minimum Euler genus of a surface $S$
for which both $G_1$ and $G_2$ can be embedded. This can be done in
linear time for fixed $g$, since we know that the upper bound of
Euler genus of $G_1$ and of $G_2$ is at most $g$. Hence we just need to apply
Theorem \ref{3} to both $G_1$ and $G_2$ at most $g$ times. Therefore, after
performing Theorem \ref{3} at most $O(g)$ times, we may assume that
both $G_1$ and $G_2$ can be embedded in the surface $S$ of the minimum Euler genus $g$.

\medskip

{\bf Step 3.} For $G=G_1,G_2$, if $G$ has a polyhedral embedding (including a planar embedding), then apply Theorem \ref{thm:main2} to obtain all polyhedral embeddings in $O(n)$ time (there are at most $f(g)$ different polyhedral embeddings, where $f(g)$ comes from Lemma \ref{finitely}). We then go to Step 6. Note that $G$ has a polyhedral embedding
in a surface $S$ of Euler genus $g$ if and only if $G$ has a minimal embedding of face-width three in $S$ as a surface minor (see Section \ref{secmin}). Thus we have a certificate
(from Theorem \ref{thm:main2}) that
$G$ does not have a polyhedral embedding
in a surface $S$ of Euler genus $g$ because there is no minimal embedding of face-width three in $S$ as a surface minor in $G$.


\medskip

{\bf Step 4.} For $G=G_1,G_2$, suppose $G$ does not have any polyhedral
embedding, but has an embedding of face-width exactly two.
Unfortunately in this case, we cannot enumerate all the embeddings as we did in Step 3, because in contrast with the case when $G$ has
a polyhedral embedding, the number of embeddings of face-width exactly two is not quite bounded by a constant.

Instead, in $O(n)$ time,
we enumerate at most $q'(g)$ (for some
function $q'$ of $g$) different pairs of subgraphs
$(G'_1,L'_1),\dots,(G'_{q'},L'_{q'}) \in \mathcal{Q}$ of $G$,
as in Case A above,
such that $G'_i$ can be embedded in a surface of Euler genus at most $g-1$ and $L'_i$ can be embedded in a plane (since it is a cylinder).
Moreover we are canonical, as discussed in Case A.

Then after Step 4, we apply our whole algorithm recursively to each of $G'_i, L'_i$ in the pair $(G'_i,L'_i)$ with ''marked'' vertices $x_1, y_1, x_2, y_2$
both in $G'_i$ and in $L'_i$.
Note that we just need to apply Step 6 to $L'_i$.
For more details, see Section \ref{secmain}.

\medskip

{\bf Step 5.} For $G=G_1,G_2$, if $G$ does not have any face-width
two embedding, but has an embedding of face-width exactly one, then
in $O(n)$ time, we obtain the set of vertices $V_1$ of order at most $q(g)$ (for some function $q$ of $g$) such that for
each vertex $c \in V_1$, there is an embedding of $G$ of face-width
exactly one and moreover there is a non-contractible curve $C$ that hits
only $c$ in this embedding. Furthermore, there is no such a vertex in $G-V_1$ and we are canonical, as discussed in Case B. We shall show this in Theorem
\ref{faceone}.
%
This
allows us to create $q \leq q(g)$ different subgraphs $G_1,\dots,G_{q}$
of $G$ of Euler genus at most $g-1$ that can be obtained from $G$ by
splitting each vertex of $V_1$ into the ``right'' side and the ''left'' side.
Let us observe that at Step 5, we know that $C$ must be orientation-preserving (for otherwise, $G$ can be embedded in a surface of smaller Euler genus, see Lemma \ref{project}, due to Vitray \cite{vite}.)
Then we recursively apply our whole
algorithm (from Step 1) to each of these graphs $G_1,\dots,G_{q}$
with ''marked'' vertices in $V_1$.

\medskip

{\bf Step 6.} Testing graph isomorphism of embedded graphs.

\medskip

When the current graph comes to Step 6, it comes from Step 3. Thus at the moment,
we have either a planar embedding of a 3-connected graph or a polyhedral embedding of a 3-connected graph in some surface.

By Theorem \ref{thm:main3}, we can check map isomorphism of the embedding of some graph $G'_1$ and of the embedding of some other graph $G'_2$ in $O(n)$ time. Note that if $G'_1$ and $G'_2$ are map isomorphic, then $G'_1$ and $G'_2$ are isomorphic.


\medskip

We shall show that after Step 6, we can, in $O(n)$ time, figure out whether or not $G_1$ and $G_2$ are isomorphic in Section \ref{secmain}.

We now discuss time complexity.
Let us observe that in Steps 4 and 5,
we create at most $q'(g),q(g)$ different subgraphs of $G_1$ and of $G_2$, respectively, and we recursively apply
our whole algorithm again to each of these different subgraphs of $G_1$ and of $G_2$. However, when we recurse, we know that
Euler genus of each subgraph already goes down by at least one. Also, note that in Step 3,
we create at most $f(g)$ different subgraphs of $G_1$ and of $G_2$, respectively. Since $g$ is a fixed constant and in addition, we recurse at most $g$ times, therefore in our recursion process,
we create at most $w(g)$ different subgraphs of $G_1$ and of $G_2$ in total, for some function $w$ of $g$.

In Step 6, we can figure out all pairs of graphs $(H_1,H'_1),\dots$ with $H_i \subseteq G_1$ and $H'_i \subseteq G_2$, where
both $H_i$ and $H'_i$ are graphs at Step 6, such that
$H_i$ and $H'_i$ are isomorphic for all $i$ (with respect to the marked vertices). This can be done in $O(n)$ time by Theorem \ref{thm:main3}, since we create at most $w(g)$ subgraphs of $G_i$ for some function $w$ of $g$ in our recursion process ($i=1,2$).

For each subgraph of $G_i$ ($i=1,2$) in Step 6,
we can easily go back to the reverse order of Steps 4 and 5 to come up with the original graphs $G_1$ and $G_2$ in $O(n)$ time, because
in both Steps 4 and 5, we only ``split'' a few vertices, and these
vertices are all marked.
Thus having known all pairs of graphs $(H_1,H'_1),\dots$ with $H_i \subseteq G_1$ and $H'_i \subseteq G_2$ such that
$H_i$ and $H'_i$ are isomorphic for all $i$ (with respect to the marked vertices), we can see if $G_1$ and $G_2$ are isomorphic in $O(n)$ time.

In summary, we create only constantly many subgraphs in our
recursion process.
Since all of Steps 1-6 can be done in $O(n)$ time, so the time
complexity is $O(n)$.

Steps 2, 3 and Step 6 are already described above. So it remains to consider Steps 1, 4 and 5, and the correctness of our algorithm.
Some details of Step 1 will be given in Section \ref{secmain}, but this
is all standard (see \cite{logspace, jgaa}).

The rest of the paper is organized as follows. In Section \ref{secmin}, we give several
facts about minimal embeddings of face-width $k$, which are one key
in our proof. In Section
\ref{sechom}, we define homology in a surface, which is
necessary in our proof.  In Section \ref{sectwo}, we deal with the case when a
given graph has an embedding in a surface $S$ with face-width
exactly two (but does not have an embedding with face-width three).
In Section \ref{secone}, we deal with the case when a given graph
has an embedding in a surface $S$ with face-width exactly one (but
does not have an embedding with face-width two). Finally in Section
\ref{secmain}, we give several remarks for our algorithm for Theorem
\ref{thm:main1}, including the correctness of our algorithm.

\section{Minimal embedding of face-width $k$ and minimal subgraph of face-width $k$}
\label{secmin}
\showlabel{secmin}

Recall
that an embedding of a given graph is \DEF{minimal of face-width $k$},
if it has face-width $k$, but for each edge $e$ of
$G$, the face-width of $G-e$ and of $G/e$ are both less than $k$.
By Theorems 5.6.1 and 5.4.1 in \cite{MT}, any
minimal embedding of face-width $k \geq 2$ has at most $l'(g,k)$ vertices for some function $l'$ of $g,k$ (therefore
there are only bounded number of minimal embeddings of face-width $k$).
Most importantly, a given graph $G$ has an embedding in the surface
$S$ with face-width at least $k$ if and only if $G$ contains a
minimal embedding of face-width $k$ as a surface minor.

Let us now state one result in \cite{kmstoc08}.
\begin{theorem}
\label{find1}\showlabel{find1}
Suppose $g, l$ are fixed integers.
Let $H$ be a graph of order $l$ that is embedded in a surface of Euler genus $g$.

Given a graph $G$ that has an embedding in $S$,
we can determine in $O(n)$ time
whether or not $G$ has $H$ as a surface minor of an embedding of $G$ in $S$.
\end{theorem}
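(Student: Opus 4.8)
The plan is to use a tree-width dichotomy together with an irrelevant-region argument, in the spirit of Mohar's linear-time embedding algorithm (Theorem~\ref{3}). First I would fix a threshold $w = w(g,l)$, to be determined by the estimates below, and run a linear-time tree-width algorithm on $G$: either it returns a tree-decomposition of width $O(w)$, or it certifies that $\mathrm{tw}(G) > w$.

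In the bounded case, $\mathrm{tw}(G) \le w$, I would decide the property by bottom-up dynamic programming along the tree-decomposition. The subtle point is that ``$H$ is a surface minor of an embedding of $G$ in $S$'' refers to a global embedding, but it can still be tested locally over a decomposition of bounded adhesion: at each bag one keeps a state recording (i) the combinatorial type (rotation system plus signature) of the partial embedding chosen so far on the processed part of $G$, restricted to the separator, together with the Euler genus spent so far and how the faces meeting the separator are closing up, and (ii) the partial model of $H$ built so far --- which branch sets of $H$ have been started, which of its edges are already realized, and how the separator vertices are distributed among the branch sets. Each state has size bounded in terms of $w$, $g$, $l$ only, so the table at each bag is of bounded size and the whole computation takes $O(n)$; equivalently one can encode the embedding and the $H$-model by a bounded amount of auxiliary relational structure and invoke Courcelle's theorem. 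This settles the bounded case.

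In the unbounded case, $\mathrm{tw}(G) > w$, I would first compute one embedding of $G$ in $S$ via Theorem~\ref{3}, and then invoke the excluded-grid theorem for bounded-genus graphs to produce, in $O(n)$ time, a \emph{flat} wall $W$ of order $\Theta(w)$ inside a disk $\Delta$ of that embedding; after a standard cleaning step I may assume that every $W$-bridge attached to the interior of $\Delta$ is local, so that the part of $G$ drawn in $\Delta$ is a planar graph attached to $\partial\Delta$ in a planar way. The heart of the argument is a \emph{compression lemma}: since $H$ has only $l$ vertices, any surface-minor model of $H$, in any embedding of $G$ in $S$, meets $\Delta$ in a piece of bounded complexity, and a flat wall of some constant order $c(g,l)$ leaves enough room to reroute that piece together with all model-paths that traverse $\Delta$; conversely, compressing $\Delta$ can only discard planar structure that no model of $H$ ever needs beyond bounded complexity. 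Hence $G$ has $H$ as a surface minor of an embedding in $S$ if and only if the graph obtained from $G$ by replacing the contents of $\Delta$ by a flat wall of order $c(g,l)$ does. With some care --- either by choosing $\Delta$ to absorb a constant fraction of $V(G)$ at each round, or by first decomposing $G$ along small separators into bounded-genus pieces on which the reduction is applied once --- the successive reductions can be organized so that the total running time is $O(n)$ rather than merely polynomial; each round lowers $\mathrm{tw}(G)$ toward $w$, after which we return to the bounded case.

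\textbf{Main obstacle.} The genuinely hard part is the compression lemma together with the bookkeeping that keeps the reduction linear rather than merely polynomial: one must show that a large flat disk can be replaced by a constant-size gadget without changing surface-minor containment \emph{under the existential quantifier over all embeddings of $G$ in $S$}, and one must arrange the successive region-deletions so that their costs telescope. This is exactly the technical content deferred to Sections~\ref{secmin} and~\ref{appendix1}.
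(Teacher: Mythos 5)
Your high-level architecture is the same as the paper's: a tree-width dichotomy, a folio-style dynamic program over a bounded-width tree-decomposition for the low-tree-width side (the paper's ``surface-folio'' relative to a separator, Theorem~\ref{thm:twvertexs}, is essentially the state you describe), and removal of irrelevant planar structure for the high-tree-width side. But the two steps you yourself flag as ``the main obstacle'' are exactly the content of the theorem, and you do not supply them, so the proposal has a genuine gap rather than a complete alternative proof. First, your compression lemma is asserted, not proven, and it is stated in a stronger form than what the paper establishes: the paper does not replace a whole flat disk by a constant-size wall, but deletes \emph{single} vertices that are $k$-nested, i.e.\ at face-distance at least $k$ from the outer cycle of a planar subpiece whose interior vertices have all their neighbors inside the piece (Theorem~\ref{delete}), and then gives a separate rerouting argument (inside Lemma~\ref{irrelevant11}) showing that vertices deleted at earlier rounds remain irrelevant after later deletions. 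Your claim that the model of $H$ ``meets $\Delta$ in a piece of bounded complexity'' under the existential quantifier over all embeddings of $G$ in $S$ is precisely the statement that needs a proof; nothing in your sketch rules out that some embedding of $G$ uses the interior of $\Delta$ in an essential, non-compressible way, which is why the paper works with face-distance/nested cycles and argues irrelevance vertex by vertex. (Your side remark that one could ``invoke Courcelle's theorem'' is also doubtful: the property quantifies over embeddings of $G$ in $S$ and is not minor-closed, so it is not covered by the standard MSO route; the explicit folio DP is the right tool, as in the paper.)

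Second, the linear-time bookkeeping is not just ``care'': neither of your two suggestions works as stated. A single flat disk produced by the excluded-grid theorem need not contain a constant fraction of $V(G)$, so repeating ``find a wall, compress it'' can take $\Omega(n)$ rounds of $\Omega(n)$ work each, i.e.\ quadratic time; and ``decomposing along small separators into bounded-genus pieces'' is not developed and does not obviously split this embedding-dependent property into independent subproblems. The paper's solution is different and is where most of the work of the appendix goes: it builds a geometric sparsification hierarchy $G=G_0,G_1,\dots,G_b$ by repeatedly contracting large induced matchings or deleting degree-two twin vertices (Theorem~\ref{algmatch2}), solves the problem by brute force on the constant-size $G_b$, and then walks back up the hierarchy, at each level deleting all irrelevant vertices of the (now bounded-tree-width) graph via a two-pass surface-folio dynamic program; the geometric decay of $|G_i|$ is what makes the total cost $O(n)$. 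Without a proof of the compression lemma and an explicit mechanism of this kind for the amortization, your argument establishes the easy bounded-tree-width half but not the theorem.
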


For a completeness of our proof, we give a proof of Theorem~\ref{find1} in the appendix.

We consider a family of minimal embeddings of face-width $k$.
From Theorem~\ref{find1}, we can obtain the following.

\begin{theorem}
\label{forbidden} \showlabel{forbidden}
Suppose $G$ can be embedded in a surface $S$ of Euler genus $g$ with face-width $k \geq 2$ (for fixed $k$).
We can in $O(n)$ find
a family of graphs ${\bf F}= \{F_1,\dots,F_l\}$ with the following properties:
\begin{enumerate}
\myitemsep
\item
For all $i$, the embedding $II_i$ of $F_i$ is a minimal embedding of face-width $k$, and
$F_i$ (with the embedding $II_i$) is a surface minor
of some embedding of $G$ in $S$ of face-width $k$.
\item
$l \leq N(g,k)$ for some function $N$ of $g,k$ (i.e., $|\bf{F}|$ is bounded by some constant only depending on $g,k$).
\item
$|F_i| \leq l'(g,k)$ for all $i$, where $l'$ is some function of $g,k$.
\item
For any embedding of $G$ of face-width at least $k$ in a surface
$S$, there is a graph $F_i$ (with its corresponding embedding
$II_i$ of face-width $k$) in ${\bf F}$ such that this embedding of $G$ has $F_i$ (and its embedding $II_i$)
as a surface minor.
\end{enumerate}

\end{theorem}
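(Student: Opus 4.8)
The plan is to obtain ${\bf F}$ by a \emph{filtering} procedure. First, as a preprocessing step whose cost depends only on $g$ and $k$, I list every minimal embedding of face-width $k$ that can occur in a surface of Euler genus at most $g$; then I discard those that are not surface minors of any embedding of $G$ in $S$, which is exactly the decision that Theorem~\ref{find1} performs in linear time. So the proof is really just a bounded number of calls to Theorem~\ref{find1} wrapped around the finiteness statement already recalled at the start of this section.

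Concretely: by Theorems~5.6.1 and~5.4.1 of \cite{MT}, a minimal embedding of face-width $k\ge 2$ has at most $l'(g,k)$ vertices, and hence --- describing an embedding by its underlying graph together with a rotation system and a signature --- there are, up to isomorphism of embeddings, at most $N'(g,k)$ minimal embeddings of face-width $k$ in surfaces of Euler genus at most $g$, for some function $N'$. I enumerate them, say $H_1,\dots,H_m$ with $m\le N'(g,k)$; this takes $O(1)$ time. For each $j$ I call Theorem~\ref{find1} with $H:=H_j$ to decide, in $O(n)$ time, whether $G$ has $H_j$ as a surface minor of an embedding of $G$ in $S$, and I let ${\bf F}$ consist of those $H_j$ for which the answer is ``yes''. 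Since $m\le N'(g,k)$ is a constant, the whole procedure runs in $O(n)$ time, which also yields properties~2 and~3 (with $N(g,k):=N'(g,k)$ and the same $l'(g,k)$). Property~1 holds by construction: each $F_i\in{\bf F}$ is, by the choice made in the test, a minimal embedding of face-width $k$ that is a surface minor of some embedding $\Pi$ of $G$ in $S$, and since face-width cannot increase when passing to a surface minor, $\Pi$ itself has face-width at least $k$.

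For property~4, let $\Pi$ be an arbitrary embedding of $G$ of face-width at least $k$ in $S$. I repeatedly delete or contract an edge of the current embedding as long as the face-width stays at least $k$. The process terminates at an embedding $\Pi'$ that is a surface minor of $\Pi$, still has face-width at least $k$, and for which every single deletion or contraction drops the face-width below $k$; invoking the standard control on how a single edge operation can change the face-width (\cite{MT}), $\Pi'$ in fact has face-width exactly $k$, so $\Pi'$ is a minimal embedding of face-width $k$. Hence $\Pi'$ is isomorphic, as an embedding, to some $H_j$; and since $\Pi'$ is a surface minor of $\Pi$, which is an embedding of $G$ in $S$, this $H_j$ survives the Theorem~\ref{find1} test and lies in ${\bf F}$. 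This $H_j$ is the graph required in property~4.

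The one genuinely delicate point --- and the step I expect to require the most care --- is the termination analysis in the proof of property~4: I must ensure that a face-width-$\ge k$ embedding can always be reduced one edge at a time so as to stop at face-width exactly $k$ rather than jumping from above $k$ directly below it, and this is precisely where the bound from \cite{MT} on the change of face-width under a single edge deletion or contraction is used (equivalently, one picks $\Pi'$ minimal in the surface-minor order among surface minors of $\Pi$ of face-width at least $k$ and argues its face-width cannot exceed $k$). The remaining ingredients --- the vertex bound for minimal embeddings, hence the constant-size candidate list, and the linear-time surface-minor test --- are supplied directly by \cite{MT} and Theorem~\ref{find1}, and everything else is bookkeeping over a bounded number of candidates.
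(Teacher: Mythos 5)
Your proposal is correct and follows essentially the same route as the paper: the paper also obtains ${\bf F}$ by combining the bound from Theorems 5.6.1 and 5.4.1 of \cite{MT} on the size (hence number) of minimal embeddings of face-width $k$ with a constant number of calls to the linear-time surface-minor test of Theorem~\ref{find1}, and property~4 rests on the same fact that every embedding of face-width at least $k$ can be reduced, one edge deletion or contraction at a time, to a minimal embedding of face-width exactly $k$. Your explicit termination argument for that reduction is just a spelled-out version of the standard fact the paper cites implicitly, so there is no substantive difference.
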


The next lemma, which we stick to ''subgraphs'' instead of ''minors'', is easy to show by  reversing the contractions, except for
the last statement of Lemma \ref{expansion}, which will be clarified right after Lemma~\ref{expansion}.

\begin{lemma}
\label{expansion}
\showlabel{expansion}
Let $G$ be a graph that has an embedding in a surface $S$ of the Euler genus $g$.

Suppose ${\bf F}=\{F_1,\dots,F_l\}$ (with their corresponding
embeddings $II_1,\dots,II_l$, respectively) is a set of graphs having all minimal embeddings of
face-width $k \geq 2$ (for fixed $k$) as in Theorem \ref{forbidden}.

Then we can in $O(n)$ time find a family of subgraphs ${\bf F'}=
\{F'_1,\dots,F'_l\}$ of $G$ such that for all $i$, $F'_i$ is
obtained from the surface minor of $F_i$ by reversing the contractions.
Moreover the following holds as well:

\begin{enumerate}
\myitemsep
\item
For all $i$, the embedding $II'_i$ of $F'_i$ (of face-width exactly $k$) in $S$ can be extended from $II_i$.
\item
$l \leq N(g,k)$ (as in the second item of Theorem~\ref{forbidden}).
\item
$\bsize(F'_i) \leq l'(g,k)$ for all $i$, where $l'$ is some function of $g,k$.
\item
The embedding $II'_i$ of $F'_i$ can be extended to an embedding of $G$ in
$S$, by embedding each $F'_i$-bridge in some face of $F'_i$ (we call
this embedding ``\emph{the embedding of $F'_i$ can be extended to an
embedding of $G$}.'')
\item
For any embedding of $G$ of face-width at least $k$ in a surface
$S$, there is a subgraph $F'_i$ (with its corresponding embedding
$II'_i$ of face-width $k$) in ${\bf F'}$ such that the embedding $II'_i$ of $F'_i$ can
be extended to this embedding of $G$.
\end{enumerate}
\end{lemma}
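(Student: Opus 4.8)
The plan is to obtain each $F'_i$ from $F_i$ by an explicit "uncontraction" procedure that runs along the surface-minor witness structure guaranteed by Theorem~\ref{forbidden}, and then to verify the five listed properties one by one, the bulk of the work being the last one. First I would recall that, by Theorem~\ref{forbidden}, each $F_i$ (with embedding $II_i$) is a surface minor of some embedding of $G$ in $S$; that means there is a sequence of edge deletions and contractions from (an embedding of) $G$ down to $(F_i,II_i)$. Discard the deletions and keep only the contractions: this exhibits a minor model of the underlying graph $F_i$ inside $G$, i.e. for each vertex $v$ of $F_i$ a connected subgraph (branch set) $B_v\subseteq G$, pairwise disjoint, with an edge of $G$ between $B_u$ and $B_v$ whenever $uv\in E(F_i)$. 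Let $F'_i$ be the subgraph of $G$ consisting of the union of the $B_v$ together with one chosen connecting edge for each edge of $F_i$; choosing spanning trees of the $B_v$ we may take each $B_v$ to be a tree, so that $F'_i$ is a subdivision-like expansion of $F_i$ in which each original vertex is blown up to a tree and each original edge becomes a connecting edge. Running Theorem~\ref{find1}'s algorithm (which, by the appendix, actually produces the witness, not merely a yes/no answer) gives all of this in $O(n)$ time, and since $|{\bf F}|\le N(g,k)$ we repeat it only a bounded number of times.

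Next I would check properties 1--3, which are essentially bookkeeping. For property~1: contraction of a connected subgraph is the inverse of the blow-up just performed, and contractions of an embedded graph are well-defined surface operations that do not change the surface (no loops are created since the $B_v$ are disjoint and $F_i$ is simple enough), so contracting each tree $B_v$ in the induced embedding of $F'_i$ returns exactly $(F_i,II_i)$; hence $II'_i$ extends $II_i$. Moreover $F'_i$ still has face-width exactly $k$: uncontracting cannot decrease face-width, and if it increased the face-width of $F'_i$ beyond $k$ then $F_i$, being a surface minor of $F'_i$, would contradict... — more carefully, one argues that a non-contractible curve meeting $F'_i$ in $k$ points projects, under the contractions, to a non-contractible curve meeting $F_i$ in at most $k$ points, so $\fw(F'_i)\le \fw(F_i)=k$, while $\fw(F'_i)\ge\fw(F_i)$ as above, giving equality. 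Property~2 is immediate ($l\le N(g,k)$ is inherited). Property~3 is the key size bound: the branch vertices of $F'_i$ are exactly the branch vertices of $F_i$ together with the branch vertices of the trees $B_v$ and the endpoints where connecting edges attach; each $B_v$ can be chosen as a tree whose only branch vertices are those forced by the at most $\deg_{F_i}(v)\le l'(g,k)$ attaching edges, so $\bsize(F'_i)$ is bounded by a function of $\bsize(F_i)$ and the maximum degree in $F_i$, both of which are bounded by $l'(g,k)$ since $|F_i|\le l'(g,k)$; absorbing constants gives $\bsize(F'_i)\le l'(g,k)$ after renaming $l'$.

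The real content is in properties 4 and 5 (the "last statement" the lemma defers), and this is where I expect the main obstacle. Property~5 asks: for \emph{every} embedding $\Pi$ of $G$ of face-width $\ge k$, some $F'_i\in{\bf F'}$ has its embedding $II'_i$ extendable to $\Pi$. The starting point is Theorem~\ref{forbidden}(4): $\Pi$ has some $F_i$ with embedding $II_i$ as a surface minor. But a surface minor is obtained by deletions \emph{and} contractions, whereas we need $F'_i$ to sit inside $\Pi$ as a subgraph with the induced embedding agreeing with $II'_i$. The difficulty is that the branch sets realizing $F_i$ inside $(G,\Pi)$ may differ from the branch sets we fixed when building $F'_i$ from the earlier embedding, so $F'_i$ need not literally be a subgraph of $(G,\Pi)$ with the right rotation system. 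The resolution is to build $F'_i$ \emph{from $\Pi$ itself}: given $\Pi$, take its witness that $F_i$ is a surface minor, keep the contractions, and uncontract inside $(G,\Pi)$ to get a subgraph $F''_i$ of $G$ whose induced embedding extends $II_i$; since all minimal embeddings of face-width $k$ are among the $F_i$ and there are boundedly many ways (bounded by $l'(g,k)$-sized data) to uncontract a given $F_i$ inside $G$, one argues that the family ${\bf F'}$ produced by the algorithm already contains, up to the allowed choices, every such $F''_i$ — concretely, one enlarges ${\bf F'}$ to include, for each $F_i$, all uncontraction patterns realizable in $G$ (still boundedly many, still $O(n)$ to enumerate via Theorem~\ref{find1}), so that $F''_i\in{\bf F'}$. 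Then the $F'_i$-bridges in $(G,\Pi)$ each lie in a single face of $F'_i$ (because $\Pi$ restricted to $F'_i$ is the embedding $II'_i$ and $\Pi$ is a $2$-cell embedding, so every bridge is drawn inside one face), giving property~4 as well. One last point for property~4 in the form stated ("the embedding of $F'_i$ \emph{can} be extended"): here we only need existence of \emph{one} completion, which is the completion back to the embedding of $G$ that $F'_i$ was extracted from, so this is free once the extraction is done. The main obstacle, then, is precisely this matching step — ensuring the combinatorial data of the uncontraction (which tree-structure to put on each branch set, and how the bridge attachments distribute) is captured by the finite family ${\bf F'}$ — and it is handled by making ${\bf F'}$ the (still bounded) family of all $G$-realizable uncontraction patterns of the $F_i$ rather than a single arbitrary choice.
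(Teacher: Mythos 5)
Your handling of items 1--3 matches the paper's: these are indeed the easy ``reverse the contractions'' part, and your bookkeeping for the face-width and for $\bsize(F'_i)$ is fine. You have also correctly located the real difficulty, namely item 5: the subgraph $F'_i$ extracted from one embedding need not sit inside an arbitrary embedding $\Pi$ of face-width $\ge k$ with the right induced embedding, and the algorithm must produce a family that works for every such $\Pi$ \emph{without knowing $\Pi$}. But your resolution of that difficulty has a genuine gap. You propose to enlarge ${\bf F'}$ to contain ``all uncontraction patterns realizable in $G$,'' asserting these are boundedly many. What is bounded is the number of combinatorial \emph{patterns} (of size depending on $g,k$); the number of actual subgraphs of $G$ realizing a given pattern is in general unbounded (think of many parallel candidate paths for a single branch). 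So you can only store one representative subgraph per pattern, and you give no argument for the crucial step: why should the stored representative's induced embedding extend to an arbitrary embedding $\Pi$ in which the pattern happens to be realized by some \emph{other} subgraph? Saying ``build $F'_i$ from $\Pi$ itself'' begs the question, since $\Pi$ is exactly what the algorithm does not have.

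The paper closes this gap differently, and the missing ingredient is Theorem \ref{obst}: it first extracts a subgraph $F$ of $G$ of bounded branch size that cannot be embedded in any surface of smaller Euler genus. Because of this minimality, \emph{every} embedding of $G$ in $S$ induces a $2$-cell embedding of $F$, and $F$ has only boundedly many embeddings; $F$ thus serves as an anchor that is forced to appear, with essentially known embedding, in every embedding of $G$. The ``patterns'' the paper enumerates are relations between the branch vertices of $F$ (with a fixed embedding) and the faces of the embedding $II_i$ of the surface minor $F_i$, and for each pattern one realization is found by a rooted-subdivision (disjoint-paths-type) algorithm obtained by mimicking the proof of Theorem \ref{find1}: irrelevant-vertex deletion relative to the fixed embeddings of both $F$ and $F_i$, followed by dynamic programming on the resulting bounded-tree-width graph. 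It is the anchoring to $F$ that lets one conclude that the bounded pattern enumeration covers all embeddings of $G$, which is exactly the justification your proposal lacks. Without invoking Theorem \ref{obst} (or an equivalent device) your argument for items 4--5 does not go through.
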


\medskip

{\bf Remark:} Let us clarify the last point which is the key in the
algorithm given in \cite{kmstoc08}. Indeed, we can show the
following.
\begin{quote}
For any embedding of $G$ in a surface $S$ having $H$ as a surface
minor (with $|H| \leq h(g)$ for some function $h$ of $g$), there is
a subgraph $H'$ (with its corresponding embedding $II'$)
that is obtained from the surface minor of $H$
by reversing the contractions ($II'$ is also obtained
from the embedding of the surface minor of $H$ by revising the contractions). Moreover
the embedding of $G$ induces the
embedding $II'$ of $H'$.

Furthermore, if a surface minor of a graph $H$ is
guaranteed to exist in some embedding of $G$,
the above subgraph $H'$ (that is obtained from a surface minor of $H$ by reversing the contractions)
can be found in $O(n)$ time (so we can find the surface minor $H$ as well), even without knowing the actual embedding of $G$.
\end{quote}
The hard part of the above remark is the algorithmic statement (i.e., even without knowing the actual embedding of $G$,
we have to find $H'$ and its embedding in $O(n)$ time). Indeed, the rest follows
trivially from the definition of the surface minor.

To show our algorithmic claim, we need the following result due to Mohar
\cite{mohar1,mohar2} (see Theorem 6.1 in \cite{mohar2}\footnote{We apply this result to the surface of Euler genus exactly $g-1$. 
Then we obtain Theorem \ref{obst}.}.

\begin{theorem}
\label{obst} \showlabel{obst} Let $G$ be a graph that can be
embedded in a surface $S$ of Euler genus $g$, but cannot be embedded in
a surface of smaller Euler genus. Then in $O(n)$, we can obtain a
subgraph $F$ of $G$ that cannot be embedded in a surface of smaller Euler
genus, but can be embedded in $S$. Moreover, $F$ is minimal with respect to this property(i.e, any deletion of
an edge or a vertex of $F$ results in a graph that is embeddable in a surface of smaller Euler genus), and $\bsize(F) \leq l''(g)$ for some function $l''$ of $g$.
\end{theorem}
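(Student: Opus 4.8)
The plan is to extract the minimal subgraph $F$ by a two-phase approach: first produce *some* subgraph of $G$ that is not embeddable in a surface of Euler genus $g-1$ (but is embeddable in $S$), and then pare it down to a minimal one while controlling $\bsize$. For the first phase I would invoke Mohar's linear-time algorithm for embeddability in a fixed surface (Theorem \ref{3}, applied to the surface of Euler genus $g-1$): running it on $G$ it must report failure, and the standard output of that algorithm is a bounded-size ``obstruction'' substructure --- more precisely, a witness that certifies non-embeddability in genus $g-1$. The key point here is that the witness one can read off from Mohar's algorithm already has a number of branch vertices bounded by a function of $g$; this is exactly where the $\bsize(F)\le l''(g)$ bound originates, and it is inherited from the structure of the minimal forbidden minors for the surface of Euler genus $g-1$, whose size is bounded by Robertson--Seymour / the finite basis property together with the explicit bounds in \cite{MT}. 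So from this phase I obtain a subgraph $F_0\subseteq G$ with $F_0$ embeddable in $S$, $F_0$ not embeddable in genus $g-1$, and $\bsize(F_0)$ bounded.

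For the second phase I would make $F_0$ minimal: repeatedly test, for each edge $e$ (and each vertex), whether $F_0-e$ (resp.\ $F_0-v$) is still non-embeddable in genus $g-1$; if so, delete and continue. Since $\bsize(F_0)$ is already bounded by $l''(g)$, the graph $F_0$ after suppressing degree-two vertices has bounded size, so this local search terminates after a bounded number of embeddability tests, each of which is linear time by Theorem \ref{3}. Deleting edges only decreases $\bsize$, so the resulting minimal $F$ still satisfies $\bsize(F)\le l''(g)$. The only subtlety is that ``minimal subgraph non-embeddable in genus $g-1$'' must be taken with respect to deletion of edges and vertices, not contraction --- this is the distinction between a minimal forbidden *subgraph* (topologically, a subdivision of a bounded-size graph) and a minimal forbidden *minor*; since $F_0$ is already a subdivision of a bounded-size graph, minimality as a subgraph is achievable in bounded time.

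The main obstacle is the linear-time requirement together with the fact that we must do this *without* first finding an embedding of $G$: we only know $G$ is embeddable in $S$ abstractly. The resolution is that Mohar's algorithm (Theorem \ref{3}) does exactly this --- it either embeds $G$ in the target surface or returns a forbidden minor --- so the non-constructive input (``$G$ embeds in $S$ of genus $g$, not in genus $g-1$'') is enough to drive it. A secondary technical point is verifying that the branch-vertex bound $l''(g)$ is uniform: this follows because the minimal forbidden minors for any fixed surface form a finite set of bounded-size graphs (Graph Minor Theorem plus the explicit bounds of \cite{MT}, Section 5.4--5.6), so any minimal non-embeddable subgraph is a subdivision of one of finitely many bounded-size graphs, hence has boundedly many branch vertices. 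I would cite Theorem 6.1 of \cite{mohar2} directly for the packaged statement, as the excerpt indicates, and only sketch the reduction above for completeness.
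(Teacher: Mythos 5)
Your approach matches the paper's: the paper gives no independent proof of Theorem \ref{obst}, but simply cites Theorem 6.1 of Mohar \cite{mohar2} applied to the surface(s) of Euler genus exactly $g-1$, which is precisely the packaged citation you fall back on (Mohar's theorem already returns a deletion-minimal non-embeddable subgraph of bounded branch size, so your extra reduction from the forbidden-minor version of Theorem \ref{3} and the minimality-refinement loop are not needed). Your sketch is a reasonable consistency check, but the paper's "proof" is just the citation, so you are essentially doing the same thing.
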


Let us give a proof of the algorithmic claim. Since the proof is
almost identical to that of Theorem \ref{forbidden}, we just give a
sketch here.

Fix the embedding $II$ of the graph $H$,
and fix one embedding $II''$ of $F$, where $F$ comes from Theorem \ref{obst}.
We first remark that ALL embeddings of $G$ we consider are obtained from some embedding of $F$ by adding all $F$-bridges to
some faces of the embedding of $F$.

The main idea is the following:
By the standard graph minor argument (finding irrelevant vertices, see \cite{RS7}), for each embedding of $F$,
we can modify $F$ so that the embedding of $F$ (and the branch vertices) is
the same, but $F$ is contained
in a small tree-width graph. The same is true for any surface minor $H$ we
consider. So the proof goes as follows: Fix the endpoints of $F$. Apply
the irrelevant vertices argument with respect to the existence of $F$ (and its
fixed embedding) and the existence of $H$ (and its fixed embedding).
Then we obtain a small tree-width subgraph $G'$ of $G$, but both $F$ and $H$ are contained in $G'$. We can then find both $H$ and $F$
(with their embeddings) in $G'$  in linear time by the standard dynamic programming.

Let us be more precise.
We consider which branch vertices of $F$ in the embedding $II''$ can go
to which face in the embedding $II$ of a surface minor of $H$. Again there are $u(g)$ ways to enumerate,
since $\bsize(F) \leq l''(g)$ and $|H| \leq h(g)$. Let us say a \emph{pattern} for each way, and
enumerate all patterns ${\bf P}$.

Fix one pattern $P \in {\bf P}$. As in the proof of Theorem \ref{find1},
we shall try to find a surface minor of $H$ (with the embedding $II$)
satisfying this pattern $P$. This
can be done in $O(n)$ time by mimicking the proof of Theorem
\ref{find1}\footnote{We need to find a surface minor of $H$ with the embedding $II$ satisfying
this pattern $P$. This requires to find a rooted subdivision. This problem is almost the same as the disjoint paths problem, instead of just
finding a surface minor of $H$ only. But the proof given in Appendix for Theorem \ref{find1} works for this problem setting as well}. Note that the proof for Theorem \ref{find1} presented in the appendix  is
the standard way of the graph minor technique, see \cite{RS13}\footnote{An $O(n^2)$ time algorithm is very easy. The difficult part is to get it down to 
an $O(n^2)$ time algorithm.}.

Therefore, by examining all the patterns in ${\bf P}$,
we can enumerate all
surface minors (of $H$ and its embedding $II$) satisfying some pattern
in ${\bf P}$.

Because $F$ in Theorem \ref{obst} cannot be
embedded in a surface of smaller Euler genus, so any embedding of $G$ in $S$ induces a 2-cell embedding of $F$ in $S$.
Hence there is one embedding
of $F$ that can be extended to the embedding of $G$, and therefore
this pattern in ${\bf P}$ (with a surface minor of $H$) is covered by our enumerations.
Thus for any embedding of $G$ in a surface $S$ that is guaranteed to have $H$ as a
surface minor, there is a subgraph $H'$ (with its corresponding
embedding $II'$) that is obtained from a surface minor of $H$ by reversing the minor
operations (and the embedding $II'$ is also obtained from the embedding $II$ of $H$ by reversing the contractions), and moreover the embedding of $G$ induces
the embedding $II'$ of $H'$. Furthermore, even without knowing the actual embedding of $G$,
we can find such a subgraph $H'$ and its embedding $II'$ in $O(n)$ time (so we can find the surface minor $H$ as well). This proves our
claim for our remark.

\medskip

Let us mention one algorithmic result that is needed in this paper,
see \cite{mohar1,mohar2}.

\begin{theorem}
\label{spec} \showlabel{spec} Let $G$ be a graph and $K$ be a
subgraph of $G$. Suppose that $K$ has an embedding $II$ in a surface
of Euler genus $g$. Then in $O(n)$, we can test whether or not the
embedding $II$ of $K$ can be extended to an embedding of $G$ in $S$.
If such an embedding exists, this algorithm can give an embedding of
$G$ in $S$ that extends the embedding $II$ of $K$.
\end{theorem}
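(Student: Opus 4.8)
The statement to prove is Theorem \ref{spec}: given a graph $G$, a subgraph $K$ with a fixed embedding $II$ in a surface $S$ of Euler genus $g$, we can in $O(n)$ time test whether $II$ extends to an embedding of $G$ in $S$, and if so produce one.

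\bigskip

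\noindent\textbf{Proof proposal.}

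The plan is to reduce the extension problem to a finite, local, bounded-branchwidth computation. First I would observe that every $F$-bridge (here $K$-bridge) of $G$ must be drawn inside one of the faces of the fixed embedding $II$ of $K$, and that the choices for distinct bridges interact only through the faces they compete for. Since $II$ is a $2$-cell embedding, its faces are disks (or, more precisely, are bounded by closed walks in $K$), so "embedding a bridge into a face'' is itself a planar-embedding-extension question with the face boundary walk playing the role of an outer cycle. Thus the global question "does $II$ extend?'' decomposes into, for each face $W$ of $II$, a planarity-type question: can the union of some sub-collection of the $K$-bridges, each attached to $\partial W$ at its prescribed attachment vertices (read off from where those vertices sit on the facial walks of $II$), be drawn simultaneously inside the disk $W$ respecting the cyclic order in which the attachments appear on $\partial W$? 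This is exactly a rooted/annotated planarity problem.

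\medskip

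\noindent The second step is to make this bounded-width so that it can be solved in linear time. A bridge that attaches only to a single facial walk and lies "between'' two consecutive attachments in a trivial way is irrelevant once we have recorded that the corresponding planar piece can be inserted; by the standard irrelevant-vertex argument of Robertson–Seymour (as invoked already after Theorem \ref{obst} in the remark preceding this statement), we may delete large "protected'' planar chunks of $G$ far from $K$ and from the essential part of the bridge structure, leaving a subgraph $G'$ of $G$ of bounded tree-width that still contains $K$ with its embedding $II$ and that is extension-equivalent to $G$ in the precise sense that $II$ extends over $G$ iff it extends over $G'$. On a bounded-tree-width graph with a fixed partial embedding, "does the partial embedding extend to a $2$-cell embedding in $S$'' is a monadic-second-order / dynamic-programming property: one tracks, along a tree-decomposition, the possible ways the already-placed boundary walks cut the as-yet-unembedded part, of which there are only boundedly many per bag because Euler genus $g$ and the width are both constants. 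Hence the test runs in $O(n)$, and the dynamic program is constructive, so it also outputs a witnessing embedding of $G'$; re-inserting the deleted planar chunks (each into the face it was protected from, in the obvious planar way) yields an embedding of all of $G$ extending $II$.

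\medskip

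\noindent Alternatively — and this is really the cleanest route given what is already available in the paper — I would cite the machinery of Mohar's linear-time genus and embedding-extension algorithms \cite{mohar1,mohar2} directly: the problem "extend a given embedding of $K$ to an embedding of $G$ in a fixed surface'' is one of the basic subroutines established there (it is the natural generalization, to fixed surfaces, of the classical linear-time planar-embedding-extension of $st$-numbering / PQ-tree type), and the present theorem is essentially a restatement of that subroutine in the form we need. In that case the "proof'' is a pointer plus the two reductions above (bridges $\to$ per-face rooted planarity; irrelevant vertices $\to$ bounded width) to explain why the statement follows.

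\medskip

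\noindent The main obstacle is the linear (as opposed to polynomial) running time. Deciding extendibility in $O(n^{O(1)})$ time is routine — guess, for each face of $II$, which bridges go there, and test planarity of each resulting disk-piece — but getting this down to $O(n)$ requires (i) handling the combinatorially many bridge-to-face assignments without enumeration, which is where the per-face rooted-planarity reformulation and PQ-tree-style amortization enter, and (ii) the irrelevant-vertex reduction to bounded tree-width together with a linear-time dynamic program over the tree-decomposition that correctly maintains the bounded state describing how the fixed boundary walks separate the remaining graph. Both (i) and (ii) are exactly the technical heart of Mohar's embedding-extension algorithm, so I would lean on \cite{mohar1,mohar2} for them rather than reproving them here.
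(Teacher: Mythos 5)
Your proposal ultimately rests on the same ground as the paper: the paper gives no proof of Theorem~\ref{spec} at all, presenting it as a known subroutine of Mohar's linear-time embedding algorithms \cite{mohar1,mohar2}, which is exactly your ``cleanest route.'' Your preliminary self-contained sketch (independent per-face rooted planarity plus irrelevant vertices) would not by itself constitute a proof, since the bridge-to-face assignment choices interact globally and resolving them is the technical heart of Mohar's embedding-extension machinery, but as you explicitly defer to \cite{mohar1,mohar2} for that, your treatment matches the paper's.
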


In the rest of our proof, given a subgraph $W$ of a 3-connected graph $G$ embedded in a surface $S$,
we want all $W$-bridges to be stable. To do that, we need some  ``local" changes.
Let us make it more precise.
  Let $P$ be a branch of $W$ of length at least two,
and let $Q$ be a path
in $G$ with endpoints $x,y\in V(P)$ and otherwise disjoint from $W$.
Let $W'$ be obtained from $W$ by replacing the path $xPy$
(the subpath of $P$ with endpoints $x$ and $y$) by $Q$;
then we say that
$W'$ is obtained from $W$ by {\em rerouting} $P$ along $Q$, or
simply that $W'$ is obtained from $W$ by {\em rerouting}. Note that $P$ is required to have length at least two,
and hence this relation is not symmetric.  We say that the
rerouting is {\em proper} if
all the attachments
of the $W$-bridge that contains $Q$  belong to $P$.
The following lemma is essentially due to Tutte.
(for the proof, see \cite{K6,MT} for example).

\begin{lemma}
\showlabel{prestable}
Let $G, W$ be as above. Note that $G$ is 3-connected.
Then there exists a subgraph $W'$ of $G$
 obtained from $W$ by a sequence of proper reroutings
such that every $W'$-bridge is stable. Moreover, $W'$ is still embedded in a surface $S$.
\end{lemma}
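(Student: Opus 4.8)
The plan is to show that a suitable potential function strictly decreases under each proper rerouting, so that only finitely many proper reroutings can be performed, and that when no proper rerouting applies, every $W'$-bridge is stable. For a subgraph $W$ of $G$ whose branch vertices are fixed, let the potential be the pair $\Phi(W) = (\,\sum_{e} |E(e)|^2 \text{ over local bridges } e, \ldots)$ — more concretely, I would define $\Phi(W)$ to be the number of edges of $G$ lying in local $W$-bridges, or, if that is not strictly monotone, the lexicographically ordered vector of branch lengths together with the count of edges in local bridges. The key computation is that a proper rerouting of a branch $P$ along a path $Q$ inside a local $W$-bridge $B$ attached only to $P$ turns the portion $xPy$ of $P$ (which has length at least two, hence contains at least one edge that now becomes part of a bridge) into part of a $W'$-bridge, while $Q$ becomes part of $W'$; since $B$ was local and all its attachments lie on $P$, the replaced segment $xPy$ together with the rest of $B$ reattaches to $W'$ along the single branch now containing $Q$, so it is still local, and the total number of edges in $W'$ does not increase while the set of edges available to shuttle between branch segments strictly shrinks (the edges of $xPy$ can never return to $W'$ because we only reroute "forward"). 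This gives termination.

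First I would set up the definitions precisely: recall that a $W$-bridge is local if all its attachments lie on a single branch of $W$, stable otherwise, and that "proper rerouting" of $P$ along $Q$ requires $Q$ to lie in a $W$-bridge all of whose attachments are on $P$. Second, I would verify the crucial structural point: after a proper rerouting replacing $xPy$ by $Q$, the new graph $W'$ has exactly the same branch vertices as $W$ (since $x,y$ are interior to $P$ and hence have degree two in $W$ and in $W'$), so $\bsize(W')=\bsize(W)$ and the notion of branch/local/stable is comparable across the sequence. Third, I would run the potential argument to conclude that a maximal sequence of proper reroutings is finite. Fourth, I would argue that the terminal $W'$ has only stable bridges: if some $W'$-bridge $B$ were local, attached to a single branch $P'$ of $W'$, then because $G$ is $3$-connected, $B$ together with $P'$ has at least three attachment "exits", so one can find inside $B$ a path $Q'$ between two interior vertices of $P'$ avoiding a middle segment of $P'$ of length $\ge 2$, yielding a further proper rerouting — contradicting maximality. (This is the standard Tutte-style argument; here $3$-connectivity is what forbids a local bridge from "hugging" a length-$\ge 2$ interior segment at a terminal configuration.) Finally, I would note that each rerouting replaces a subpath of a facial-ish portion of $W$ by another path drawn in the same face of the ambient embedding of $G$ — since $B$ is a bridge of $W$ in the embedded graph $G$, the path $Q\subseteq B$ lies in a single face of $W$, and rerouting $P$ along $Q$ just re-traces within that face, so $W'$ remains embedded in $S$ (its rotation system and signature are inherited from those of $G$).

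The main obstacle I anticipate is making the potential function genuinely strictly decreasing and bounded below in all cases — in particular, dealing with reroutings that do not obviously remove edges from the "local" stock but merely reshuffle them, and ensuring the $3$-connectivity input is used correctly at exactly the two places it is needed (to guarantee a terminal local bridge can always be re-rerouted, and to rule out degenerate short branches). I would handle this by choosing $\Phi$ to be the number of edges of $G$ contained in local $W$-bridges, together with, as a tie-breaker, the multiset of lengths of those branches of $W$ to which some local bridge is attached, ordered so that rerouting strictly shortens such a branch or strictly decreases the first coordinate; since both quantities are nonnegative integers (resp. bounded vectors), strict decrease forces termination. If a cleaner accounting is available I would use it, but this suffices. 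The embedding-preservation claim and the $\bsize$-invariance are routine and I would state them in a sentence each; citing \cite{K6,MT} for the classical version of the rerouting lemma, the only addition here is tracking that the ambient embedding survives, which is immediate since all reroutings happen within faces of $W$.
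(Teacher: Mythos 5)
Your proof has a genuine gap at its center: the termination step. A proper rerouting is in general reversible by another proper rerouting --- after you replace $xPy$ by $Q$, the displaced segment lies in a $W'$-bridge, and if that bridge happens to be attached only to the new branch (which can happen in a $3$-connected graph, e.g.\ $P=\dots x\,u\,y\dots$ with a vertex $a$ adjacent to $x,u,y$ and $u$ having one further neighbour $w$ elsewhere on $P$: rerouting along $xay$ leaves the bridge $\{u\}$ with attachments $x,a,y,w$, all on the new branch, and rerouting back along $xuy$ returns to $W$), then one can oscillate forever. Hence ``every maximal sequence of proper reroutings is finite'' is false, and no quantity can strictly decrease under every proper rerouting. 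Your specific $\Phi$ is not monotone either: when the displaced material is not absorbed into a stable bridge, the number of edges in local bridges changes by exactly $|E(xPy)|-|E(Q)|$, so it increases whenever $Q$ is shorter than the segment it replaces and is unchanged (together with your branch-length tie-breaker) when the lengths are equal; the parenthetical ``the edges of $xPy$ can never return to $W'$ because we only reroute forward'' is precisely the unjustified step. Switching to a greedy/extremal scheme (``only perform $\Phi$-decreasing moves'' or ``take $W'$ minimizing $\Phi$'') does not rescue the argument as written, because your terminal analysis only shows that a local bridge admits \emph{some} further proper rerouting, not a $\Phi$-decreasing one, and at a $\Phi$-minimum local bridges can a priori persist.

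The real content of the lemma, and the place where $3$-connectivity must do actual work, is in \emph{which} rerouting one performs: the new branch must be routed through the cluster formed by $P$ and its local bridges so that everything left behind merges with stable bridges, i.e.\ is anchored to $W$ off the branch; $3$-connectivity guarantees such anchors exist, since otherwise the two endpoints of the branch would form a $2$-cut. This is Tutte's extremal argument, and the paper does not reprove it --- it cites \cite{K6,MT} for the statement and \cite{moharal1} for the $O(n)$ algorithmic version, adding only the observation that reroutings preserve the branch vertices, so the ``moreover'' clause about the embedding is immediate (your face-by-face justification of that part is fine, as is the observation that $\bsize$ is preserved, though note $x,y$ need not be interior to $P$). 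You use $3$-connectivity only for the easy direction (a surviving local bridge on a branch of length at least two can be rerouted again; a local bridge on a branch of length one must be excluded separately, which you do not do), and that cannot close the argument. To repair the proof you would need the anchoring/extremal argument of \cite{K6,MT} (or of Juvan, Marin\v{c}ek and Mohar \cite{moharal1}), not a potential that decreases under arbitrary proper reroutings, since no such potential exists.
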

Note that the last statement trivially follows because proper routings only change a branch but do not changes any branch vertex.
Note also that we can perform a sequence of proper reroutings in $O(n)$ time such that every $W'$-bridge is stable (see \cite{moharal1}).
Hence we obtain the following useful result.

\begin{lemma}
\label{local} \showlabel{local}
Let $G$ and the subgraphs $F'_1,\dots,F'_l$ be as in Lemma \ref{expansion}, and suppose that $G$ is 3-connected.
In $O(n)$ time, we can modify all the subgraphs $F'_1,\dots,F'_l$ by a sequence of proper reroutings,
so that every $F'_i$-bridge is stable for $i=1,\dots,l$.
\end{lemma}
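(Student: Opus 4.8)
The plan is to apply the preceding rerouting lemma (essentially due to Tutte) to each of the subgraphs $F'_1,\dots,F'_l$ separately. These subgraphs can be treated one at a time and independently: a proper rerouting of $F'_i$ replaces a segment of a branch of $F'_i$ by a path running through one of $F'_i$'s own bridges, so it does not interact with any $F'_j$ for $j\neq i$. Concretely, for each $i$ I would invoke the preceding lemma with $W=F'_i$ — recalling that $G$ is $3$-connected, as required — to obtain a subgraph $F''_i$ obtained from $F'_i$ by a sequence of proper reroutings such that every $F''_i$-bridge is stable, and then rename $F''_i$ as $F'_i$.

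For the running time, the remark following that lemma (via \cite{moharal1}) says a stabilizing sequence of proper reroutings can be produced in $O(n)$ time. The point that makes the whole procedure linear is that the number of subgraphs is $l\le N(g,k)$ by Lemma~\ref{expansion}, i.e.\ a constant depending only on $g$ and $k$; hence running the $O(n)$-time stabilization $l$ times costs $l\cdot O(n)=O(n)$ in total.

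Finally I would check that the other features of the family recorded in Lemma~\ref{expansion} survive. A proper rerouting alters only the interior of a branch and keeps every branch vertex in place, so $\bsize(F''_i)=\bsize(F'_i)\le l'(g,k)$, and $F''_i$ remains embedded in $S$. The one point that genuinely needs an argument — and which I expect to be the main obstacle — is that the extension property (items~(4) and~(5) of Lemma~\ref{expansion}) is preserved, i.e.\ that the embedding of $F''_i$ induced by any face-width-$\ge k$ embedding $\Pi$ of $G$ still has face-width $k$, so that the stabilized family still captures every face-width-$\ge k$ embedding of $G$. For a single proper rerouting $W\to W'$ along a path $Q$ in a $W$-bridge $B$ whose attachments all lie on a branch $P$, one observes that in $\Pi$ the whole of $B$ (hence $Q$) lies inside a single face of $\Pi|_W$, while $W$ and $W'$ agree outside that face; a closed curve witnessing face-width $<k$ for $\Pi|_{W'}$ can then be rerouted, within that face, off the interior vertices of $Q$ and off the old segment of $P$ without creating new intersections with the common part of $W$ and $W'$, contradicting that $\Pi|_W$ has face-width $k$. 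This is a standard kind of rerouting argument (cf.\ \cite{MT,K6}), but it is the technical heart of the claim; everything else is immediate once one notes that $l$ is a constant.
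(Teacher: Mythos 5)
Your proposal is correct and follows essentially the same route as the paper: apply the preceding Tutte-type rerouting lemma (with the $O(n)$ stabilization procedure of \cite{moharal1}) to each $F'_i$ separately, and use the fact that $l\le N(g,k)$ is a constant depending only on $g,k$ to get total time $O(n)$. Your additional verification that branch vertices, the embedding, and the extension/face-width properties are preserved is more explicit than the paper, which dismisses this with the remark that proper reroutings only alter the interior of a branch within a single face, but it is the same argument in substance.
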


\section{Homotopy and non-contractible curve}
\label{sechom}
\showlabel{sechom}

%
%

In this section, we discuss homotopy on a surface.
We now follow the notation in \cite{mmohar}.
Let $S$ be a surface. A (closed) \emph{curve} in $S$ is a continuous mapping
from $S^1$ to $S$, where $S^1$ denotes the sphere. The curve is
{\it simple} if it is a l-l mapping. A curve $\gamma$ will usually be identified with its
image $\gamma(S^1)$ in $S$, particularly when considering topological properties:
simple closed curves on $S$ correspond to subsets of $S$, homeomorphic to
the sphere. If $G$ is a graph embedded in $S$ then any  cycle in $G$ may
also be viewed as a simple closed curve in $S$.

Two-sided simple closed curves are either bounding (i.e., $S \backslash \gamma(S^1)$ has two
connected components) or non-bounding ($S \backslash \gamma(S^1)$ is connected). One-sided
simple closed curves are always non-bounding.
Recall that closed
curves $\gamma_0, \gamma_1$ from $S^1$ to a surface $S$ are \emph{homotopic} if there is a continuous mapping
$H: S^1 \times [0, 1]$ to $S$ such that $H(s, 0) = \gamma_0(s)$ and $H(s, 1) = \gamma_1 (s)$ for each
$s \in S^1$. The mapping $H$ itself is called a \emph{homotopy} between $\gamma_0$. and $\gamma_1$.

If for
some $s_0 \in S^1$, $\gamma_0(s_0) = \gamma_1(s_0) = v_0$, and there is a homotopy $H$ between $\gamma_0$
and $\gamma_1$ such that $H(s_0, t) = v_0$ for all $t \in [0, 1]$ then the two curves are said
to be \emph{homotopic relative to the point $v_0$}. To distinguish these two types of
homotopy we sometimes use the name \emph{free homotopy} for the general case,
and homotopy in $(S, v_0)$ for the case of homotopy relative to $v_0$.
Homotopy gives rise to the equivalence relation, also termed homotopy,
and the corresponding equivalence classes are called \emph{homotopy classes}. The
trivial homotopy class, for instance, is the class of the constant mapping.

\begin{lemma}
\label{homology} \showlabel{homology}
Let $F$ be a subgraph of $G$ that is embedded in $S$ with face-width $l \geq 2$.
We can specify at most $\bsize(F)^l$ different nontrivial homotopy classes such that any noncontractible curve hitting exactly $l$ vertices of $G$ (in an embedding of $G$ that extends the embedding of $F$) must lie in one of these homotopy classes.
\end{lemma}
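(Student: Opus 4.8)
The plan is to bound the number of homotopy classes of noncontractible curves that hit exactly $l$ vertices of $G$ by first observing that, since $F$ has face-width $l$, any such curve must in fact pass through exactly $l$ vertices of $F$ (any curve meeting $G$ in fewer than $l$ vertices of $F$ would be contractible, and a curve meeting $G$ in exactly $l$ vertices none of which lie on $F$ cannot exist when the $F$-embedding already has face-width $l$; the curve can be pushed off the bridges into $F$'s faces). So I would reduce to counting homotopy classes of curves that hit exactly $l$ prescribed vertices of the embedded subgraph $F$.

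Next I would normalize the curves combinatorially. Given a noncontractible curve $\gamma$ hitting exactly $l$ vertices of $F$, I would homotope it so that between consecutive hit-vertices it runs through faces of the embedding of $F$, and each such arc can be chosen to lie in a single face and to be "tight" — in particular homotopic to a subpath of a facial walk. The key combinatorial object is then the cyclic sequence of $l$ vertices of $F$ that $\gamma$ visits, together with the sequence of faces used in between. Since the branch vertices of $F$ number at most $\bsize(F)$, and each hit vertex lies in some branch, the relevant data is essentially which branches (equivalently, which of at most $\bsize(F)$ "slots") the curve visits in order: there are at most $\bsize(F)^l$ such cyclic sequences. I would then argue that the homotopy class of $\gamma$ is determined by this data up to finitely many choices that are already absorbed — more carefully, two curves visiting the same cyclic sequence of branches and, between consecutive visits, homotopic (rel endpoints, within faces) arcs are freely homotopic; and within a single face (a disk, since face-width $\ge 2$) the arc connecting two boundary points is unique up to homotopy. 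This gives the bound $\bsize(F)^l$ on the number of homotopy classes.

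The main obstacle is the normalization step: showing that every noncontractible curve meeting $G$ in exactly $l$ vertices can be homotoped to a curve of the above "canonical" form meeting only $F$, and that this homotopy does not change which homotopy class we are in — in other words, that the passage from $G$ to $F$ (pushing the curve off bridges) is homotopically faithful. This uses that the embedding of $F$ is already of face-width $l$, so no shortcut through a bridge can reduce the number of intersection points below $l$, forcing the curve to use exactly the $F$-vertices; and it uses that faces of the $F$-embedding are disks (face-width $\ge 2$), so arcs inside faces are homotopically rigid. A secondary subtlety is handling the branch vertices versus internal vertices of branches: since a curve crossing the interior of a branch at a non-branch vertex can be slid to an endpoint of that branch (the branch is just a path, so the curve locally separates the two sides and can be pushed along), the only genuine data is the cyclic list of branches visited, of which there are at most $\bsize(F)^l$. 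I would also remark that the trivial class is excluded by hypothesis, so all the counted classes are nontrivial, as claimed.
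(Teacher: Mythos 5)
Your proposal follows essentially the same route as the paper's proof: observe that since the embedding of $F$ already has face-width $l$, any such curve must hit exactly $l$ vertices of $F$; use the fact that every face of the embedding of $F$ is a disk (face-width $\geq 2$) to homotope the curve into a canonical position determined by the branch vertices it visits; and count at most $\bsize(F)^l$ resulting choices, one homotopy class per choice. Your extra care about normalizing the arcs within faces and sliding hit vertices along branches is just a more explicit version of the paper's step ``we can modify $\gamma$ so that it only hits branch vertices of $F$,'' so the argument is correct and matches the paper's.
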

\begin{proof}
By the existence of the embedding of $F$, any noncontractible curve $\gamma$ hitting exactly $l$ vertices of $G$ (in an embedding of $G$ that extends the embedding of $F$) must hit exactly $l$ vertices of $F$.  Let us consider this curve $\gamma$ in the embedding of $F$.
Since each face of $F$ in this embedding bounds a disk (because face-width of this embedding is at least $l \geq 2$, see \cite{MT}),
we can modify $\gamma$ so that it only hits branch vertices of $F$ (and moreover the resulting curve is homotopic to $\gamma$).
There are
at most $\bsize(F)^l$ possible choices of $l$ branch vertices of $F$. Thus there are also at most $\bsize(F)^l$
different nontrivial homotopy classes such that any noncontractible curve hitting exactly $l$ vertices of $G$ (in an embedding of $G$ that extends the embedding of $F$) must lie in one of these homotopy classes.

\end{proof}

{\bf Remark:} By the proof of Lemma \ref{homology}, once an embedding of $F$ (of face-width $l$) is given, we can easily find,
in $\bsize(F)^l$ time,
at most $\bsize(F)^l$ different nontrivial homotopy classes such that any noncontractible curve hitting exactly $l$ vertices of $G$ (in an embedding of $G$ that extends the embedding of $F$) must lie in one of these homotopy classes.

\medskip

We also give the following lemma which is useful in our proof when a
given non-contractible curve is not orientation-preserving.

\begin{lemma}
\label{project} \showlabel{project} Let $G$ be a graph embedded in a
non-orientable surface $S$ of the Euler genus $g$. If there is a
non-contractible curve $C$  that is not orientation-preserving such
that $C$ hits exactly one vertex, then $G$ can be embedded in a
surface $S'$ of smaller Euler genus $g'$.
\end{lemma}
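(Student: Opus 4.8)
The plan is to exploit the standard topological fact that a one-sided simple closed curve has a M\"obius band as a regular neighborhood, together with the hypothesis that $C$ meets $G$ in a single vertex $v$ (and crosses no edges). First I would pass from the arbitrary non-contractible, non-orientation-preserving curve $C$ to a \emph{simple} closed curve: since $C$ hits $G$ only at $v$ and does not cross edges, the portion of $C$ lying inside each face it traverses can be taken to be a simple arc within that face, and any self-intersections of $C$ away from $v$ can be removed by local surgery inside faces without changing the intersection pattern with $G$; moreover a standard argument (as in Mohar--Thomassen, or Vitray \cite{vite}) lets us assume the resulting curve is still non-contractible and one-sided after this cleanup (if it became contractible or two-sided we could re-route to recover a one-sided non-contractible curve through $v$, because the original homotopy/homology class is one-sided). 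So we may assume $C$ is a simple one-sided closed curve meeting $G$ exactly at $v$.

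Next I would cut the surface along $C$. Because $C$ is one-sided, a regular neighborhood $N$ of $C$ is a M\"obius band, and cutting $S$ along $C$ replaces $N$ by an annulus, i.e.\ it caps off one crosscap: the resulting surface $S'$ has Euler genus exactly $g-1$ (equivalently $\chi(S') = \chi(S)+1$). The only place where $C$ meets $G$ is $v$, so after cutting, the vertex $v$ is split into a single arc on the boundary of the removed M\"obius band — crucially, since $v$ is the \emph{only} intersection point and $C$ is one-sided, $v$ is not actually disconnected into two vertices in a way that separates edges across a handle: the edges formerly incident with $v$ now all attach to $v$ (or to the two copies $v_1,v_2$ lying on the new boundary circle). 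I would then glue a disk onto the new boundary circle, re-identifying $v_1$ and $v_2$ into a single vertex $v$ inside that disk; this re-embeds $G$, now drawn in $S'$. One must check that every face of the new embedding is a disk: the faces of $G$ in $S$ that $C$ passed through get merged, and the new face bounded by the capping disk is a disk by construction, so the embedding in $S'$ is a genuine $2$-cell embedding (and if it fails to be $2$-cell, we only improve matters, since $G$ then embeds in a surface of even smaller Euler genus).

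The remaining bookkeeping is to confirm the Euler-genus drop is by exactly one (not two): cutting along a one-sided curve removes one crosscap, whereas cutting along a two-sided non-separating curve would remove a handle (genus drop two) — this is precisely why the hypothesis ``not orientation-preserving'' is needed, and it is consistent with Remark~1 in the overview, where the orientation-preserving order-one case is the one that does not reduce genus. The main obstacle, and the only subtle point, is the first step: justifying that the given curve $C$ — which a priori may be non-simple and may wander through many faces — can be replaced by a simple one-sided non-contractible curve meeting $G$ in a single vertex, so that the clean ``cut along a M\"obius band'' surgery applies. This is exactly the content for which the paper invokes Vitray \cite{vite}; I would cite that result to handle the reduction to a simple one-sided curve, and then the cutting-and-capping argument above finishes the proof.
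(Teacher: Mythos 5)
Your proposal is correct and is essentially the paper's argument in topological clothing: the paper performs the same crosscap-removal surgery combinatorially, noting that $C$ splits the edges at the single vertex into two parts and that twisting one part (reversing its order in the rotation) yields an embedding of smaller Euler genus, which is exactly your cut-along-the-M\"obius-band, cap-with-a-disk, and re-merge-the-two-copies-of-$v$ construction. The reduction to a simple one-sided curve that you flag is likewise glossed over in the paper via the citation to Vitray, and is harmless here since the 2-cell assumption lets one replace the arc of $C$ inside its face by a simple arc in the same homotopy class.
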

This lemma is essentially due to Vitray \cite{vite}.
See more details in Robertson and Vitray \cite{RV}.
The proof goes
as follows. Suppose $x$ is the only vertex that is hit by $C$. Then
$C$ divides the edges incident with $x$ into two parts. Twisting the
edges of one part by reversing their order in the embedding
transforms the embedding to an embedding in a surface $S'$ of
smaller Euler genus $g'$. Thus the lemma follows.

\section{Face-Width Two Case}
\label{sectwo}
\showlabel{sectwo}

Let $G$ be a 3-connected graph that can be embedded in a surface $S$
of Euler genus $g$. As mentioned in Lemma \ref{finitely}, there is a
constant $f(g)$ such that every graph $G$ admits at most $f(g)$
polyhedral embeddings. In this lemma, as mentioned before, we have
to require the face-width of the embedding to be at least 3 (i.e,
polyhedral embedding), since there are 3-connected graphs with
exponentially many non-polyhedral embeddings in any surface (other
than the sphere). So if $G$ does not have a polyhedral embedding in $S$, then
we cannot use Lemma \ref{finitely}.

We now restrict our attention to the face-width two embedding case, i.e, a given graph $G$ does not have
a polyhedral embedding in a surface $S$ but has a face-width two embedding in $S$. Moreover
we assume that $G$ cannot be embedded in a surface $S'$ of
smaller Euler genus (hence the Euler genus of $S$ is positive).

By Theorem \ref{forbidden}, we can in $O(n)$ time find
all graphs ${\bf F}= \{F_1,\dots,F_l\}$ with their embeddings
$II=\{II_1,\dots,II_i\}$ of face-width two, respectively,  with the following properties: each of them is a surface minor of
an embedding of $G$, and each embedding of ${\bf F}$ is a minimal
embedding of face-width $k =2$. Note that some two graphs in ${\bf F}$ may be
isomorphic, but their embeddings are different.

By Lemmas \ref{expansion} and \ref{local} (and the remark right
after Lemma \ref{expansion}), we can in $O(n)$ time obtain a family
of subgraphs ${\bf F'}= \{F'_1,\dots,F'_l\}$ of $G$ such that the
following holds:

\begin{enumerate}
\myitemsep
\item
For all $i$, $F'_i$ is obtained from the surface minor $F_i$ by reversing the minor-operations.
\item
For all $i$, each $F'_i$-bridge is stable.
\item
For all $i$, $F'_i$ is embedded in $S$ of face-width two, and this embedding is
extended from $II_i$.
\item
$l \leq N(g)$ for some function $N$ of $g$.
\item
$\bsize(F'_i) \leq l'(g)$ for all $i$, where $l'$ is some function of $g$.
\item
the embedding of $F'_i$ can be extended to an embedding of $G$ in $S$, by embedding each $F'_i$-bridge in some face of $F'_i$.
\item
For any embedding of $G$ of face-width exactly two in a surface $S$,
there is a subgraph $F'_i$ (with its corresponding embedding $II'_i$ of face-width two)
in ${\bf F'}$ such that the embedding $II'_i$ of $F'_i$ can be
extended to this embedding of $G$.
\end{enumerate}

We now prove the following main result in this section.
\begin{theorem}
\label{algotwo}\showlabel{algotwo}
 Let $G,S,{\bf F'}$ be as above.
Suppose that $G$ does not have a face-width three embedding in $S$. Fix one graph $F' \in {\bf F'}$ with the face-width two embedding in $S$. Fix one
homotopy class $H$ of $S$.

Suppose furthermore that if $g = 2$, then $H$ either is surface-separating or hits only one crosscap.

In $O(n)$ time, we can find two non-contractible curves
$C_1,C_2$  in $H$ that hit exactly two vertices in some embedding of $G$ that extends the embedding of $F'$, with  the following
properties:
\begin{enumerate}
\item
There is a cylinder with the outer face $F_1$ and
the inner face $F_2$ with the following property:
Suppose that $C_i$ hits only $x_i,y_i$ in some embedding of $G$ of face-width two for $i=1,2$. Then $x_1, y_1$ are contained in $F_1$ and
$x_2, y_2$ are contained in $F_2$. Moreover, $F_i$ is obtained  by cutting along $C_i$ for $i=1,2$.
\item
For any embedding of $G$ that extends the embedding of $F'$, there is no curve $C'$ in $H$ such that $C'$ hits exactly two vertices $u,v$,
and at least one of $u,v$ is outside the cylinder.
\end{enumerate}

\end{theorem}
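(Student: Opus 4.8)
The plan is to follow the four-phase scheme sketched in Section~\ref{overview}: (1) find one non-contractible curve $C'\in H$ that hits exactly two vertices $x',y'$ of $F'$ in some embedding $II'$ extending the fixed embedding $II$ of $F'$; (2) cut $G$ along $C'$ to obtain a $2$-connected graph $G'$ of Euler genus $g-1$; (3) invoke the uniqueness of the triconnected component tree of $G'$ (Theorem~\ref{3conunique}) to show this tree is a path; and (4) take the maximal cylindrical ``caps'' at the two ends of this path and glue them back along $C'$ to produce the required cylinder. I will carry these phases out in order and bound the running time at the end; the extra hypothesis on $H$ when $g=2$ is used only to guarantee that cutting along a curve of $H$ drops the Euler genus by one rather than two, so that the cylinder does not exhaust $G$ (that degenerate situation being handled separately in Theorem~\ref{algotorus}).

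\textbf{Phase 1 (Claim~\ref{clone}).} First I would prove that such a curve $C'$, and hence $x',y'$ and an embedding $II'$, can be found in $O(n)$ time. Since $II$ already has face-width two, any non-contractible curve meeting $G$ in two vertices meets $F'$ in exactly two vertices, and after pushing it off the (disk-bounded) faces of $F'$ as in the proof of Lemma~\ref{homology} it runs through a prescribed pair of branches of $F'$; because $\bsize(F')\le l'(g)$ there are only $O(1)$ candidate pairs of branches, and fixing the homotopy class $H$ pins down the winding along each branch. For each candidate I would use stability of the $F'$-bridges together with the $3$-connectivity of $G$ to reduce to $O(1)$ candidate pairs $(x',y')$ of actual vertices, attach two dummy vertices $z_1,z_2$ each joined to both $x'$ and $y'$ to both $G$ and $F'$ (forcing a curve of the desired class through $x',y'$), and apply Theorem~\ref{spec} to decide in $O(n)$ whether the augmented skeleton extends to a face-width-two embedding of the augmented graph; a positive answer for some candidate yields $C'$. (If one is content with an $O(n^3)$ bound, one simply runs Theorem~\ref{spec} once for each of the $O(n^2)$ unordered pairs $\{x',y'\}$ of vertices, which is why that version of the theorem is much shorter.) I expect this phase to be the main obstacle.

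\textbf{Phase 2 (cutting).} Split $x'$ into $x'_1,x'_2$ and $y'$ into $y'_1,y'_2$ along the two sides of $C'$, distributing the incident edges accordingly, add the edges $x'_1y'_1$ and $x'_2y'_2$, and call the result $G'$, which is embedded in a surface of Euler genus $g-1$. A cutvertex of $G'$ would, after re-identifying the two copies of $x'$ and the two copies of $y'$, either remain a cutvertex of $G$ — impossible by $3$-connectivity — or witness a non-contractible curve of length one in the embedding $II'$ of $G$, contradicting face-width two; hence $G'$ is $2$-connected and therefore contains disjoint paths $P_1$ from $x'_1$ to $x'_2$ and $P_2$ from $y'_1$ to $y'_2$.

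\textbf{Phases 3 and 4 (the cylinder and canonicity).} Apply Theorem~\ref{3conunique} to obtain the \emph{unique} triconnected component tree decomposition $(T,R)$ of $G'$ in $O(n)$ time. If some adhesion $R_t\cap R_{t'}$ missed $P_1$ or $P_2$, the corresponding $2$-separation of $G'$ — whose two sides each contain a vertex outside the separator, since the added edges $x'_1y'_1$ and $x'_2y'_2$ keep the copies of $x'$ (resp.\ $y'$) on the same side — would re-identify to a $2$-separation of $G$, contradicting $3$-connectivity; thus every adhesion crosses $C'$, the adhesions are linearly ordered along both $P_1$ and $P_2$, and so $T$ is a path from $a$ (with $x'_1,y'_1\in R_a$) to $b$ (with $x'_2,y'_2\in R_b$). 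Now let $P'$ be the longest initial subpath at $a$ for which $\bigcup_{t\in P'}R_t$ is a cylinder with $x'_1,y'_1$ on one boundary face and the leaving adhesion $\{v_1,v_2\}$ on the other, and symmetrically let $P''$ be the longest such final subpath at $b$ with leaving adhesion $\{v'_1,v'_2\}$; both are found in $O(n)$ by scanning the path. Re-identifying $x'_1$ with $x'_2$ and $y'_1$ with $y'_2$ glues the $x',y'$-boundary face of $\bigcup_{t\in P'}R_t$ to that of $\bigcup_{t\in P''}R_t$, producing a cylinder $L$ through which $C'$ runs and whose two boundary faces are exactly those obtained by cutting $G$ along the non-contractible curves $C_1$ (through $v_1,v_2$) and $C_2$ (through $v'_1,v'_2$); since $C_1$ and $C_2$ lie inside the respective cylindrical caps they are homotopic to $C'$, hence lie in $H$, and setting $(x_1,y_1)=(v_1,v_2)$ and $(x_2,y_2)=(v'_1,v'_2)$ gives the first claimed property. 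For the second property, let $C''\in H$ hit exactly two vertices $u,v$ in some embedding extending $F'$; cutting along $C''$ and re-running Phases~2--3 produces a path-shaped triconnected decomposition of the cut graph which, by the uniqueness in Theorem~\ref{3conunique}, is compatible with $(T,R)$, so $\{u,v\}$ lies on an adhesion of $(T,R)$, and by the maximality in the choice of $P'$ and $P''$ this adhesion lies between $\{v_1,v_2\}$ and $\{v'_1,v'_2\}$, i.e.\ $u,v\in L$. All steps other than Phase~1 run in $O(n)$ by Theorems~\ref{3conunique} and~\ref{spec}, completing the proof.
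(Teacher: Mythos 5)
Your overall architecture (find one curve, cut, use the unique triconnected component tree of the cut graph, take the two extremal cylindrical caps and glue them back) is the same as the paper's, but there is a genuine gap at exactly the step you flag as "the main obstacle": Phase 1, i.e.\ Claim~\ref{clone}. You assert that "stability of the $F'$-bridges together with the $3$-connectivity of $G$" reduces the search to $O(1)$ candidate pairs $(x',y')$ of actual vertices, after which each candidate is tested with Theorem~\ref{spec}. No argument is given for this reduction, and it is not how the linear-time bound is actually achieved: in general there can be linearly many vertices lying on curves of the class $H$ (this is the whole source of the exponentially many embeddings), and what one can bound by a constant is only the number of candidate pairs of adjacent faces/branches of $F'$. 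Locating a concrete pair $(u,v)$ inside a fixed pair of branches in $O(n)$ time is the hard content: the paper needs the ``Canonical Lemma'' (whose proof is deferred to the end of Section~\ref{secone} and requires augmenting $F'$ by the genus-minimal subgraph $F$ of Theorem~\ref{obst}, precisely because $F'$ itself may embed in a smaller-genus surface, e.g.\ $K_4$ in the projective plane) to show that all relevant bridges are forced to one side of any hypothetical curve, and then finds $u,v$ not by testing $O(1)$ guesses but by computing a $2$-vertex cut separating the two boundary cycles of an auxiliary planar cylinder built from the trimmed branch segments $P',P''$ and their bridges. Without something playing the role of this canonicity/bridge-placement lemma, your "reduce to $O(1)$ candidate pairs, then run Theorem~\ref{spec}" plan is an unsupported assertion rather than a proof, and only your parenthetical $O(n^3)$ fallback is actually justified.

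Two smaller points. First, the theorem allows $H$ to be non-orientation-preserving; your Phase 2 splits $x',y'$ into "two sides" and routes $P_1$ from $x'_1$ to $x'_2$ and $P_2$ from $y'_1$ to $y'_2$, which is only the orientation-preserving picture — in the other case the edges are twisted, the disjoint paths join $x'_1$ to $y'_1$ and $x'_2$ to $y'_2$, and the "cylinder" has to be reinterpreted as a planar graph with the four marked vertices on its outer face (Remark 3 of the paper). Second, your canonicity argument for property 2 claims that for any other curve $C''\in H$ hitting $u,v$ in some (possibly different) embedding, the pair $\{u,v\}$ must appear as an adhesion of the fixed decomposition $(T,R)$; since the embedding witnessing $C''$ need not extend the one you cut along, this again requires the unique-bridge-placement statement of the Canonical Lemma to compare the two situations, and uniqueness of the tree decomposition (Theorem~\ref{3conunique}) alone does not deliver it.
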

We note that the two curves $C_1$ and $C_2$ may share a vertex (or even two vertices).

\medskip

{\bf Remark 1.} The following proof is somewhat complicated and lengthly. In addition, one crucial lemma, which we call ``Canonical Lemma'' will
be shown at the end of Section \ref{secone}, because it is the most convenient for us to present, first, the proof of ``Canonical Claim'' in the proof
of Lemma \ref{faceone}. Intuition behind the proof can be found in Figure \ref{fige}.
What we want is to take a curve $C_1$ hitting only $c,d$ and a curve hitting $C_2$ hitting only $e,f$.
Then we obtain the graph bounded by $C_1$ and $C_2$, which is the ''cylinder'' we want to take and which contains all flexible bridges.
Then we want to recurse our algorithm to
the rest of the graph. Note that all the non-contractible curves that are homotopic to $C_1$ (and $C_2$) and that hits exactly two vertices are in
this ``thin'' cylinder. So the reader may consult Figure \ref{fige}.

However, if we are
satisfied with an $O(n^3)$ algorithm in Theorem \ref{algotwo}, the proof will be much easier and shorter.
The difficulty of the proof below actually comes from Claim \ref{clone}.
But if we are satisfied with an $O(n^3)$ time algorithm, we can do this in a much easier and simpler way, as follows:
just guess
two distinct vertices of $F'$\footnote{As remarked below, we only have to guess two vertices in $F'$}, and try to see if we can cut along these two vertices of $G$ to
obtain an embedding of smaller Euler genus, by applying
Theorem \ref{spec} with the corresponding embedding of $F'$ (this is also done in the proof of Claim \ref{clone} (see Line 7
in the proof of Claim \ref{clone})). For this argument, we have to assume ``Canonical Lemma'', but
the rest of the arguments are exactly the same as the proof below.

\medskip

{\bf Remark 2.} If $g=2$ and $H$ neither is surface-separating nor hits only one crosscap, then the following proof does not work. This is
exactly the case when $x_1=x_2$ and $y_1=y_2$ (i.e, the cylinder
is ``degenerated'' in this sense).
We need Theorem \ref{algotorus}.

\medskip

{\bf Remark 3.} If $H$ is not orientation-preserving,
then a curve $\bar C$ in $H$ divides the edges incident with any vertex $x$
of $\bar C$ into two parts (See Figure \ref{figb}).  When we cut the graph $G$ along $\bar C$, we
obtain the embedding of $G'$ in a surface $S'$ of smaller
Euler genus $g'$ such that the vertices that are hit by $\bar C$ can be
``splitted'' into two vertices, and twisting the edges of
one part of the vertex $x$ in $\bar C$ by reversing their order in the
embedding of $G$ transforms to the embedding of $G'$  in a surface $S'$ of smaller Euler genus $g'$ (See Figures \ref{figb}, \ref{figc} and \ref{figd}).

As we have discussed ``Remark for the non orientation-preserving case'' in Overview of our algorithm, we need to change the cylinder.
Namely, if we obtain two curves $C_1$ and $C_2$ in $H$ as above,
after cutting along $C_1$ and $C_2$, we can obtain a planar graph with the outer face boundary $W$
that contains both the vertices in $C_1$ and in $C_2$ (instead of getting a cylinder as above), and moreover
$x_1,y_2,x_2,y_1$ appear in this order listed when we walk along $W$.
We also include this case as a ``cylinder''. See Figure \ref{figh}.

\medskip

\medskip

\begin{proof}
Since this embedding of $F'$ is a face-width two embedding in $S$,
if there is a non-contractible curve $C$ in $H$ that hits exactly two vertices in an
embedding of $G$ that extends the embedding of $F'$, then $C$ must
hit two vertices in $F'$.

We first show the following, which is trying to find just one
non-contractible curve $\bar C$ in $H$:

\begin{claim}\label{clone}
In $O(n)$ time, we can find a non-contractible curve $\bar C$ in $H$ that hits exactly two vertices in an embedding $II$ of $G$ that extends the embedding of $F'$, if it exists.
\end{claim}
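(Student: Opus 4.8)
The plan is to reduce the search for a non-contractible curve $\bar C$ in $H$ hitting exactly two vertices to a bounded number of instances of the ``subgraph embedding extension'' problem solvable by Theorem \ref{spec}. First, since any such curve $\bar C$ (in an embedding of $G$ extending the embedding of $F'$) must hit two vertices of $F'$, and since each face of $F'$ is a disk and $H$ is fixed, we can appeal to Lemma \ref{homology}: after a homotopy we may assume $\bar C$ passes only through branch vertices of $F'$, and there are at most $\bsize(F')^2 \le l'(g)^2$ choices for the unordered pair of branch vertices that $\bar C$ could ``essentially'' run between. However, $\bar C$ need not hit branch vertices of $F'$ itself — it hits two vertices $x',y'$ lying on branches of $F'$. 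So for each of the $O(1)$ homotopy-compatible pairs of branch vertices, $\bar C$ runs along a face sequence of $F'$ whose combinatorial type is one of boundedly many options; this pins down, up to $O(1)$ choices, the two branches $P_1, P_2$ of $F'$ containing $x'$ and $y'$, and which faces of $F'$ the curve crosses.

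Next, for each fixed such combinatorial pattern, I would ``guess'' the action of cutting: add two dummy vertices $z_1, z_2$ (and a couple of dummy edges) that encode the requirement that there be a non-contractible curve of homotopy type $H$ through exactly two vertices lying on the prescribed branches $P_1,P_2$ and crossing the prescribed faces. Concretely, this corresponds to prescribing a partial rotation/signature at the (as yet unknown) vertices $x', y'$ — equivalently, enlarging $F'$ to a graph $F''$ together with an embedding $II''$ in $S$ that forces such a curve to exist, where $F''$ has only boundedly many more branch vertices than $F'$. The point is that whether $G$ has an embedding extending $II''$ can be tested in $O(n)$ time by Theorem \ref{spec}; and if it does, the resulting embedding of $G$ extends the embedding of $F'$ and contains the desired curve $\bar C$, from which $x',y'$ and the curve itself are read off. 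Conversely, if no pattern yields a positive answer, then no such $\bar C$ exists. Since there are only $O(1)$ patterns and each test is linear, the whole procedure runs in $O(n)$ time.

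The main obstacle is the bookkeeping in the reduction to Theorem \ref{spec}: one must set up $F''$ and $II''$ so that an embedding of $G$ extending $II''$ exists \emph{if and only if} $G$ has an embedding extending the embedding of $F'$ that admits a curve of homotopy type $H$ hitting exactly two vertices on the prescribed branches. The subtlety is that the two candidate vertices $x', y'$ are not fixed in advance — they range over the interiors of branches $P_1, P_2$ — so the dummy gadget must ``float'' along those branches; handling this requires the rerouting machinery of Lemma \ref{prestable}/Lemma \ref{local} to ensure all relevant bridges are stable, and it requires checking that guessing the crossed faces of $F'$ together with the homotopy class $H$ genuinely determines the isotopy type of $\bar C$ relative to $F'$. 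The non-orientation-preserving case needs the extra twist described in Remark 3 (splitting the edge-set at $x', y'$ into two parts with reversed order), but this only changes the gadget $F''$, not the structure of the argument. Once the gadget is correct, correctness of the algorithm follows immediately from Theorem \ref{spec} and the finiteness of the pattern set guaranteed by Lemma \ref{homology}.
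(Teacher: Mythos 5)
Your reduction assembles the right bounded data (the homotopy class via Lemma \ref{homology}, the pair of faces $W_1,W_2$ of $F'$, and the two branches of $F'$ on which the two hit vertices must lie), and your dummy-vertex gadget tested through Theorem \ref{spec} is exactly the test the paper uses --- but only once the two vertices $u,v$ are already specified. The genuine gap is the step where you let the gadget ``float'' along the two branches: Theorem \ref{spec} takes as input a concrete subgraph $K$ of $G$ with a fixed embedding, so the added dummy paths must be attached at specific vertices $u,v$ of $G$; you never construct a bounded family of gadgets $F''$ whose extendability is equivalent to the existential statement ``there exist $u\in P_1$, $v\in P_2$ admitting a curve of type $H$ through exactly $u$ and $v$,'' and it is not clear such a family exists, since the branches of $F'$ (unlike its set of branch vertices) may contain $\Theta(n)$ vertices and the answer depends on how the bridges attach along them. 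Enumerating the attachment pairs instead gives $\Theta(n^2)$ calls to Theorem \ref{spec}, i.e.\ precisely the $O(n^3)$ fallback the paper describes in Remark~1 after Theorem \ref{algotwo}, not the claimed $O(n)$ bound; making bridges stable via Lemma \ref{local} does not by itself remove this quantification over linearly many positions.

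What the paper does instead, and what is absent from your proposal, is a combinatorial localization of $u,v$ before any call to Theorem \ref{spec}. The Canonical Lemma (proved at the end of Section \ref{secone} using the minimal non-embeddable subgraph $F$ of Theorem \ref{obst} and an argument analogous to Lemma \ref{canon1}) shows that every $F'$-bridge with an attachment in $R_1\cup R_2$ and one outside is forced to a unique side of any such curve, and that this placement (or a certificate of non-existence) can be computed in $O(n)$ time. Bridges that ``block'' the curve then confine $u$ and $v$ to bounded-determined subpaths $P'\subseteq R_1$ and $P''\subseteq R_2$; the graph formed by $P'\cup P''$ together with all bridges attached only there is planar and embeds in a cylinder with $P',P''$ joining the two boundary cycles, and a valid pair $u,v$ is exactly a two-vertex cut separating the boundary cycles, one cut vertex per path, found in linear time by planarity testing and a cut computation. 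Without this mechanism (or an equivalent way to resolve the existential choice of $u,v$ in linear time), your argument establishes only the $O(n^3)$ version of Claim \ref{clone}.
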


\begin{proof}
In the following proof, we do not have to distinguish the ''orientation-preserving'' case and the ''not orientation-preserving'' case for $H$,
because we only need to find ONE non-contractible curve in $H$.

For
two adjacent faces $W_1,W_2$ of $F'$ and for any two vertices $u,v$
in $V(W_1) \cap V(W_2)$ (but not in the same component of $W_1 \cap W_2$),
 we can figure out whether or not there is a non-contractible curve $C'$ in $H$ that hits exactly two
vertices $u, v$ in some embedding of $G$ that extends the embedding of $F'$ in $O(n)$ time, as follows:

\begin{quote}
We just apply Theorem \ref{spec} to $K=F' + uw_1v + uw_2v$, the embedding $\bar K$ of $K$ and $G=G + uw_1v+uw_2v$,
where $+$ means to add two paths $uw_1v$ and $uw_2v$ to $K$ and $G$, respectively
and moreover, $w_1$ must
be embedded in $W_1$, while $w_2$ must be embedded in $W_2$, to obtain the embedding $\bar K$ of $K$. See Figure \ref{fig10}.
\end{quote}

\begin{figure*}
\centering
\includegraphics[height=4cm]{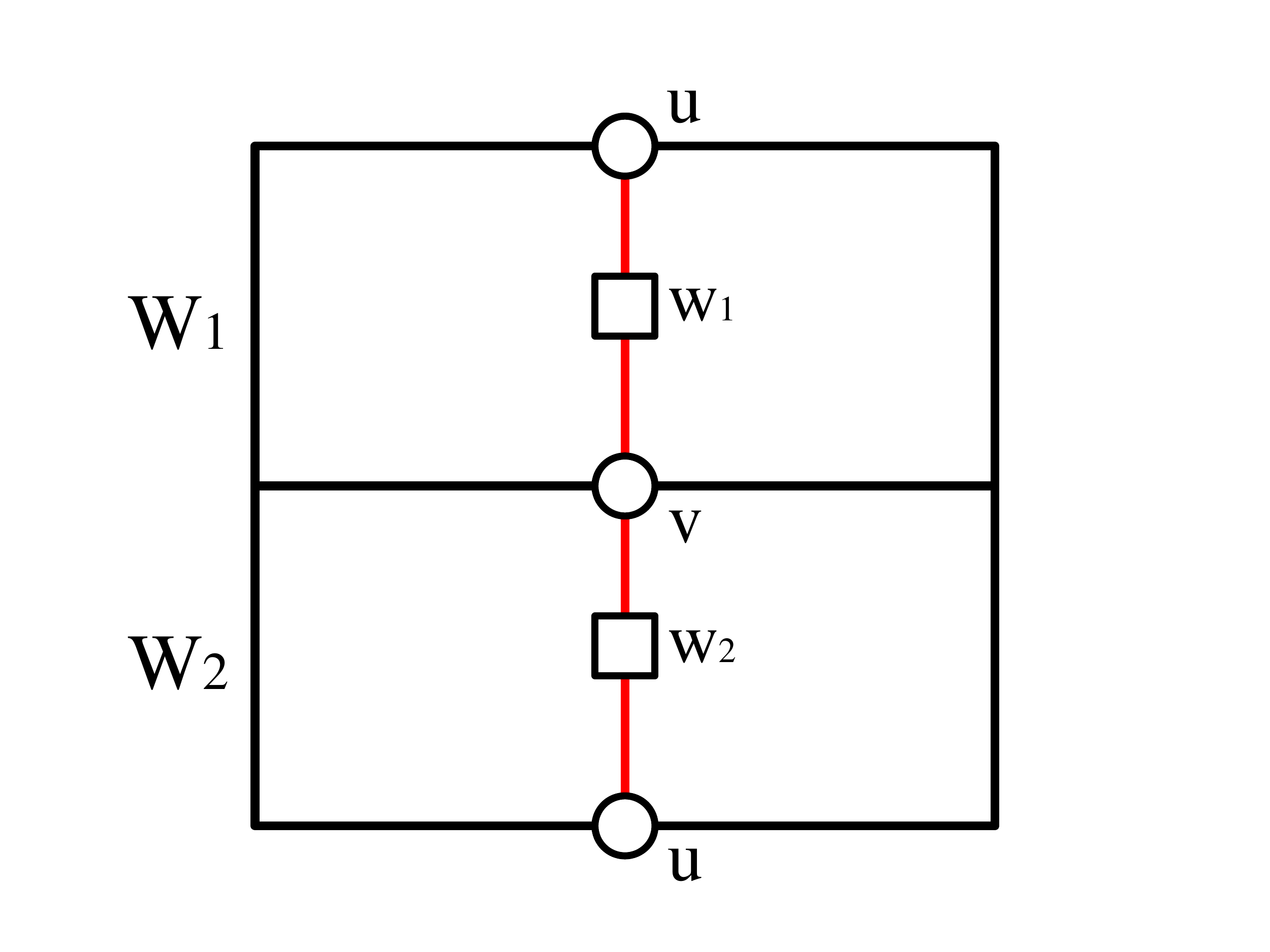}
\caption{Testing a face-width two embedding with non-contractible curve hitting only $u,v$.}
  \label{fig10}
\end{figure*}

Note that the cycle $uw_1vw_2u$ is in $H$.
Since $\bsize(F') \leq l'(g)$ for some function $l'$ of $g$,
it remains to show that, given any two adjacent faces $W_1, W_2$ of $F'$,
\begin{quote}
in $O(n)$ time, we can find the vertices $u$ and $v$
in $V(W_1) \cap V(W_2)$.
\end{quote}

Since $G$ is 3-connected (and since each $F'$-bridge is stable),
any vertex in $V(W_1) \cup V(W_2)$, except for the branch vertices of $F'$, is an attachment of some $F'$-bridge that is stable.
It follows that:\\
~\\
(1) Each bridge having an attachment in some component of $W_1 \cap W_2$
must be embedded in $W_1 \cup W_2$. Moreover, if some bridge $B$ has an attachment in some component of $W_1 \cap W_2$ but also has an
attachment in $W_1$ ($W_2$, resp.) that is not in any component of $W_1 \cap W_2$, then $B$ has to be embedded in $W_1$ ($W_2$, resp.).\par
~\\
Let us consider branches $R_1, \dots R_k$ of $W_1 \cap W_2$ for some
$k$. Note that these branches are paths (some branch vertex $v$ could be in $W_1 \cap W_2$ though). Again, since the embedding of $F'$ is a face-with two embedding in $S$ and since
each $F'$-bridge is stable, it follows that:\par
~\\
(2) If $B$ is a bridge having attachments in two branches of $W_1$, then $B$ has to be uniquely embedded in $W_1$ except for the case
that $B$ has attachments only in two branches of $W_1 \cap W_2$.\par
~\\
Let us assume that $u$ is in $R_1$ and $v$ is in $R_2$.
We now give the following ``Canonical Lemma'', which is crucial in the proof of Claim \ref{clone}.
Since we need some tool in Section \ref{secone} and in addition,
it is the most convenient for us to present, first, the proofs of Lemma \ref{canon1} and ``Canonical Claim (1)'' in the proof
of Lemma \ref{faceone}, the proof will be given at the end of Section \ref{secone}.

\begin{quote}{\bf Canonical Lemma.}
 If there is a non-contractible curve $C$ that hits only $u$ and $v$ in the embedding of $G$ that extends the embedding of $F'$,
 then all $F'$-bridges with at least one attachment in $R_1 \cup R_2$ and with at least one attachment outside $R_1 \cup R_2$  are uniquely placed  
 into the ``left'' side and the ``right side'' (or into the ``one'' part and the ``other'' part, if $C$ is non orientation-preserving) 
 of $C$ in $W_1 \cup W_2$. It follows that $C$ uniquely splits the incidents edges of both $u$ and $v$ into the ``left'' side and the ``right side'' (if $C$ is non orientation-preserving, then the ``left'' side and the ``right side'' are replaced by the ``one'' part and the ``other'' part).\par
 
 Moreover, given $W_1, W_2$, in $O(n)$ time, either we can place all $F'$-bridges $\mathcal{B}$ into the ``left'' side and the ``right side'' (or into the ``one'' part and the ``other'' part, if $C$ is non orientation-preserving) of $C$ in  $W_1 \cup W_2$, or we can conclude that such a non-contractible curve $C$ does not exist.
\end{quote}

Assume that there is a non-contractible curve $C$ that hits only $u$ and $v$ as in Canonical Lemma.  Then 
 we can place all $F'$-bridges $\mathcal{B}$ into the ``left'' side and the ``right side'' (or into the ``one'' part and the ``other'' part, if $C$ is non orientation-preserving) of $C$ in $W_1 \cup W_2$, in $O(n)$ time, as claimed in Canonical Lemma.
Let $a_1, b_1$ be the endvertices of $R_1$, and let $a_2, b_2$ be the endvertices of $R_2$, respectively. So they are branch vertices.

By the canonical lemma and since every vertex of $R_1 \cup R_2$ (except possibly for $a_1, a_2, b_1, b_2$) is an attachment of an $F'$-bridge,  we have the following:
\begin{quote}
There are at most two vertices $a', b'$ in $R_1$ with $a'$ closer to $a_1$
with the following properties:

For each vertex $z'$ between $a_1$ and $a'$
(except for $a_1, a'$),
there is a $F'$-bridge $M$ that has an attachment between $a_1$ and $a'$ such that $M$ blocks a non-contractible
curve of order exactly two that is homotopic to $C$ and that contains $z'$.

Moreover,
for each vertex $z'$ between $b_1$ and $b'$
(except for $b_1, b'$),
there is a $F'$-bridge $M$ that has an attachment between $b_1$ and $b'$ such that $M$ blocks a non-contractible
curve of order exactly two that is homotopic to $C$ and that contains $z'$.

The same thing also holds for $R_2$. Let $a'', b''$ be
the corresponding vertices of $a', b'$ in $R_2$.
\end{quote}

Thus we know that $u$ must be between $a'$ and $b'$ and $v$ must be
between $a''$ and $b''$. Let $P'$ be the subpath of $R_1$ between $a'$ and $b'$, and let $P''$ be the subpath of $R_2$ between $a''$ and $b''$.
By Canonical Lemma, all $F'$-bridges that have an attachment in $P' \cup P''$ must have all attachments in $P' \cup P''$.

Let $L$ be a plane graph obtained from $P' \cup P''$ together with
all $F'$-bridges with all attachments in $P' \cup P''$.
So we have an embedding in the cylinder so that $P' , P''$ are two disjoint paths from the inner cycle to the outer cycle (we can find such a planar embedding in $O(n)$ time using any planarity testing algorithm in $O(n)$ time, say \cite{HT2}).

We then find desired $u, v$
in $O(n)$ time by finding a two vertex cut that separates the inner cycle and the outer cycle of the cylinder. Note that two vertices in the vertex cut
must consist of one vertex out of the paths $P', P''$, because $P' , P''$ are two disjoint paths from the inner cycle to the outer cycle. 
Note also that there may be many choices for $u,v$ in $L$, but we only need one choice of $u,v$. 
This proves Claim \ref{clone}.
\end{proof}

\medskip

We now try to complete our proof of Theorem \ref{algotwo} by using
a non-contractible curve $\overline{C}$ obtained in Claim \ref{clone} in $O(n)$ time.

Suppose first that any curve in $H$ is orientation-preserving. Then
any curve $\bar C$ in $H$ tells us which side is ``left'' of $\bar C$ and ``right''
of $\bar C$. Suppose next that any curve in $H$ is not orientation-preserving. Then $\bar C$ divides the edges incident with any vertex $x$
of $\bar C$ into two parts (See Figure \ref{figb}). We now cut the graph $G$ (with the embedding $II$) along $\bar C$ to
obtain the embedding $\bar II$ of $G'$ in a surface $S'$ of smaller
Euler genus $g'$ such that the vertices that are hit by $\bar C$ can be
``splitted'' into two vertices.  If $C$ is orientation-preserving, we split
$\bar C$ into two $\bar C_1$ and $\bar C_2$ so that $\bar C_1$ ($\bar C_2$, resp.)
has neighbors only in the ``left'' side of $\bar C$ (right side of $\bar C$,
resp.). If $\bar C$ is not orientation-preserving, twisting the edges of
one part of any vertex $x$ of $\bar C$ by reversing their order in the
embedding $II$ of $G$ transforms to the embedding $\bar II$ of $G'$  in a surface $S'$ of smaller Euler genus $g'$ (see Figures \ref{figb}, \ref{figc} and \ref{figd}).

\bigskip

Suppose first that $\bar C$ is orientation-preserving and hits two vertices $x,y$. We
now cut the graph $G$ (with the embedding $II$) along $\bar C$ to obtain an embedding of $G'$ in a
surface $S'$ of smaller Euler genus $g'$ such that the vertices
$x,y$ are ``splitted'' into two vertices $x_1,x_2$ and $y_1,y_2$, respectively, and moreover
both $x_1$ and $y_1$ ($x_2$ and $y_2$, resp.) have neighbors only in
the ``left'' side of $\bar C$ (right side of $\bar C$, resp.). See Figure \ref{figf}.

We now add two edge $x_1y_1, x_2y_2$ and
let $G'$ be the resulting graph. We first show:

\medskip

(4) there are two vertex disjoint paths between $x_1,y_1$ and $x_2,y_2$.

\medskip

{\em Proof.} For otherwise, there is a separation $(A',B')$ of order at most one in $G'$ such that
$A'$ contains $x_1,y_1$ and $B'$ contains $x_2,y_2$.
In this case, $A' \cap B'$ induces a non-contractible curve (in $H$) in some embedding of $G$ in the surface $S$,
but this contradicts the fact that
the embedding of $G$ is a face-width two embedding. Thus such two vertex disjoint paths exist\footnote{In Figure \ref{fige}, if we cut the surface with a non-contractible curve hitting only $a,b$ or $a',b'$, then we can obtain two disjoint paths obtained by $P_1, P_2$.}.\qed

\medskip

Similarly, we show that

\medskip

(5) $G'$ is 2-connected.

\medskip

{\em Proof.}  For otherwise, there is a separation $(A, B)$ of order at most one in $G'$.
By our construction and since $G$ is 3-connected, at least one of $A-B$ and $B-A$ must contain at least two vertices of $x_1,x_2,y_1,y_2$, but this is not possible
because of $x_1y_1, x_2y_2 \in E(G')$ and the existences of the two disjoint paths by (4).  Thus $G'$ is 2-connected.\qed

\medskip

Suppose there is a non-trivial separation $(A,B)$ of order exactly two in $G'$ (i.e., $A-B\not=\emptyset$ and $B-A\not=\emptyset$). 
Since $G$ is 3-connected, so both $A$ and $B$ must contain at least one vertex of $x_1, y_1, x_2, y_2$. 
If $A$ contains at most one vertex of $x_1, y_1, x_2, y_2$, say $x_1$, then $x_1 \in A \cap B$, because $x_1y_1 \in E(G')$.
But then $A \cap B$ is also a 2-separation in $G$, a contradiction.
This implies that

\medskip

(6) for any separation $(A,B)$ of order exactly two in $G'$,
$A$ contains all of
$x_1,y_1$ or all of $x_2, y_2$. Moreover,
the two disjoint paths paths $P_1, P_2$ from $x_1,y_1$ to $x_2, y_2$
imply that $A \cap B$ consists of two vertices with one vertex in $P_1$ and the other
in $P_2$.

\medskip

We now apply Theorem~\ref{3conunique} to $G'$ to obtain a triconnected component tree decomposition $(T,R)$.
The important point here is that this triconnected component tree decomposition is unique by Theorem~\ref{3conunique}.

As shown in (6), for any $tt' \in T$, $R_t \cap R_{t'}$ must contain one vertex in $P_1$ and the other vertex in $P_2$.
This indeed implies that $T$ is a path $P$ with two endpoints $a, b$ such that
$R_a$ contains both $x_1$ and $y_1$ and $R_b$ contains both $x_2$ and $y_2$.
Thus we have the following path decomposition: We have
$R_1,R_2,\dots,R_l$ for some integer $l \geq 1$ such that
\begin{enumerate}
\item
$R_1$ contains $x_1, y_1$ and $R_l$ contains $x_2, y_2$,
\item
each $R_i$ has no separation $(A,B)$ with $R_{i-1} \cap R_i$
in $A$, $R_{i+1} \cap R_i$ in $B$ and $|A \cap B|=2$, and
\item
$|R_i \cap R_{i+1}|=2$ for $i=1,\dots,l$.
\end{enumerate}
We now try to test the following from $R_1$, and from $R_l$, respectively.
\begin{quote}
$R_i$ is a cylinder bounded by two cycles $C''_1,C''_2$ with $C''_1$ containing $R_{i-1} \cap R_i$ and
$C''_2$ containing $R_{i+1} \cap R_i$.
\end{quote}
This can be done in $O(n)$ time by the planarity testing.

Take the largest $j_1$ that satisfies the above criteria from $R_1$, and take the smallest $j_2$ that satisfies the above criteria from $R_l$.
Then it is straightforward to see that $\bigcup_{i=1}^{j_1} R_i$ induces a cylinder $T_1$ with $x_1, y_1$ in the outer face boundary $U_1$ and with $v_1,v_2$ in the inner face boundary $U_2$,
where $v_1, v_2 \in R_{j_1} \cap R_{j_1+1}$. Moreover any non-contractible curve hitting only $v_1, v_2$ is in the same homotopy class as $\bar C$.

Similarly,
$\bigcup_{i=j_2}^{l} R_i$ induces a cylinder $T_2$ with $y_2, x_2$ in the outer face boundary $U'_1$ and with $v'_1,v'_2$ in the inner face boundary $U'_2$, where $v'_1, v'_2 \in R_{j_2} \cap R_{j_2-1}$.
Again any  non-contractible curve hitting only $v'_1, v'_2$ is in the same homotopy class as $\bar C$.

Then the cylinder bounded by $U_2$ and $U'_2$ is $L_i$, and hence we obtain a desired pair $(G_i,L_i)$.

\medskip

\paragraph{Canonical issue.}
We now show that this choice allows us to be canonical, which shows the second assertion of Theorem \ref{algotwo}.
Let us give intuition from Figure \ref{fige}.
If we start with the curve hitting only $a$ and $b$, we would obtain
the cylinder bounded by curves hitting $c,d$ and $e, f$, respectively. This cylinder certainly contains the curve $C'$ hitting $a'$ and $b'$.
Even we start with the curve $C'$, we would obtain the same cylinder.

Essentially this canonical claim follows from the following three facts:
\begin{enumerate}
\item
Canonical Lemma allows us to confirm that all $F_i$-bridges with at least one attachment in $R_1 \cup R_2$ and with at least one attachment outside $R_1 \cup R_2$ are uniquely placed into the ``left'' side and the ``right side'' (or into the ``one'' part and the ``other'' part, if $C$ is non orientation-preserving)
 of $C$ in $W_1 \cup W_2$. (see
the definitions in the proof of Claim \ref{clone}),
\item
we take the extremal $R_{j_1}, R_{j_2}$, and
\item
the triconnected component tree decomposition is unique by Theorem~\ref{3conunique}.
\end{enumerate}

The first fact implies that if we can find one non-contractible curve in the same homotopy class (as $\hat C$) that hits exactly two vertices,
then only flexible $F'$-bridges with at least one attachment in $R_1 \cup R_2$ are the ones with attachments all in $W_1 \cup W_2$.
We can then show that
if we start with a different non-contractible curve $C''$ in the same homotopy class (as $\hat C$) that hits exactly two vertices,
it is hidden somewhere in the cylinder we constructed, and we would find the same cylinder. To this end,
if $C''$ is contained in $W_1 \cup W_2$, then $C''$ would be in the cylinder we constructed, because
we can confirm that all flexible $F'$-bridges are those with all attachments in $R_1 \cup R_2$ by Canonical Lemma, and moreover
$C''$ must give rise to a 2-separation in the above proof of Claim \ref{clone}, and hence
we would obtain the same cylinder bounded by the same non-contractible curves.

Assume finally that
$C''$ is not contained in $W_1 \cup W_2$. Much of the same things happens. If $C''$ is contained in some other faces $W'_1, W''_2$, then again by Canonical Lemma,
we can confirm that all flexible $F'$-bridges are those with all attachments in $R'_1 \cup R'_2$,
where $C''$ hit branches $R'_1$ and $R'_2$ that are in the
intersection of $W'_1$ and $W'_2$. By our choice of the cylinder, $C''$ must give rise to a 2-separation in the above proof of Claim \ref{clone}, and hence
we would obtain two homotopic curves of order two, such that
the cylinder bounded by these two curves must contain $C''$. In both cases, $C''$ is
hidden somewhere in the cylinder we constructed, and we would find the same cylinder because the above arguments can apply with $\hat C$ replaced by $C''$.

This indeed allows us to work on the same graph that can be embedded in a surface of smaller Euler genus, because for each
homotopy class, we obtain the same graph $G_i$.

\medskip

\paragraph{Non-orientable case.}
Finally, suppose that $\bar C$ is not orientation-preserving. Much of the
same thing happens. Indeed, ``left side'' and ``right side'' can be
replaced by ``one part'' and ``the other part'' (See Figure \ref{figb}), and all the same arguments give rise to a cylinder bounded by $C_1$ and $C_2$ in $G$, but the definition of the cylinder is changed as in 5 in ``Remark for the non orientation-preserving case'' in Overview of our algorithm.

More specifically, suppose we find one such a non-contractible curve $\bar C$; let $x,y$ be the vertices of $F'$ that this curve hits. We cut the graph along this curve
by twisting the edges of
one part of $x,y$ of $\bar C$ by reversing their orders in the
embedding $II$ of $G$, which transforms to the embedding $\bar II$ of $G'$  in a surface $S'$ of smaller Euler genus $g'$. This allows us to split the incident edges of $x,y$ into two parts, so that we can define $x_1,x_2,y_1,y_2$. See Figures \ref{figb}, \ref{figc} and \ref{figd}.

As in (4), we obtain two disjoint paths $P_1, P_2$, but in this case, $P_1$ joins $x_1$ and $y_1$, and $P_2$ joins $x_2$ and $y_2$. See Figure \ref{figg}. The rest of the arguments is the same. Note that the ``cylinder'' we shall find corresponds to Figure \ref{figh}.
Namely, we first follow $v_1$ to $v'_2$ along the face $W$, then walk from $v'_2$ to $v'_1$ through the non-contractible curve, then walk from $v'_1$ to $v_2$ through the face $W$, and finally walk from $v_2$ to $v_1$ through the non-contractible curve.
Thus we can obtain $L'_i$ which is a planar graph with the outer face $W'$ with four vertices $v_1, v'_2, v'_1,v_2$ appearing in this order listed when we walk along $W'$.

Thus $C_1$ and $C_2$ are as desired, and we can find $C_1,C_2$ in
$O(n)$ time.
\end{proof}

As mentioned in Remark 2 right after Theorem \ref{algotwo},
the above proof for Theorem \ref{algotwo} does not work when
$W=V(G)$ and $G'$ is a cylinder with the boundaries $C_1$ and
$C_2$ (so $G$ is obtained from $G'$ by gluing $C_1$ and $C_2$).
This is exactly the case when the surface $S$ is torus or the Kleinbottle, and moreover, cutting along a curve in $H$ reduces the Euler genus by two
(thus when $S$ is the Kleinbottle, $H$ neither is surface-separating nor hits only one crosscap).
In this case, we also need the following result.

\begin{theorem}
\label{algotorus} \showlabel{algotorus}
 Let $G,S,{\bf F'}$ be as above,  where $S$ is either torus or the Kleinbottle.
Fix one graph $F' \in {\bf F'}$ with the face-width two embedding in
$S$. Fix one nontrivial homotopy class $H$ of $S$ that neither is a surface-separating nor hits only one crosscap.

Suppose that $G$ has an embedding of face-width exactly two that extends the embedding of $F'$, and that contains a
non-contractible curve in $H$ that hits exactly two vertices. In
$O(n)$ time, we obtain the unique circular chain decomposition of
$G$: $B_1,B_2,\dots,B_l$ for some integer $l \geq 1$ such that
\begin{enumerate}
\item
each $B_i$ is a cylinder bounded by two cycles $C_1,C_2$ with $C_1$ containing $B_{i-1} \cap B_i$ and
$C_2$ containing $B_{i+1} \cap B_i$,
\item
each $B_i$ has no separation $(A,B)$ with $B_{i-1} \cap B_i$
in $A$, $B_{i+1} \cap B_i$ in $B$ and $|A \cap B|=2$,
\item
$|B_i \cap B_{i+1}|=2$ for $i=1,\dots,l$, and
\item
if $u \in B_j$ and $u \in B_i$ for $i < j$, then $u \in B_m$
for $m=i,\dots,j$.
\end{enumerate}
So this circular chain
decomposition can be thought of a generalization of
a triconnected  component tree decomposition $(T, R)$ such that $T$ is a path $P$. If we identify two vertices of $R_a$ and two vertices of $R_b$,
where $a, b$ are endpoints of $P$, then we obtain the above circular chain decomposition.
\end{theorem}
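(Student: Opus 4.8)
The plan is to handle this degenerate case by reducing it, via a single cut, to the path-shaped triconnected component tree that appeared in the proof of Theorem~\ref{algotwo}, and then to observe that here the whole of $G$ plays the role of the ``cylinder'', so that re-gluing the two ends of that path yields precisely the desired circular chain.

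First I would run the argument of Claim~\ref{clone} (which applies verbatim, since only \emph{one} non-contractible curve is needed and $H$ is fixed) to find, in $O(n)$ time, a non-contractible curve $\bar C$ in $H$ hitting exactly two vertices $x,y$ in some embedding of $G$ extending that of $F'$; such a curve exists by hypothesis. Because $S$ is the torus or the Klein bottle and $H$ is neither surface-separating nor hits only one crosscap, cutting $G$ along $\bar C$ yields an embedding of a planar graph $G'$ in the sphere (the Euler genus drops from $2$ to $0$); so after splitting $x$ into $x_1,x_2$ and $y$ into $y_1,y_2$ (and, in the Klein bottle case, performing the flip of one side as in Figures~\ref{figb}, \ref{figc} and \ref{figd}), $G'$ is a cylinder whose outer boundary passes through $x_1,y_1$ and whose inner boundary passes through $x_2,y_2$. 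Adding the edges $x_1y_1$ and $x_2y_2$, the arguments of facts (4), (5) and (6) in the proof of Theorem~\ref{algotwo} go through unchanged: since $G$ is $3$-connected and the embedding has face-width exactly two, $G'$ is $2$-connected, there are two vertex-disjoint paths $P_1,P_2$ from $\{x_1,y_1\}$ to $\{x_2,y_2\}$, and every $2$-separation $(A,B)$ of $G'$ has one side containing all of $x_1,y_1$ or all of $x_2,y_2$, with $A\cap B$ meeting each of $P_1,P_2$ in exactly one vertex.

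Next I would apply Theorem~\ref{3conunique} to obtain the unique triconnected component tree decomposition $(T,R)$ of $G'$; the $2$-separation structure just described forces $T$ to be a path with ends $a,b$ such that $x_1,y_1\in R_a$ and $x_2,y_2\in R_b$, exactly as in Theorem~\ref{algotwo}. Listing the torsos along this path as $R_1,\dots,R_l$ and re-identifying $x_1$ with $x_2$ and $y_1$ with $y_2$ (undoing the flip in the Klein bottle case) turns the path into a cyclic sequence; set $B_1,\dots,B_l$ to be these torsos read cyclically. Properties (1)--(3) then transfer directly from the path decomposition of $G'$: each $B_i$ is a sub-cylinder whose two bounding cycles contain the adhesion sets $B_{i-1}\cap B_i$ and $B_{i+1}\cap B_i$ respectively (these $2$-element adhesion sets separate the two boundaries of the annulus $G'$); each $B_i$ admits no $2$-separation separating its two adhesion sets, because its torso is $3$-connected (or a triangle or a bond); and consecutive bags meet in exactly two vertices. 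Property (4) holds for every vertex other than $x$ and $y$ because $(T,R)$ is a path (hence interval) decomposition of $G'$; the vertices $x,y$ lie only in $B_1$ and $B_l$ and form exactly the ``seam'' of the cyclic identification, the situation described in the final sentence of the theorem. Each of these steps---running Claim~\ref{clone}, cutting, testing $2$-connectivity and extracting $P_1,P_2$, computing $(T,R)$, and re-gluing---takes $O(n)$ time.

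It remains to argue uniqueness, namely that the resulting circular chain is independent, up to cyclic rotation and reflection, of the choice of $\bar C$. For this I would reuse the canonicity argument from the ``Canonical issue'' paragraph of the proof of Theorem~\ref{algotwo}, invoking the Canonical Lemma together with the uniqueness of the triconnected component tree decomposition (Theorem~\ref{3conunique}): any other non-contractible curve $\bar C'$ in $H$ hitting exactly two vertices must lie inside one of the sub-cylinders $B_i$, so cutting along $\bar C'$ instead of $\bar C$ merely reopens the cyclic sequence at another place (possibly with reversed orientation), leaving the circular chain unchanged. I expect this to be the main obstacle: unlike in Theorem~\ref{algotwo}, where the path decomposition of the cut graph was used only to \emph{locate} a sub-cylinder, here the whole cyclic decomposition must be shown to be intrinsic to $G$, and it is precisely the control over the flexible bridges furnished by the Canonical Lemma that makes this possible.
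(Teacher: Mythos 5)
Your proposal follows essentially the same route as the paper: find one curve in $H$ via Claim~\ref{clone}, cut along it (handling the Klein bottle flip), reuse facts (4)--(6) from the proof of Theorem~\ref{algotwo}, add the edges $x_1y_1,x_2y_2$, take the unique path-shaped triconnected component tree decomposition from Theorem~\ref{3conunique}, and re-identify $x_1$ with $x_2$ and $y_1$ with $y_2$ to obtain the circular chain, with uniqueness (and the observation that any other curve in $H$ hitting two vertices sits in an adhesion set, so a different starting curve only rotates the cycle) coming from the uniqueness of the triconnected decomposition. This matches the paper's proof, so no issues to report.
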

\begin{proof}
We follow the notation and the proof of Theorem \ref{algotwo} (in particular,
$\bar C, G', x_1, x_2, y_1,y_2$ are as in the proof of Theorem \ref{algotwo} after the proof of (4)).
Let us observe that (4)-(6) in the proof of Theorem \ref{algotwo} are still true.
Hence in $G'$, there are two disjoint
paths $P_1,P_2$ such that $P_1$ ($P_2$, resp.) joins $x_1$ and $x_2$ ($y_1$ and $y_2$) or $x_1$ and $y_2$ ($y_1$ and $x_2$).

Let us add edges $x_1y_1,x_2y_2$ if they are not present in $G'$.
Since $G$ is 3-connected, by the existence of two disjoint paths
$P_1,P_2$, we have the following chain decomposition:
$B_1,B_2,\dots,B_l$ for some integer $l \geq 1$ such that
\begin{enumerate}
\item
each $B_i$ is a cylinder bounded by two cycles $C_1,C_2$ with $C_1$ containing $B_{i-1} \cap B_i$ and
$C_2$ containing $B_{i+1} \cap B_i$,
\item
$B_1$ contains $x_1,y_1$ and $B_l$ contains $x_2, y_2$,
\item
each $B_i$ has no separation $(A,B)$ with $(B_{i-1} \cap B_i)$ in
$A$, $(B_{i+1} \cap B_i)$ in $B$ and $|A \cap B|=2$,
\item
$|B_i \cap B_{i+1}|=2$ for $i=1,\dots,l-1$ and moreover $B_i \cap
B_{i+1}$ consists of one vertex in $P_1$ and the other vertex in
$P_2$, and
\item
if $u \in B_j$ and $u \in B_i$ for $i < j$, then $u \in B_m$ for
$m=i,\dots,j$.
\end{enumerate}

Note that this is a triconnected  component tree decomposition $(T, R)$ such that $T$ is a path. Note also that
this decomposition can be found in $O(n)$ time by Theorem \ref{3conunique}, since $G' \cup \{x_1y_1, x_2y_2\}$ is
2-connected. Moreover this decomposition is unique.

By the uniqueness of the decomposition mentioned above, it follows that by identifying $x_1$ and $x_2$, and $y_1$ and $y_2$,
we obtain the unique circular decomposition as in Theorem \ref{algotorus}. In particular,
for any non-contractible curve $C'$ in $H$ that hits
exactly two vertices, we can obtain the above unique chain
decomposition $B'_1,B'_2,\dots,B'_{l}$ such that the two vertices
in $C'$ are in one of $B_i \cap B_{i+1}$, and moreover for any
$i=1,\dots,l-1$, $B'_i \cap B'_{i+1}$ corresponds to $B_{j+i} \cap
B_{j+i+1}$ for some $j$.
\end{proof}

\section{Face-Width One Case}
\label{secone}
\showlabel{secone}

Let $G$ be a 3-connected graph that can be embedded in a surface $S$
of Euler genus $g >0$ and of face-width exactly one. In this section, we assume that $G$ can neither be embedded in a surface $S'$ of smaller
genus $g'$ nor be embedded in the same surface $S$ with face-width at least two.

We first prove the following lemma.

\begin{lemma}
\label{showone} \showlabel{showone} Let ${\bf F''}$ be a set of
all embeddings of $F$ (as in Theorem \ref{obst}) in $S$. For each
face-width one embedding of $G$ in $S$, there is an embedding of $F$
in ${\bf F''}$ such that each face in this embedding bounds a disk
(with possibly some boundary vertices appearing twice or more), and
moreover this embedding can be extended to the embedding of $G$.
\end{lemma}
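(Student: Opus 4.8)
\medskip
\noindent\textbf{Proof proposal.} The plan is to show that the embedding of $F$ demanded by the lemma is simply the one that the given embedding of $G$ induces on $F$. Fix a face-width one embedding $\Pi$ of $G$ in $S$. Since $F$ is a subgraph of $G$, restricting the rotation system and signature of $\Pi$ to $F$ gives an embedding $\Pi'$ of $F$ in $S$; a priori $\Pi'$ need not be $2$-cell, since deleting the edges of $G$ not in $F$ can merge faces into regions that are not disks. Granting that $\Pi'$ \emph{is} $2$-cell (the Claim below), everything else follows at once: $\Pi'$ is then a genuine embedding of $F$ in $S$, so $\Pi'\in{\bf F''}$ by the definition of ${\bf F''}$; each of its faces bounds a disk, with facial walks that may repeat a vertex or an edge when $F$ is not $3$-connected (this is the parenthetical in the statement); and by construction $\Pi$ is an embedding of $G$ extending $\Pi'$.
\begin{quote}
\emph{Claim.} The induced embedding $\Pi'$ of $F$ is a $2$-cell embedding.
\end{quote}

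\medskip
\noindent To prove the Claim I would use the property of $F$ guaranteed by Theorem~\ref{obst}: $F$ embeds in $S$ but in no surface of smaller Euler genus, so the minimum Euler genus of $F$ equals the Euler genus $g$ of $S$. I also use that $F$ is connected --- a property of the obstruction produced by Theorem~\ref{obst}, and in any case unavoidable here, since a disconnected graph admits no all-disk-face embedding in the connected surface $S$. Suppose for contradiction that some face $f$ of $\Pi'$ is not an open disk. As an open subsurface of $S$, $f$ is disjoint from $F$ and is not a disk, so it contains a simple closed curve $\gamma$ that does not bound a disk in $S$: if the closure $\overline f$ has positive genus take $\gamma$ non-separating in $\overline f$, and if $\overline f$ is planar with more than one boundary component take $\gamma$ separating $\overline f$ with $F$ meeting both sides; in either case $\gamma$ fails to bound a disk in $S$ precisely because $F$ is connected and lies outside $f$. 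Cutting $S$ along $\gamma$ and capping the resulting boundary circle(s) with disks produces a surface $S^{*}$ of strictly smaller Euler genus in which $F$ is still embedded (because $\gamma\cap F=\emptyset$), contradicting the minimality of the Euler genus of $F$. This is the standard surgery showing that a genus-minimal embedding of a connected graph is cellular; see \cite{MT}.

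\medskip
\noindent I expect the only genuine obstacle to be this last topological step --- checking that a non-disk face of an embedding of the connected graph $F$ always carries a simple closed curve whose removal strictly lowers the Euler genus while keeping $F$ embedded. It is routine, but it needs the small case analysis on $\overline f$ indicated above and is exactly where connectedness of $F$ enters; all the rest is bookkeeping.
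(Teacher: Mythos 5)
Your proposal is correct and follows essentially the same route as the paper: take the embedding of $F$ induced by the given face-width one embedding of $G$ (which extends to $G$ by construction) and invoke genus-minimality of $F$ to conclude that every face is a disk. The paper states this cellularity step without proof, while you supply the standard cut-and-cap surgery argument (using connectivity of $F$), so the two proofs coincide in substance.
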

\begin{proof}
Fix one face-width one embedding of $G$ in $S$. This embedding
induces an embedding $II$ of $F$ which can be extended to the embedding
of $G$. Since $F$ cannot be embedded in a surface of smaller Euler
genus,
it follows that each face is bounded by a disk of $II$ (with
possibly some boundary vertices appearing twice or more).
\end{proof}

By Theorems \ref{obst} and \ref{spec}, we can find a
family of all embeddings of $F$ ${\bf F''}= \{\hat F_1,\dots,\hat F_l\}$ such that each of them can be extended to an embedding of $G$
in $O(n)$ time, by taking all possible embeddings of $F$ in $S$ and then applying Theorem \ref{spec} with these embeddings
(Note that $\bsize(F) \leq l''(g)$ for some function $l''$ of $g$, so finding all possible embeddings of $F$ in $S$ can be done in constant time).

Therefore, by Lemmas \ref{local} and \ref{showone}, we can in $O(n)$ time obtain the
following.

\begin{enumerate}
\myitemsep
\item
For all $i$, each $F$-bridge is stable in the embedding $\hat F_i$.
\item
$l \leq N''(g)$ for some function $N''$ of $g$ (since $\bsize(F) \leq l''(g)$).
\item
The embedding $\hat F_i$ can be extended to an embedding of $G$ in
$S$, by embedding each $F$-bridge in some face of $\hat F_i$.
\item
For each face-width one embedding of $G$ in $S$, there is an
embedding $\hat F_i$ of $F$ in ${\bf F''}$ such that each face in this
embedding bounds a disk (with possibly some boundary vertices
appearing twice or more), and moreover the embedding $\hat F$ can be
extended to the embedding of $G$.
\end{enumerate}

Let us now prove the following simple, but important lemma, which tells us how we get ``canonical''.

\begin{lemma}
\label{canon1}
Let $\hat F$ be an embedding of $F$ in $S$. 
Suppose there are four branches $R_1, R_2, R_3, R_4$ appearing in this order listed in a face $W$ of $\hat F$, when we walk along $W$, and
$R_1$ and $R_3$ are same and $R_2$ and $R_4$ are same (See Figures \ref{fig3} and \ref{fig4}).

If there is a non-contractible curve $C_1$ that hits only one vertex $u$ in $R_1$ (and hence in $R_3$) in $\hat F$,
then there is no non-contractible curve $C_2$, which is not homotopic to $C_1$ and which hits only one vertex $v$ in $R_2$ (and hence in $R_4$) in $\hat F$.
\end{lemma}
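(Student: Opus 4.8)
The statement is a planarity/topology fact about the face $W$ in the embedding $\hat F$. Since $F$ cannot be embedded in a surface of smaller Euler genus, every face of $\hat F$ bounds a disk (with possibly repeated boundary vertices). So I will work inside the closed disk $D$ whose boundary walk is the facial walk $W$. On $\partial D$, the branches $R_1,R_2,R_3,R_4$ appear in this cyclic order, with $R_1,R_3$ being two occurrences of the same branch $R$ and $R_2,R_4$ two occurrences of the same branch $R'$. I would first record what a non-contractible curve hitting exactly one vertex $u$ of $R$ does: after cutting the surface along $C_1$, the vertex $u$ is split into $u_1,u_2$, and — crucially — inside the face $W$ the curve $C_1$ must connect the two occurrences $R_1$ and $R_3$ of $R$ through the disk $D$ (this is exactly the mechanism pictured in Figures \ref{fig3} and \ref{fig4}: the non-contractibility forces the arc of $C_1$ in $W$ to run from a point on $R_1$ to the matching point on $R_3$). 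Likewise, a non-contractible curve $C_2$ hitting one vertex $v$ of $R'$ must have an arc in $D$ running from $R_2$ to $R_4$.

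The heart of the argument is then a Jordan-curve obstruction in the disk $D$. The arc $\alpha_1\subset D$ of $C_1$ has its two endpoints on $\partial D$, one in the portion labelled $R_1$ and one in the portion labelled $R_3$; since $R_1$ and $R_3$ are separated along $\partial D$ by $R_2$ on one side and $R_4$ on the other, $\alpha_1$ together with the identification $R_1\leftrightarrow R_3$ separates the disk (equivalently, after the cut-and-reglue producing the smaller-genus surface, $\alpha_1$ becomes a two-sided or one-sided simple closed curve, and in $D$ it is a chord separating $R_2$ from $R_4$). Symmetrically $\alpha_2$ is a chord of $D$ separating $R_1$ from $R_3$. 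Two chords of a disk whose endpoint pairs interleave on the boundary must cross; but $C_1$ and $C_2$ are simple closed curves hitting $G$ only in vertices ($u$ resp.\ $v$, which are distinct since $u\in R$, $v\in R'$ and $R\neq R'$), so $\alpha_1$ and $\alpha_2$ cannot cross inside the open disk $D$ (they could only meet at a vertex of $F$ on $\partial D$, but their endpoints lie in the interiors of the branch segments, away from branch vertices). I would then argue that the only way to avoid the crossing is for $C_1$ and $C_2$ to be homotopic: both arcs, being forced between the same two pairs of boundary occurrences, are pushed to the branch vertices common to $R,R'$, and a standard homotopy inside the disk (each face bounds a disk) shows $C_2$ is freely homotopic to $C_1$, contradicting the hypothesis that they are non-homotopic.

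I would carry the steps out in this order: (i) reduce to the disk $D=\overline{W}$ and set up the arcs $\alpha_1,\alpha_2$ and their boundary endpoints; (ii) show each arc, modulo the boundary identification of the repeated branch, behaves as a separating/non-separating chord, and in particular that $\alpha_1$'s endpoints interleave $\alpha_2$'s endpoints on $\partial D$ — this uses the given cyclic order $R_1,R_2,R_3,R_4$; (iii) derive that $\alpha_1$ and $\alpha_2$ must cross in the open disk, which is impossible for curves of this type (they are disjoint away from $\partial D$, and on $\partial D$ their landing points avoid branch vertices); (iv) conclude the only consistent alternative is $C_1\simeq C_2$, contradicting the hypothesis.

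**Main obstacle.** The delicate point is step (iv), i.e.\ making precise why the failure of the interleaving-forces-crossing dichotomy yields \emph{homotopy} rather than some third possibility — one has to be careful about the two sidedness types (orientation-preserving vs.\ not) of $C_1$ and the direction in which the repeated branch $R$ is traversed by $W$ at its two occurrences $R_1,R_3$ (same direction vs.\ opposite), since this affects whether cutting along $C_1$ lowers the genus and how $R_2,R_4$ sit relative to $\alpha_1$ after regluing. I expect to need a short case analysis on these orientations, in each case exhibiting either the forced crossing (contradiction) or an explicit disk in which $C_2$ can be homotoped onto $C_1$ (contradiction with non-homotopy). The geometric content is entirely captured by Figures \ref{fig3} and \ref{fig4}; the work is in turning that picture into a clean Jordan-curve argument inside the face-disk.
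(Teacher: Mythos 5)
The central step of your argument fails. In step (iii) you assert that the two arcs $\alpha_1,\alpha_2$ "cannot cross inside the open disk $D$" because $C_1$ and $C_2$ hit $G$ only in the vertices $u$ and $v$. That constraint limits how the curves meet the \emph{graph}, not how they meet \emph{each other}: two distinct closed curves are perfectly free to cross in the interior of a face, and the configuration hypothesized in the lemma is exactly the one in Figures \ref{fig3} and \ref{fig4} — two curves of order one that \emph{do} cross inside $W$. So your interleaving-forces-crossing dichotomy produces no contradiction, and the Jordan-curve argument collapses. A related symptom is that your proof never uses the genus-minimality of $F$ except to know that faces are disks; but the statement is false without minimality (a graph embeddable in a smaller-genus surface can easily admit two crossing, non-homotopic order-one curves through one face — the paper itself points out, when proving the Canonical Lemma, that the analogous canonicity fails for $K_4$ in the projective plane precisely because $K_4$ is planar). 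Any correct proof must therefore exploit the fact that $F$, coming from Theorem \ref{obst}, cannot be embedded in a surface of smaller Euler genus.

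That is what the paper does, and it is constructive rather than obstructive: since $C_1,C_2$ are orientation-preserving (Lemma \ref{project}), split $u$ into $u_1,u_2$ and $v$ into $v_1,v_2$ according to the left/right sides of the curves and add the edges $u_1u_2$ and $v_1v_2$. Because the two occurrences $R_1,R_3$ and $R_2,R_4$ interleave on $W$, pasting them back means the two curves cross once on a handle (a torus piece of $S$). Deleting $u_1u_2$ and $v_1v_2$ destroys that handle, and the two edges can then be re-drawn through a single crosscap; contracting $u_1u_2$ and $v_1v_2$ recovers $F$ embedded in a surface of strictly smaller Euler genus, contradicting Theorem \ref{obst}. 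If you want to salvage your write-up, the disk set-up in your steps (i)--(ii) is fine as a way of seeing that the interleaving forces the one-point crossing, but the conclusion to draw from the crossing is this handle-to-crosscap re-embedding, not an impossibility of the crossing itself.
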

\begin{proof}
Suppose for a contradiction that such two vertices $u, v$ exist. We shall show that we can embed $F$ in a surface of smaller Euler genus.
This would be a contradiction to Theorem \ref{obst}.

To this end, let us first remind that both $C_1$ and $C_2$ are orientation-preserving by Lemma \ref{project}. So we can split both $u$ and $v$ into two vertices $u_1, u_2$ and $v_1, v_2$, respectively, such that both $u_1$ and $v_1$ have neighbors (in $F$) that are ``right-side'' of $C_1, C_2$, respectively, while both $u_2$ and $v_2$ have  neighbors (in $F$) that are ``left-side'' of $C_1, C_2$, respectively. We now add $u_1u_2, v_1v_2$.

Note that if we paste $R_1$ and $R_3$, and $R_2$ and $R_4$, we obtain a torus. We now delete $u_1u_2, v_1v_2$ from the embedding $\hat F$ of $F$. Then we destroy
the torus from the embedding $\hat F$, and let $\hat F'$ be the resulting embedding. But then we can add a single-cross to the embedding $\hat F'$
by adding $u_1u_2, v_1v_2$. This implies that the resulting embedding of $\hat F'$ has smaller Euler genus. We can now contract edges $u_1u_2, v_1v_2$ to obtain
the original graph $F$. Then the resulting embedding is still in a surface of smaller Euler genus, a contradiction to the assumption on
the minimum Euler genus embedding $\hat F$. See Figures \ref{fig3} and \ref{fig4}.

\begin{figure*}
\centering
\includegraphics[height=6cm]{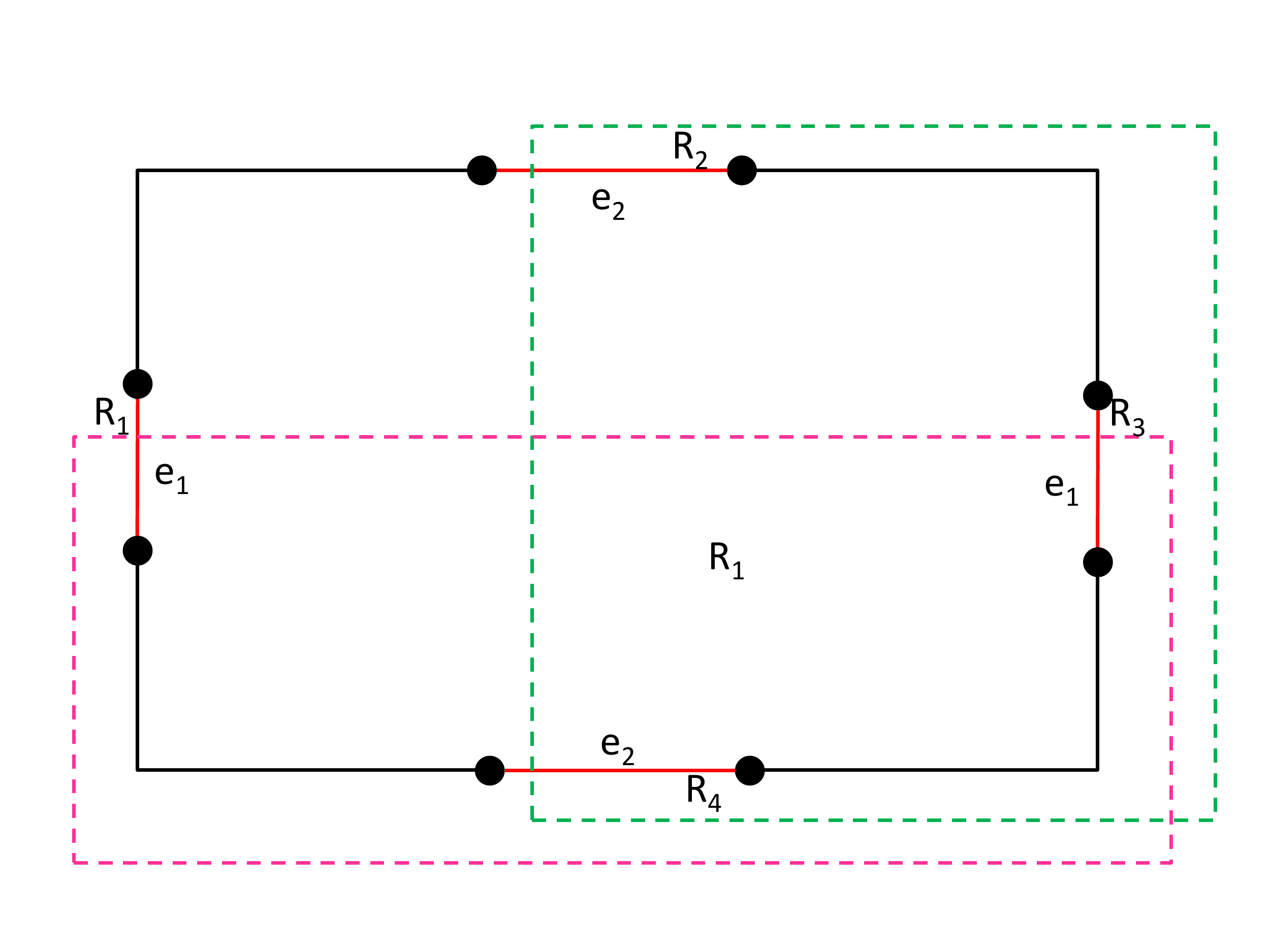}
\caption{Two non-contractible curves of order exactly one}
  \label{fig3}
\end{figure*}

\begin{figure*}
\centering
\includegraphics[height=6cm]{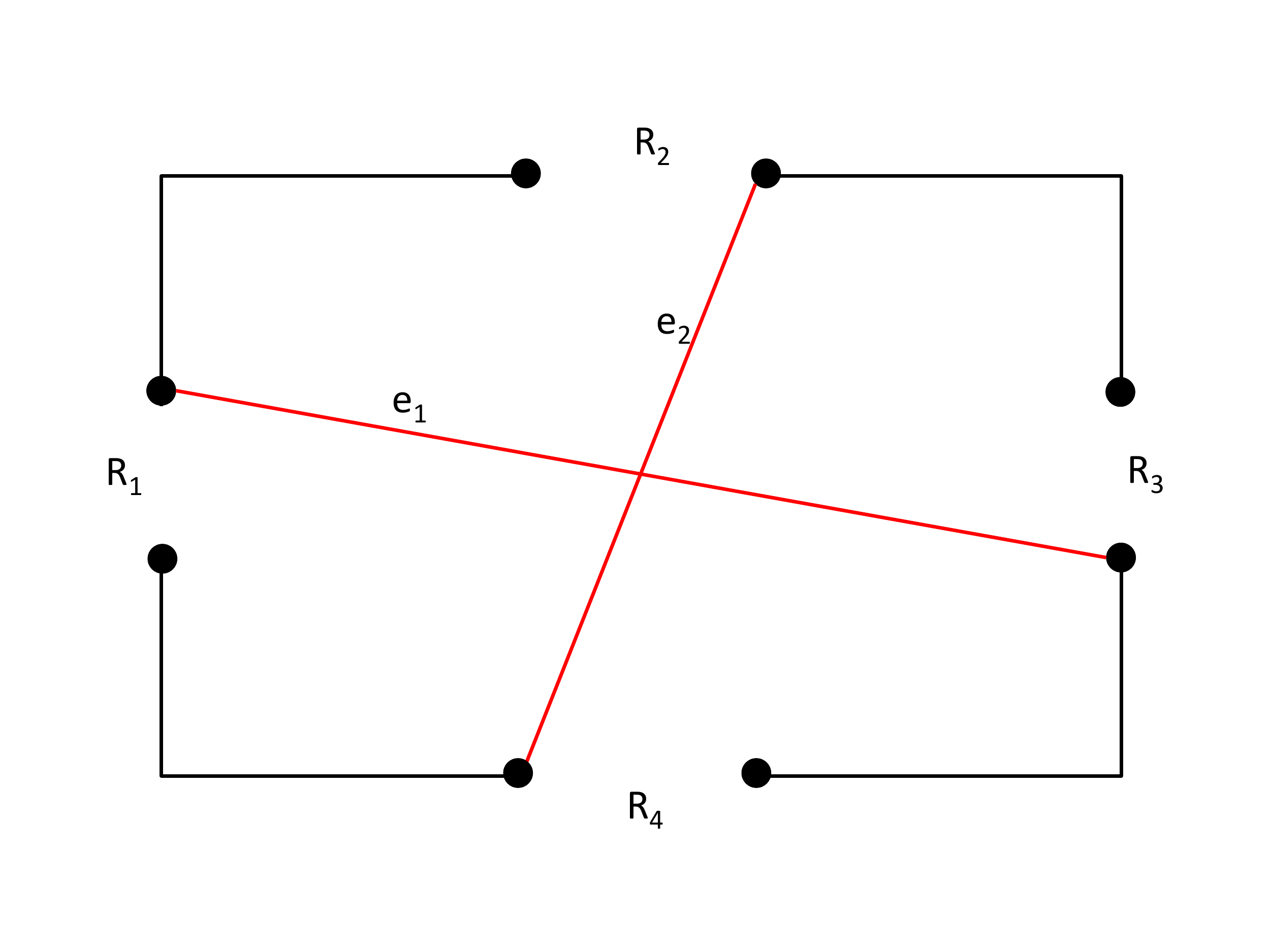}
\caption{Crossing curves}
  \label{fig4}
\end{figure*}

\end{proof}

We now mention the most important result in this section. The following lemma holds.
\begin{lemma}
\label{faceone}
\showlabel{faceone}
Let $G$ be a 3-connected graph that can be embedded in a surface $S$ of Euler genus $g >0$ and of face-width exactly one, but
that can neither be embedded in a surface $S'$ of smaller Euler genus $g'$ nor be embedded in the same surface $S$ with face-width at least two.

Let $F$ be as above.
Then $F$ contains all the vertices $V_1$ such that each vertex $u$
in $V_1$ is hit by a non-contractible curve that hits only $u$ in some
face-width one embedding of $G$ in $S$. Moreover, we can uniquely split the incidents edges of $u$ into the ``left'' side and the ``right side''.

In addition there are at most $q(g)$
such vertices $V_1$ for some function $q$ of $g$, no vertex in $V(G)-V_1$ is hit by a non-contractible curve of order exactly one in any embedding of $G$ in $S$,
and we can find all such vertices $V_1$ in $O(n)$ time.
\end{lemma}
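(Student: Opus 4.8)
The plan is to reduce to a single embedding of $F$ and a single homotopy class, to show that on each face there are only $O(1)$ candidates for the vertex a non-contractible order-one curve can hit, that the induced ``left/right'' split at each candidate is forced, and finally to test each candidate in linear time using Theorem~\ref{spec}. As recorded after Theorem~\ref{obst}, in $O(n)$ time we have the family ${\bf F''}=\{\hat F_1,\dots,\hat F_l\}$ of all embeddings of $F$ in $S$ that extend to an embedding of $G$, with $l\le N''(g)$, every $F$-bridge stable, and every face a closed disk (Lemma~\ref{showone}). Let $u\in V(G)$ be hit by a non-contractible curve $C$ with $C\cap G=\{u\}$ in a face-width one embedding $\Pi$ of $G$ in $S$. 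Restricting $\Pi$ to $F$ gives a $2$-cell embedding of $F$ --- otherwise a non-cellular face could be compressed to embed $F$ in a surface of smaller Euler genus, contradicting Theorem~\ref{obst} (see~\cite{MT}) --- hence one of the $\hat F_i$; since its faces are disks, $C$ meets $F$, and as $F\subseteq G$ it meets $F$ only in $u$, so $u\in V(F)$ and $V_1\subseteq V(F)$. By Lemma~\ref{project}, $C$ is orientation-preserving (otherwise $G$ would embed in a smaller surface), so it splits the edges at $u$ into two parts; and removing $u$ from $C$ leaves a connected arc of $S\setminus F$, hence inside a single face $W$ of $\hat F_i$, with $u$ on $\partial W$. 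By Lemma~\ref{homology} with $l=1$, at most $\bsize(F)\le l''(g)$ homotopy classes can contain such a curve. Thus it suffices, for each of the $O(1)$ pairs $(\hat F_i,H)$ and each face, to pin down the candidate vertices and their forced splits in $O(n)$ time.

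The main step is bounding the candidates on a fixed face $W$ of a fixed $\hat F=\hat F_i$ and a fixed homotopy class $H$. Branch vertices contribute at most $\bsize(F)$ of them, so fix an internal vertex $u$ of a branch $P$ of $F$ lying on $\partial W$. An order-one curve $C$ in $H$ through $u$ inside $W$ pinches $W$ at $u$ into two sub-disks $W_1,W_2$, and every $F$-bridge drawn in $W$ must sit inside $W_1$ or inside $W_2$, hence may not straddle $C$. Exactly as in the ``$a',b'$'' analysis in the proof of Claim~\ref{clone}, this straddle condition forces the set of admissible positions on $P$ to be a single subpath $P^{\ast}$ of $P$, or to be empty if some bridge straddles every pinch of $P$; but then every internal vertex of $P^{\ast}$, being an attachment of a \emph{stable} bridge by $3$-connectivity, has an off-$P$ attachment on $\partial W$ that forces it to be an endpoint of $P^{\ast}$ --- a contradiction unless $P^{\ast}$ has at most two vertices. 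When a branch occurs more than twice on $\partial W$, Lemma~\ref{canon1} rules out conflicting curves and fixes which occurrences a curve in $H$ may join, so the bound of at most two internal candidates per branch still holds. Summing over the $O(1)$ branches, the $O(g)$ faces and the $O(1)$ pairs $(\hat F_i,H)$ gives at most $q(g)$ candidates for some function $q$ of $g$.

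Finally, for a candidate $u$ on a branch $P$ in face $W$, the two ``gaps'' at $u$ through which $C$ enters and leaves $W$ are corners of $u$ in $\hat F$, and a bridge edge at $u$ lying in such a corner falls on the $W_1$- or the $W_2$-side of $C$ according to where that stable bridge's off-$u$ attachment lies on $\partial W$, which is fixed; Lemma~\ref{canon1} removes the remaining ambiguity, so the ``left/right'' split at $u$ depends only on $(\hat F,H,u)$ and can be computed. Cutting $S$ along $C$ then yields a surface $S'$ of Euler genus $g-1$ together with the induced embedding of the graph obtained from $F$ by splitting $u$ in two, and using Theorem~\ref{spec} we test in $O(n)$ time whether this embedding extends to an embedding in $S'$ of the graph obtained from $G$ by splitting $u$ according to the computed split: if yes, re-gluing produces a face-width one embedding of $G$ in $S$ with an order-one non-contractible curve through $u$, so $u\in V_1$; if no, $u$ is discarded. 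Over all $O(1)$ candidates this costs $O(n)$, and completeness (every $u\in V_1$ occurs as a candidate with its forced split and passes the test, witnessed by the embedding it came from) together with soundness (each retained $u$ is witnessed by the embedding Theorem~\ref{spec} produces) shows that the retained set is exactly $V_1$; in particular $|V_1|\le q(g)$, no vertex of $V(G)-V_1$ is ever hit by an order-one non-contractible curve, and the split being canonical is exactly the assertion we wanted. The main obstacle is the candidate-bounding step, where the combinatorics of stable bridges and $3$-connectivity --- together with Lemma~\ref{canon1} to handle branches repeated on a face --- do the work, just as in Claim~\ref{clone}.
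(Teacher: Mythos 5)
Your proposal follows the paper's own proof essentially step for step: $V_1\subseteq V(F)$ via Theorem~\ref{obst} and Lemma~\ref{showone}, orientation-preservation via Lemma~\ref{project}, finitely many homotopy classes via Lemma~\ref{homology}, at most two candidate vertices per branch occurrence on a face via stable bridges, 3-connectivity and the canonicity supplied by Lemma~\ref{canon1} (this is exactly the paper's edge $a'b'$ argument), and an $O(n)$ test per candidate via Theorem~\ref{spec}. The only (harmless) deviation is the form of that test --- you cut along the candidate curve and test extendability of the split graph in the resulting smaller-genus surface, whereas the paper adds a dummy path $uvv'u$ inside the face $W$ and tests extendability in $S$ itself --- and your ``Euler genus $g-1$'' should just read ``smaller Euler genus,'' since an orientation-preserving cut may drop the genus by two.
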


\medskip
\begin{proof}
For each embedding of $G$ of face-width exactly one in $S$,
as mentioned above, there is an embedding $\hat F_i$ of $F$ in ${\bf F''}$ that can be extended to
this embedding of $G$. Therefore, each non-contractible curve that hits  exactly one vertex would hit a vertex in $F$. Thus all the vertices $V_1$ are
in $F$.

Fix one embedding $\hat F_i$ of $F$ in ${\bf F''}$.
Since $l \leq N''(g)$, it remains to find in $O(n)$ time all the
vertices $V'_1$ such that each vertex $u$ in $V'_1$ is hit by a
non-contractible curve that hits only $u$ in an embedding of $G$
that extends the embedding $\hat F_i$, and moreover we are canonical.

In order to figure out whether or not, in some embedding of $G$ that extends the embedding $\hat F_i$, there is a non-contractible curve that hits exactly one vertex $u$ in a face $W$ of $\hat F_i$,
we can do the following:
\begin{quote}
By Theorem \ref{spec} applied to $K=F + uvv'u$, the embedding $\hat K$ of $K$ and $G=G + uvv'u$, where $+$ means to add a path of the form $uvv'u$, $u$ is in $W$  and both
$v$ and $v'$ are dummy vertices in the face $W$, and moreover $uvv'u$ induces a non-contractible cycle in the embedding $\hat K$ of $K$, we can figure out whether or not $u$ can be hit by a non-contractible curve that hits exactly one vertex $u$ in some embedding of $G$ that extends the embedding $\hat F_i$. See Figure \ref{fig11}.
\end{quote}

\begin{figure*}
\centering
\includegraphics[height=4cm]{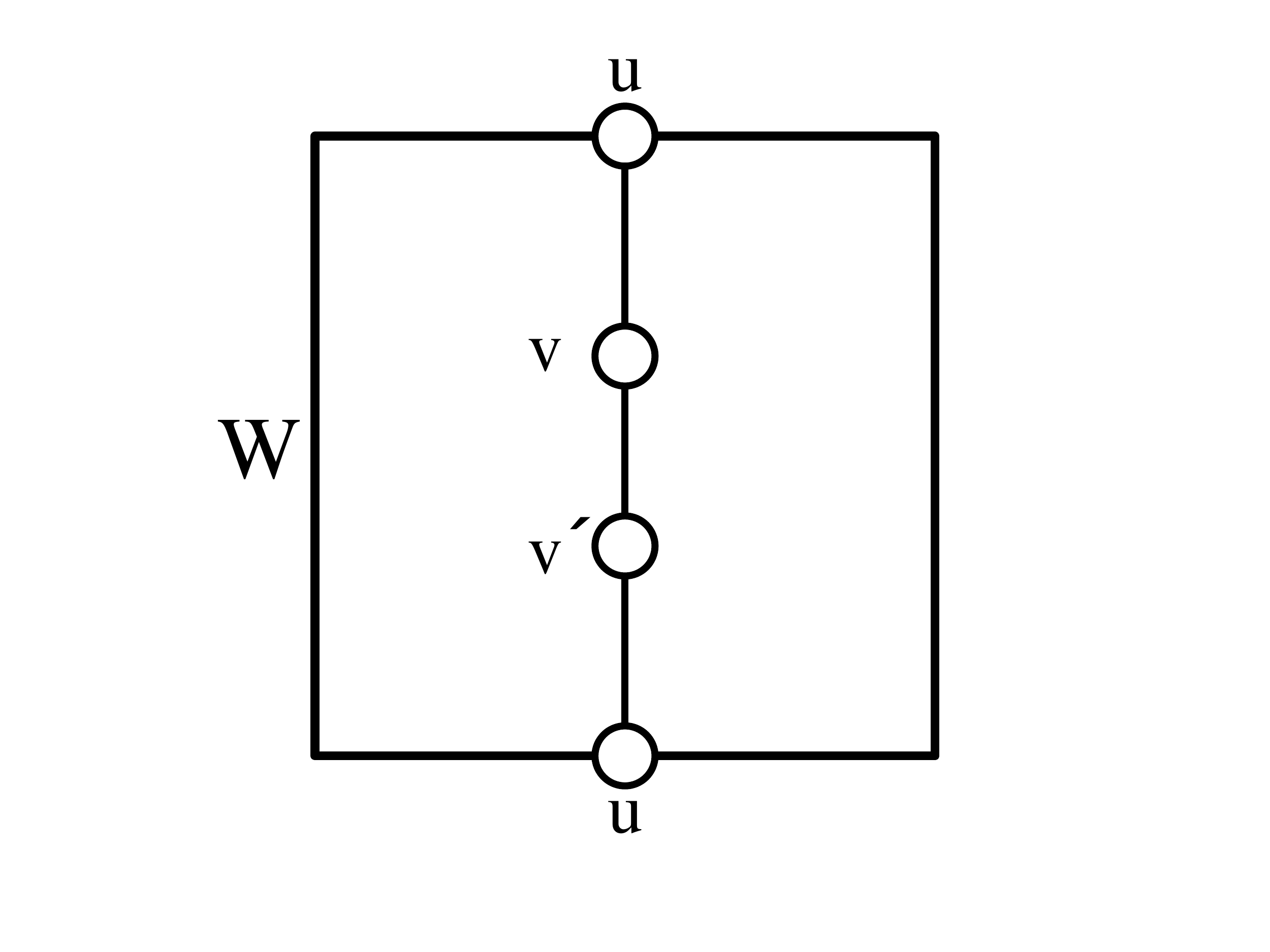}
 \caption{Testing a face-width one embedding with non-contractible curve hitting only $u$.}
  \label{fig11}
\end{figure*}

It remains to show that there are actually at most $O(1)$ vertices $u$
in $W$ that we have to guess, and moreover we are canonical.

Let us consider one face $W$ of $F$. So $W$ consists of branches
$E_1, \dots, E_t$ (for some $t$ that depends on $g$).
Suppose $E_1$ and $E_j$ are same. Then by Lemma \ref{canon1}, any
of $E_2,\dots,E_{j-1}$ is a different branch from $E_{j+1},\dots,E_i$.
This tells one important observation, which follows from Lemma \ref{canon1} and the fact that $F$ cannot be embedded in a surface of smaller Euler genus (by Theorem \ref{obst}):\\
~\\
(1) We are {\em canonical} in the following sense; suppose there is a non-contractible curve $C'$ that hits exactly one vertex $v$ in a branch $P$ of $F$
in an embedding of $G$ that extends the embedding of $F$. Then $C'$ uniquely splits the incidents edges of $v$ into the ``left'' side and the ``right side'' and hence we can uniquely split $v$ into the ``left'' side and the ``right side''  (note that the curve $C$ is orientation-preserving).\par
~\\

\medskip

Let us remind that since $G$ is 3-connected (and since each $F'$-bridge is stable), any vertex in $W$, except for the branch vertices of $F$, is an attachment of some $F$-bridge.
Let $a,b$ be the two branch vertices of $E_1$ (and $E_j$).
By (1), each $F$-bridge having an attachment in $E_1$ (and $E_j$) is uniquely placed either with an attachment in the ``left side''
$E_2,\dots,E_{j-1}$ or with an attachment in the ``right side'' $E_{j+1},\dots,E_i$ or both.
Then there is
an edge $e=a'b'$ of $E_1$ (and $E_j$) with $a'$ closer to $a$ in $E_1$ with the following property: 
For each vertex $z'$ between $a$ and $a'$
(except for $a, a'$),
there is a $F'$-bridge $M$ that has an attachment between $a$ and $a'$ such that $M$ blocks a non-contractible
curve of order exactly one that is homotopic to $C$ and that contains $z'$.
Moreover,
for each vertex $z'$ between $b$ and $b'$
(except for $b, b'$),
there is a $F'$-bridge $M$ that has an attachment between $b$ and $b'$ such that $M$ blocks a non-contractible
curve of order exactly one that is homotopic to $C$ and that contains $z'$.

Since every vertex in $E_1$ (and $E_j$) is an attachment of some $F$-bridge that is stable,
the only candidates for $u$ in $E_1$ (and $E_j$)
are $a', b'$.
Hence there are at most two choices for $u$ as above, and moreover, by (1) we can find such at most two vertices $u$ in $O(n)$ time.
Note that by (1), we can uniquely
split the incidents edges of $u$ into the ``left'' side and the ``right side'', if such a desired curve $C$ that hits $u$ exists.

Since $l \leq N''(g)$ and $\bsize(F) \leq l''(g)$ for some function $l''$ of $g$, and
since we need to test at most $\bsize(F)$
nontrivial homotopy classes by Lemma \ref{homology}, we can detect all the vertices $V_1$ in $O(n)$ time. Moreover,
there are at most $q(g)$ vertices $V_1$
for some function $q$ of $g$, and we can uniquely
split the incidents edges of $u \in V_1$ into the ``left'' side and the ``right side'', if a non-contractible curve $C$ that hits only $u$ exists (and hence we are canonical). This proves Lemma \ref{faceone}.
\end{proof}

\bigskip

As promised Remark 1 right Theorem \ref{algotwo} in Section \ref{sectwo},
we now show the following ``Canonical Lemma'', which is crucial in the proof of Claim \ref{clone} and hence Theorem \ref{algotwo}.
We follow the proof of Lemma \ref{faceone}, together with that of Lemma \ref{canon1} (see (1) in the proof of Lemma \ref{faceone}).
Let us state it again (we shall follow the notations in the proof of Claim \ref{clone}).

\begin{quote}{\bf Canonical Lemma.}
 If there is a non-contractible curve $C$ that hits only $u$ and $v$ in the embedding of $G$ that extends the embedding of $F'$,
 then all $F'$-bridges with at least one attachment in $R_1 \cup R_2$ and with at least one attachment outside $R_1 \cup R_2$  are uniquely placed
 into the ``left'' side and the ``right side'' (or into the ``one'' part and the ``other'' part, if $C$ is non orientation-preserving)
 of $C$ in $W_1 \cup W_2$. It follows that $C$ uniquely splits the incidents edges of both $u$ and $v$ into the ``left'' side and the ``right side'' (if $C$ is non orientation-preserving, then the ``left'' side and the ``right side'' are replaced by the ``one'' part and the ``other'' part).

Moreover, given $W_1, W_2$, in $O(n)$ time, either we can place all $F'$-bridges $\mathcal{B}$ into the ``left'' side and the ``right side'' (or into the ``one'' part and the ``other'' part, if $C$ is non orientation-preserving) of $C$ in $W_1 \cup W_2$, or we can conclude that such a non-contractible curve $C$ does not exist.
\end{quote}

{\em Proof.} Let us remind the reader that we are following the proof of Claim \ref{clone}. In particular, we are assuming (1) and (2).
Following (2), we now look at the case when a bridge $B$ has attachments only in components of $W_1 \cap W_2$ in the embedding of $F'$.  We give the following crucial observation, which is analogue to Lemma \ref{canon1}:\par
~\\
(3) Suppose there are three branches $R_1, R_2, R_3$ appearing in this counter clockwise order listed when we walk along $W_1$, and
suppose furthermore, there are three branches $R'_1, R'_2, R'_3$ in $W_2$ in this counter clockwise
order listed when we walk along $W_2$ (see Figures \ref{fig1} and \ref{fig2}), such that $R_i$ and $R'_i$ are the same for $i=1,2,3$.

If there is a non-contractible curve $C_1$ that hits only two vertices, one $u$ in $R_1$ (and hence in $R'_1$) and
the other $v$ in $R_2$ (and hence in $R'_2$) in some embedding of $G$,
then there is no non-contractible curve $C_2$ that is not homotopic to $C_1$ and that hits only two vertices, one $u$ in $R_1$ (and hence in $R'_1$) and the other in $R_3$ (and hence in $R'_3$)  in this embedding of $G$.\par
~\\
{\em Proof.}  The proof is also identical to that of Lemma \ref{canon1}. For completeness, suppose for a contradiction that such a curve $C_2$ exists.
We show that we can embed $G$ in a surface of smaller Euler genus.
This can be easily achieved as in  Figures \ref{fig1} and \ref{fig2}.
Note that two curves $C_1, C_2$ may be viewed as two
non-contractible curves in the torus or in the Kleinbottle, but
the new contractible curve is going through only one crosscap.

Then the resulting embedding of $G$ is in a surface of smaller Euler genus, a contradiction to the assumption on
the minimum Euler genus embedding of $G$.\qed
\begin{figure*}
\centering
\includegraphics[height=6cm]{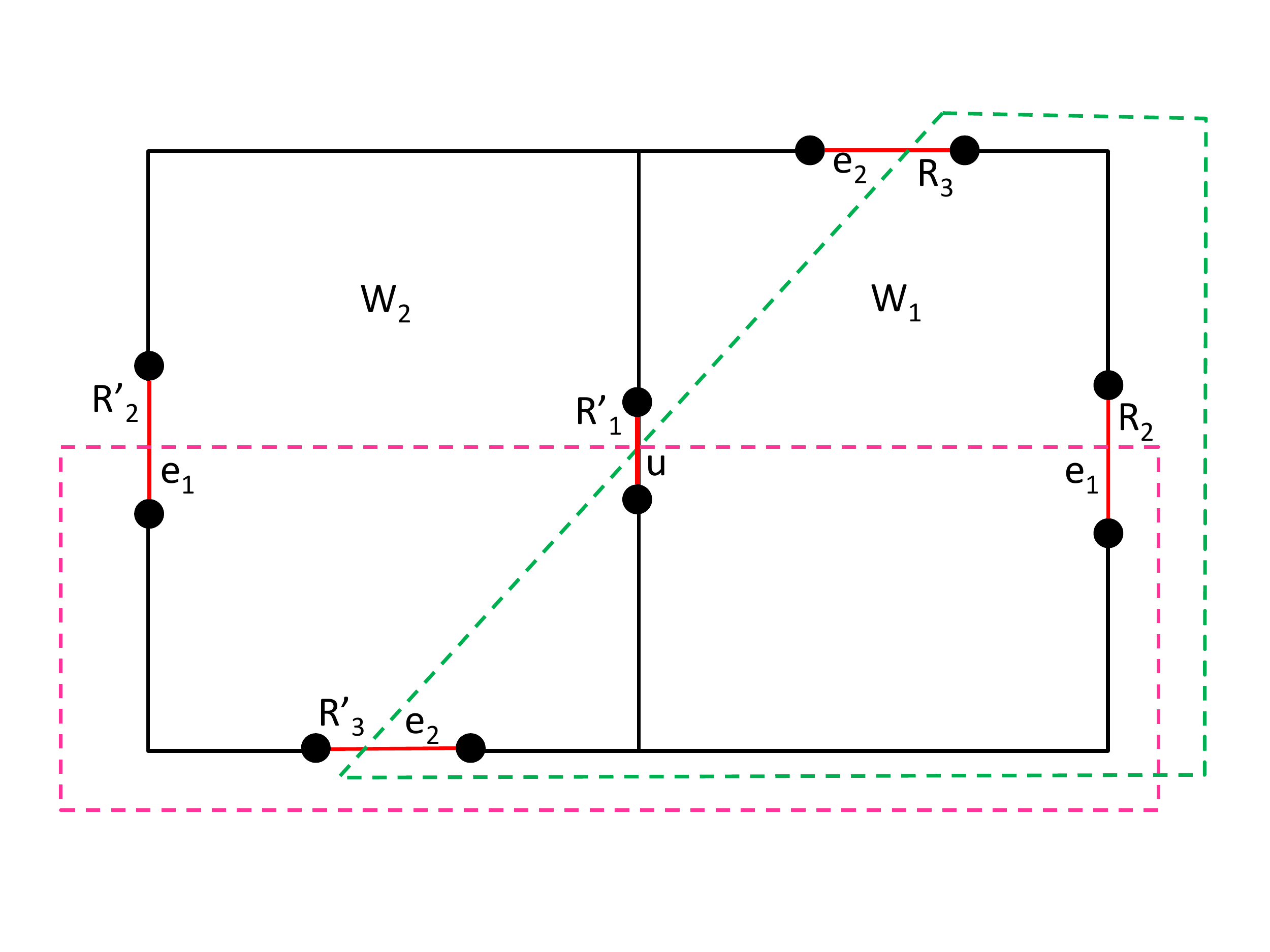}
\caption{Two non-contractible curves of order exactly two}
  \label{fig1}
\end{figure*}

\begin{figure*}
\centering
\includegraphics[height=6cm]{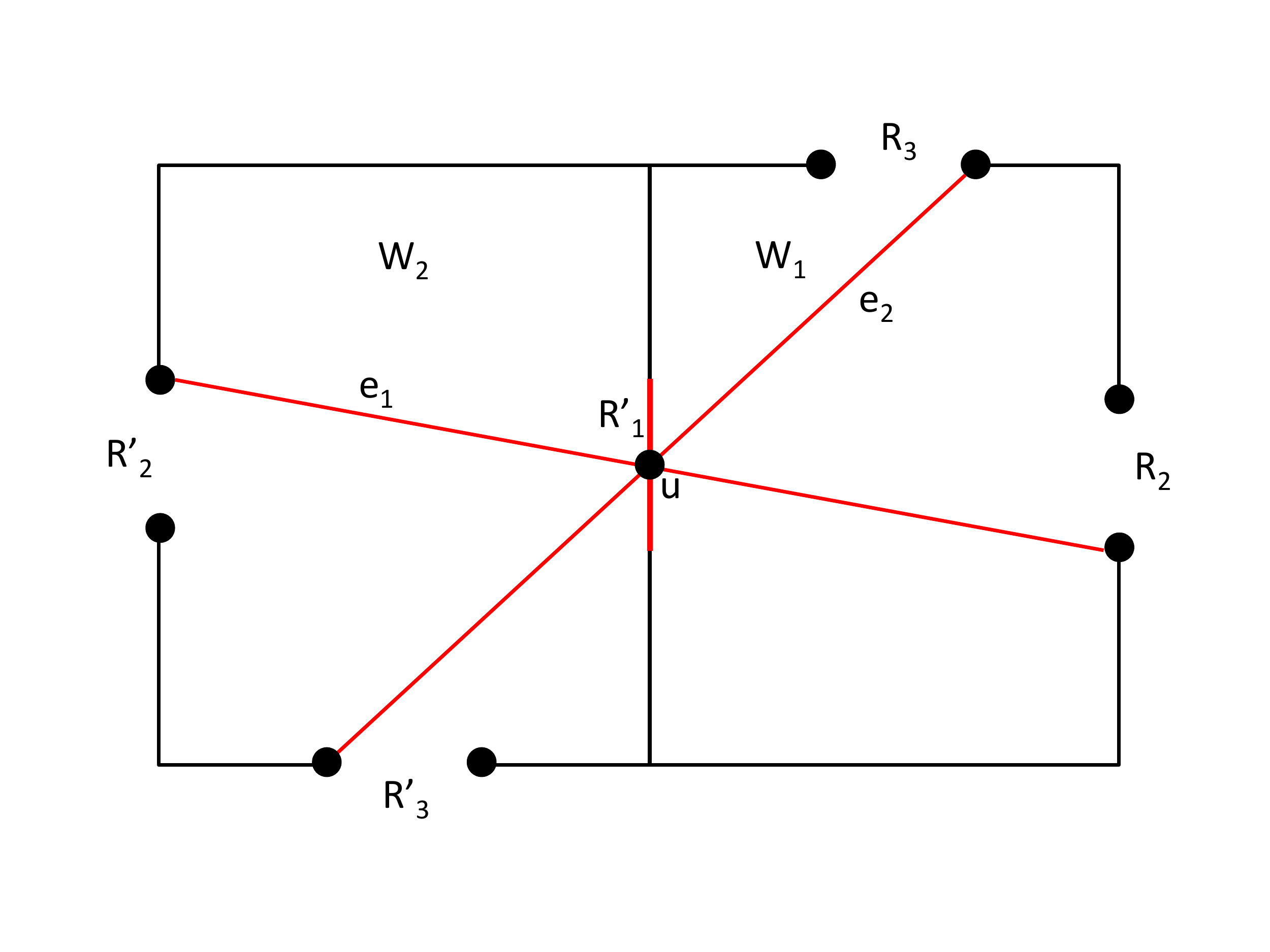}
\caption{Crossing curves}
  \label{fig2}
\end{figure*}

\medskip

Let us remind (1) in the proof of Lemma \ref{faceone} which essentially follows from the fact that $F$ cannot be embedded in a surface of smaller Euler genus.
(3) implies that if $F'$ cannot be embedded in a surface of smaller Euler genus, then  Canonical Lemma would follow
because there is no branch $R$ (other than $R_1, R_2$) in $W_1 \cup W_2$ such that $R$ appears in both $W_1$ and $W_2$ as in (3), and there
is a non-contractible curve of order exactly two in the embedding of $F'$, which hits exactly one vertex in $R_1$ and exactly one vertex in $R$.

However, this is not true; namely $F'$ could be a graph that can be embedded in a surface of smaller Euler genus. For example,
we can take $K_4$ in a projective plane. It is known (see Section 5.6 in \cite{BM}) that $K_4$ is the only minimal face-width two embedding in a projective plane (with all faces size four), yet it can be surely embedded in a plane. So we cannot follow the proof of Lemma \ref{canon1}.

Instead, we first take $F$ from Theorem \ref{obst}, which is a subgraph of $G$ that cannot be embed in a surface of smaller Euler genus.
In addition, as the arguments before Lemma \ref{canon1}, we can find a
family of all embeddings of $F$ ${\bf F''}= \{\hat F_1,\dots,\hat F_l\}$ such that each of them can be extended to an embedding of $G$
in $O(n)$ time, and such that 1-4 above are satisfied.
Let $F''=F' \cup F$. Since $F$ cannot be embedded in a surface of smaller Euler genus, neither can $F''$. Moreover, applying Lemma \ref{local} allows us to confirm that each $F''$-bridge is stable.
Since the embedding of $F'$ can be extended to an embedding of $G$, for each $i$ we can define an embedding $\hat F'_i$ of $F''$ in $S$
that is obtained from $\hat F_i$ and the embedding of $F'$. Note that the embedding $\hat F'_i$ is a face-width two embedding, because $F'$ is a subgraph of $F''$.

Let us observe that if there is a non-contractible curve $C''$ in $F'$ that is homotopic to $C$, then $C''$ must hit at least two vertices of $F''$.
For each $i$, if there is still a non-contractible curve $C''$ in $F''$ that is homotopic to $C$, that is contained in $W_1 \cup W_2$ of the embedding of $F'$,
and that hits exactly two vertices of $F''$, one in $R_1$ and the other in $R_2$, in the embedding $\hat F'_i$, then there must exist a subpath $R'_1 \subseteq R_1$ and a subpath $R'_2 \subseteq R_2$ such that both $R'_1$ and $R'_2$ are branches of $F''$, and $C''$ hits exactly one vertex in $R'_1$ and exactly one vertex in $R'_2$.

Since $F''$ cannot be embedded in a surface of smaller Euler genus, there is no closed curve $C_2$ as in (3) in the embedding $\hat F'_i$ of $F''$, which implies
that there is no branch $R_3$ (nor branch $R'_3$) as in (3). Moreover, no face of $W_i$ appears twice in the embedding $\hat F'_i$ of $F''$ 
since $\hat F'_i$
is of face-width two (for $i=1,2$). 
 
This implies that $F''$-bridges $\mathcal{B'}$ that have at least one attachment in $R'_1 \cup R'_2$ and at least one attachment in $R'_1 \cup R'_2$, are uniquely placed into the ``left'' side and the ``right side'' (or into the ``one'' part and the ``other'' part, if $C$ is non orientation-preserving)
 of $C$ in $W_1 \cup W_2$. It follows that $C$ uniquely splits the incidents edges of both $u$ and $v$ into the ``left'' side and the ``right side'' (if $C$ is non orientation-preserving, then the ``left'' side and the ``right side'' are replaced by the ``one'' part and the ``other'' part).  We can also place all $F''$-bridges $\mathcal{B'}$ into the ``left'' side and the ``right side'' (or into the ``one'' part and the ``other'' part, if $C$ is non orientation-preserving) of $C$ in $W_1 \cup W_2$, in $O(n)$ time, by simply looking at each bridge. Note that for this argument, we only need the assumption ``there is a non-contractible
 curve $C''$ that hits exactly two vertices of $F'$, one in $R_1$ and the other in $R_2$'', and no assumption on the vertices $u,v$ is needed. 
 It follows that given $W_1, W_2$, in $O(n)$ time, either we can place all $F''$-bridges $\mathcal{B'}$ into the ``left'' side and the ``right side'' (or into the ``one'' part and the ``other'' part, if $C$ is non orientation-preserving) of $C$ in $W_1 \cup W_2$, or we can conclude that such a non-contractible curve $C$ does not exist.

Since the embeddings of $F''$-bridges $\mathcal{B'}$ induce the embeddings of  all $F'$-bridges $\mathcal{B}$ that have at least one attachment in $R_1 \cup R_2$ and at least one attachment outside $R_1 \cup R_2$,  we can also place all $F'$-bridges $\mathcal{B}$ into the ``left'' side and the ``right side'' (or into the ``one'' part and the ``other'' part, if $C$ is non orientation-preserving) of $C$ in $W_1 \cup W_2$. in $O(n)$ time.
This proves Canonical Lemma.\qed

\section{Final Remarks and Correctness of our Algorithm}
\label{secmain}
\showlabel{secmain}

Finally we give our whole algorithm for Theorem \ref{thm:main1}. Basically our algorithm is described in
overview. Indeed, Steps 2, 3 and Step 6 are already described there. So we only give several remarks about
Steps 1, 4 and 5.

\paragraph{Step 1.}


Our first step is to reduce both graphs $G_1$ and $G_2$ to be 3-connected. This is quite standard in this literature, see \cite{logspace, jgaa}, but for the completeness, we give a sketch here.

 So let us consider the following decompositions of $G_1,G_2$, respectively.
\begin{enumerate}
\item
Decompose $G_1$ and $G_2$ into biconnected components by constructing
a biconnected component tree decomposition $(T, R)$.
\item
Decompose each biconnected component of $G_1$ and of $G_2$
into triconnected components
by constructing a triconnected component tree decomposition $(T',R')$.
\end{enumerate}

Again, by Theorems \ref{2conunique} and \ref{3conunique}, both
the biconnected component tree decomposition and
the triconnected component tree decomposition are unique.
For our convenience, let us assume that both $T$ and $T'$ are rooted.

For each biconnected component $R_t$ of the biconnected component tree decomposition $(T, R)$, we assign colors to the adhesion sets of $R_t \cap R_{t'}$ for each $tt' \in T$. So each adhesion set receives a color and any two different adhesion sets receive different colors. 
This allows us to define a ``rooted tree'', where for each edge $tt' \in T$ where $t$ is closer to the root, we can define the subtree $T''$ of $T$ that takes all nodes of $T$ that are in
the component of $T-tt'$ that does not contain the root. The vertex of $G$ corresponding to $R_t \cap R_{t'}$ will tell us which node of $T''$ is the root. Thus the colored vertex can be thought of a way to tell the parent node of any subtree of $T$.

It follows that given two graphs $G_1$ and $G_2$ and
their biconnected component tree decompositions
$(T_{G_1}, R_{G_2})$ and $(T_{G_1}, R_{G_2})$, $G_1$ and $G_2$ are isomorphic if and only if for each $t \in T_{G_1}$ and its corresponding
node $t' \in T_{G_2}$, the subtree $T''_{G_1}$ rooted at $t$ is
isomorphic to  the subtree $T''_{G_2}$ rooted at $t'$, and moreover,
the graph induced by $\bigcup_{t \in T''_{G_1}} R_t$
is isomorphic to the graph induced by
$\bigcup_{t' \in T''_{G_2}} R_t$ with respect to
the rooted vertex (see Theorem 5.8 in \cite{logspace}, or \cite{jgaa}).

Similarly, for each triconnected component $R'_t$ of the triconnected component tree decomposition $(T', R')$, we assign colors to the adhesion sets of $R'_t \cap R'_{t'}$ for $tt' \in T'$. So each adhesion set receives a color and any two different adhesion sets receive different colors.
 Note that $|R'_t \cap R'_{t'}| =2$ for each $tt' \in T'$. This, again, allows us to define a ``rooted tree'', where for each edge $tt' \in T'$ where $t$ is closer to the root, we can define the subtree $T''$ of $T'$ that takes all nodes of $T'$ that are in
the component of $T'-tt'$ that does not contain the root. The vertices of $G$ corresponding to $R'_t \cap R'_{t'}$ will tell us which node of $T''$ is the root. Thus the colored vertices can be thought of a way to tell the parent node of any subtree of $T'$.
Let us observe that $\{u, v\} = R'_t \cap R'_{t'}$
receive the same color,
but there is an ``orientation'' between $u$ and $v$. This way,
we make sure that how we glue $R'_{t}$ and $R'_{t'}$ together
at $u, v$.
(More precisely, we make sure that $u \in R'_{t'}$ should not map to $v \in R'_t$, and $v \in R'_{t'}$ should not map to $u \in R'_t$ either ).

It follows that given two biconnected graphs $G_1$ and $G_2$ and
their triconnected component tree decompositions
$(T'_{G_1}, R'_{G_2})$ and $(T'_{G_1}, R'_{G_2})$, $G_1$ and $G_2$ are isomorphic if and only if for each $t \in T'_{G_1}$ and its corresponding
node $t' \in T'_{G_2}$, the subtree $T''_{G_1}$ rooted at $t$ is
isomorphic to  the subtree $T''_{G_2}$ rooted at $t'$, and moreover,
the graph induced by $\bigcup_{t \in T''_{G_1}} R'_t$
is isomorphic to the graph induced by $\bigcup_{t' \in T''_{G_2}} R'_t$ with respect to the rooted
vertices (see Theorem 4.2 in \cite{logspace}, or \cite{jgaa}).

Let us observe that both in the biconnected component tree decomposition and in the triconnected component tree decomposition,
there are at most $g$ components that are
not planar. This follows from the following fact:
\begin{quote}
If $G =G_1 \cup G_2$ with $|G_1 \cap G_2|\leq 2$,
Euler genus of $G'$ is that of $G'_1$ plus that of $G'_2$, where
$G', G'_1, G'_2$ are obtained from $G, G_1, G_2$ by adding the edge
in $G_1 \cap G_2$, if it is not present (see \cite{BM}).
\end{quote}

Let us first start with biconnected component tree decompositions of $G_1, G_2$
respectively.
We first group all the ``planar'' biconnected components into
one component $G_{i,0}$. As
remarked above, there are at most $g$ non-planar biconnected components.
Therefore, we can enumerate all biconnected components $G_{i,j}$  that are not planar for $j=1,\dots,l \leq g$ and
for $i=1,2$.

For each non-planar biconnected component $G_{i,j}$, we obtain
a triconnected component tree decomposition.
We then group all the ``planar'' triconnected components into
one set $G'_{i,0}$. As
remarked above, there are at most $g$ non-planar triconnected components.
Therefore, we can enumerate all triconnected components $G'_{i,j}$  that are not planar for $j=1,\dots,l' \leq g$ and
for $i=1,2$.

We now check, for each $j$, whether or not $G'_{1,j}$ and $G'_{2,j}$ are  isomorphic
for $j=0,1,\dots,l'$ (with respect to the colored vertices). For
each component of $G'_{i,0}$, this can be done by Hopcroft and Wong \cite{HW}.
Note that since the triconnected tree decompositions are the same, we know which components we have to compare.
Assume for the moment that we can check isomorphism
for $G'_{1,j}$ and $G'_{2,j}$ for $j=1,\dots,l'$.

We now glue these graphs
$G'_{i,j}$ together at the colored vertices (with orientation), to obtain the original biconnected graph $G_{i,k}$, and check
whether or not $G_{1,k}$ and $G_{2,k}$ are isomorphic.
 This can be clearly done in $O(n)$ time, because the abstract trees of the triconnected component tree-decompositions
 are the same, so we just need to look at the colored vertices.

 Similarly, we now glue these graphs
$G_{i,k}$ together at the colored vertices, to obtain the original graph $G_{i}$, and check
whether or not $G_{1}$ and $G_{2}$ are isomorphic.
 This can be done in $O(n)$ time, because, again,  the abstract trees of the biconnected component tree-decompositions
 are the same, so we just need to look at the colored vertices.

Therefore, it remains to consider each ``non-planar''
3-connected component of $G_1$ and of $G_2$, respectively.
Hereafter,
we may assume that both graphs $G_1$ and $G_2$ are $3$-connected non-planar.

\paragraph{Step 4.}

In Step 4, let $G,S,{\bf F'}$ be as in Theorem \ref{algotwo}. Let us fix one graph $F'_i \in {\bf F'}$ and its face-width two embedding.
By Theorem \ref{algotwo}, for any homotopy class $H$, there is an $O(n)$ time algorithm to find
the cylinder, as in Theorem \ref{algotwo}.
We claim that if we only care about non-contractible curves that hit exactly two vertices in an embedding of $G$ that extends the embedding of $F'_i$, we only have to consider
at most $r(g)$ non-contractible curves (for some function $r$ of $g$), each in different nontrivial homotopy classes. To see this, we first need the remark right after Lemma \ref{homology}, i.e, $\bsize(F'_i) \leq l'(g)$ (for some function $l'$ of $g$.). Then by Lemma \ref{homology}, we just need to
consider $r(g)$ nontrivial homotopy classes, as claimed.

Since there are at most $l \leq N(g)$ subgraphs of $G$ in ${\bf F'}$ and since we only have to consider at most $r(g)$ different nontrivial homotopy classes for  non-contractible curves that hit exactly two vertices in the embedding of a graph in ${\bf F'}$ (for some function $r$ of $g$), thus in $O(n)$ time, we can enumerate the following pairs of subgraphs:

There is a $q'(g)$ for some function $q'$ of $g$ such that
\begin{enumerate}
\item
there are $q' \leq q'(g)$ pairs $(G'_1,L'_1),\dots,(G'_{q'},L'_{q'})$,
\item
for all $i$, $G=G'_i \cup L'_i$ and $|G'_1 \cap L'_i| = 4$,
\item
for all $i$, $G'_i$ can be embedded in a surface of Euler genus at most $g-1$,
\item
pairs $(G'_i, L'_i)$ are canonical in a sense that graph isomorphism would preserve these pairs,
\item
for all $i$, $L'_i$ is a cylinder with the outer face $F_1$ and
the inner face $F_2$ with the following property: there
 is a non-contractible curve $C_j$ that hits exactly two vertices $x_j,y_j$ in some embedding of $G$ of face-width two for $j=1,2$, and $x_1, y_1$ are contained in $F_1$ and $x_2, y_2$ are contained in $F_2$, where
$L'_i,C_1,C_2$ are obtained from Theorem \ref{algotwo} (we do not distinguish between the orientation-preserving case and the non orientation-preserving case.
For the second case, we refer the reader to 5 in ''Overview of our algorithm''.).
\item
an embedding of $G$ of face-width two in $S$ can be obtained from
some embedding of $G'_i$ in  a surface of Euler genus at most $g-1$ and
an embedding of the cylinder $L'_i$ by identifying respective copies of vertices $x_1, x_2, y_1$ and $y_2$ (thus $G'_i$ also contains all the vertices of $x_1,x_2, y_1, y_2$ and they are on the border of $G'_i$ and $L'_i$, respectively),
\item
for any non-contractible curve that hits exactly two vertices $x, y$ in some embedding of $G$ of face-width two, both $x$ and $y$ are contained in $L'_i$ for some $i$, and
\item
for $i_1\not=i_2$, either $L'_{i_1}$ and $L'_{i_2}$ come from different graphs in ${\bf F'}$, or $L'_{i_1}$ and $L'_{i_2}$ come from different nontrivial homotopy classes for the embedding of a single graph in ${\bf F'}$.

Note that the cylinder includes the case mentioned in Remark 3 right after Theorem \ref{algotwo} if the homotopy class is not orientation-preserving.
\end{enumerate}
Thus after Step 4, we apply our whole algorithm recursively to each of $G'_i, L'_i$ in the pair $(G'_i,L'_i)$ with colored vertices $x_1, y_1, x_2, y_2$
both in $G'_i$ and in $L'_i$. 
Note that we just need to apply Step 6 to $L'_i$.

Note also that we may obtain the unique decomposition by
   vertex-two cuts as in Theorem \ref{algotorus} such that each piece is
   planar. In this case, we directly go to Step 6.

\paragraph{Step 5.}

In Step 5, we create a set ${\bf G}$ of subgraphs of $G$ which is obtained from $G$ by splitting each vertex $v$ of $V_1$ into the ``right'' side and the ''left'' side, where $V_1$ comes from Lemma \ref{faceone}. Let us observe that at Step 5, as in Lemma \ref{faceone}, the ``left'' side and
the ``right'' side can be uniquely  determined and hence we are canonical.
Since $|V_1| \leq q(g)$, thus $|{\bf G}| \leq q(g)$. We apply our whole algorithm recursively to each graph in ${\bf G}$ and the colored ``splitted'' vertex $v$.

\paragraph{Time Complexity.}

Let us observe that in Steps 4 and 5, we have created at most $q'(g),q(g)$ subgraphs of $G$, respectively, and we recursively
apply our whole algorithm to each of these different subgraphs. On the other hand, each of these subgraphs can be embedded in a surface of Euler genus at most $g-1$ and hence, we recurse at most $g$ times. Thus in our recursion process, we create at most $w(g)$ subgraphs in total (for some function $w$ of $g$). Since all Steps 1-6 can be done in $O(n)$ time, and since we only deal with constantly many subgraphs in our recursion process, so the time complexity of our algorithm for Theorem \ref{thm:main1} is $O(n)$, as claimed.

\paragraph{Correctness.}

It remains to show the correctness of our algorithm. Note that by Step 1, we may assume that
the current graph is 3-connected.

Suppose we want to test the graph
isomorphism of two 3-connected graphs $G_1,G_2$, both admit an embedding in a surface $S$ of the Euler genus $g$.
We assume that this embedding is a minimum Euler genus embedding, i.e,
neither $G_1$ nor $G_2$ can be embedded in a surface of smaller Euler genus $g'$.

Suppose one (or both) of $G_1$ and $G_2$ has a polyhedral embedding in $S$.
As in Step 3, we apply Theorem \ref{thm:main2} to both $G_1$ and $G_2$, and hence we obtain
all polyhedral embeddings of both $G_1$ and $G_2$. Then for each of these polyhedral embeddings of $G_1$, and for each of these polyhedral  embeddings of $G_2$, we just apply Theorem \ref{thm:main3} to figure out whether or not these two embeddings are isomorphism. If there are
embeddings of $G_1$ and of $G_2$ that are isomorphic, then we know that $G_1$ and $G_2$ are isomorphic. Otherwise, they are not.

 Suppose none of $G_1$ and $G_2$ has a polyhedral embedding in $S$. Suppose first that one (or both) of $G_1$ and $G_2$ has a face-width
 two embedding in $S$. As above, in Step 4, we create $q'$ different pairs of subgraphs ${\bf G_1}=(G'_{1,1},L'_{1,1}),\dots,(G'_{q',1},L'_{q',1})$ of $G_1$, and $q''$ different pairs of subgraphs ${\bf G_2}=(G'_{1,2},L'_{1,2}),\dots,(G'_{q'',2},L'_{q'',2})$ of $G_2$.

  We claim that if $G_1$ and $G_2$ are isomorphic, we can create
 a pair $(G'_{i,1},L'_{i,1})$ of $G_1$ as above and a pair $(G'_{j,2},L'_{j,2})$ of $G_2$ as above,
 such that $G'_{i,1}$ and $G'_{j,2}$ are isomorphic and
 $L'_{i,1}$ and $L'_{j,2}$ are isomorphic.
    Let us fix the same embedding of $G_1, G_2$. Then as in Section \ref{sectwo},
    there is a  graph $F'_{i,1} \in {\bf F'}$ with its embedding that can be extended to
    the embedding of $G_1$, and there is also a graph $F'_{i',2} \in {\bf F'}$ with its embedding that can be extended to the embedding of $G_2$. 
 Note that $F'_{i,1}$ may not be the same graph as $F'_{i',2}$.
 However, we know that any non-contractible curve $C$ that hits exactly two vertices in $G_1$ (and in $G_2$)  must hit two vertices of $F'_{i,1}$ (and $F'_{i',2}$). So if $g\not=2$
 or $g=2$ but $H$ (a nontrivial homotopy class $H$) is as in Theorem \ref{algotwo}, then by Theorem \ref{algotwo}, we can find $L'_1$ in $G_1$ and $L'_2$ in $G_2$, respectively,
   such that every non-contractible curve in $H$ that hits   exactly two vertices in $G_1$ and in $G_2$, respectively, is contained in $L'_1$ and $L'_2$,
   respectively.
  Moreover, both $L'_1$ and $L'_2$ are bounded by such curves, and
  both $L'_1$ and $L'_2$ are canonical.
  Therefore $L'_1$ is isomorphic to $L'_2$, and hence
  $G_1-L'_1$ is isomorphic to $G_2-L'_2$, as claimed.

 In summary, we have the following:
 \begin{quote}
 If $G_1$ and $G_2$ are isomorphic, then $q'=q''$ and there is one pair of graphs $(G'_{i,1},L'_{i,1})$ in ${\bf G_1}$ and
 the other pair of graphs $(G'_{j,2},L'_{j,2})$ in ${\bf G_2}$ such that
 $G'_{i,1}$ and $G'_{j,2}$ are isomorphic and $L'_{i,1}$ and $L'_{j,2}$ are isomorphic.
 If $G_1$ and $G_2$ are not isomorphic, there are no such pair of graphs.
 \end{quote}

 If $g=2$ and $H$ is as in Theorem \ref{algotorus}, we obtain the unique decomposition by
   vertex-two cuts as in Theorem \ref{algotorus} such that each piece is planar. Then we go to Step
   6.

 Suppose finally none of $G_1$ and $G_2$ has a face-width
 two embedding in $S$. By Lemma \ref{faceone}, we obtain the vertex set $V'_1$ in $G_1$ and the vertex set $V'_2$ in $G_2$, where
 $V'_i$ corresponds to $V_1$ in Lemma \ref{faceone} for $i=1,2$.
 As above, we create subgraphs ${\bf G_i}$ of $G_i$ which is obtained from $G_i$ by splitting each vertex of $V_i$ into the ``right'' side and the ''left'' side.
 Moreover, we are canonical.

 By Lemma \ref{faceone}, it is straightforward to see the following;
 \begin{quote}
 If $G_1$ and $G_2$ are isomorphic, then $|V'_1|=|V''_1|$, and there is one graph in ${\bf G_1}$ and the other graph in ${\bf G_2}$
 that are isomorphic. If  $G_1$ and $G_2$ are not isomorphic, there are no such two graphs.
 \end{quote}

Finally, when the current graph comes to Step 6, it comes from
either Step 3 or Step 4 (with the unique decomposition by
   vertex-two cuts as in Theorem \ref{algotorus} such that each piece is
   planar). In the second case, we just need to consider each piece
   of a 3-connected planar graph.
   Thus at the moment, we have either a
planar embedding of a 3-connected graph or a polyhedral embedding of
a 3-connected graph in a surface of Euler genus $g >0$.

Since we already have all the polyhedral embeddings of $G'_1$ (which is a subgraph of $G_1$) and of $G'_2$ (which is a subgraph of $G_2$),
if two graphs $G'_1$ and $G'_2$ are
isomorphic at Step 6, there must exist an embedding of $G'_1$ and an
embedding of $G'_2$ that represent isomorphic maps (with respect to some colored vertices). By Theorem
\ref{thm:main3}, we can check map isomorphism of the embedding of
$G'_1$ and the embedding of $G'_2$ in $O(n)$ time, and hence we
can check whether or not $G'_1$ and $G'_2$ are isomorphic (with respect to some colored vertices) in $O(n)$
time.

Therefore in Step 6, we can figure out all pairs of subgraphs $(H_1,H'_1),\dots$ with $H_i \subseteq G_1$ and $H'_i \subseteq G_2$, where
both $H_i$ and $H'_i$ are graphs at Step 6, such that
$H_i$ and $H'_i$ are isomorphic (with respect to some colored vertices) for all $i$. This can be done in $O(n)$ time by Theorem \ref{thm:main3}, since we create at most $w(g)$ subgraphs of $G_i$ for some function $w$ of $g$ in our recursion process ($i=1,2$).

For each subgraph of $G_i$ ($i=1,2$) in Step 6,
we can easily go back to the reverse order of Steps 4 and 5 to come up with the original graphs $G_1$ and $G_2$ in $O(n)$ time, because
in both Steps 4 and 5, we only ``split'' a few vertices, and these
vertices are all colored so that we can identify two graphs.

Since we are canonical at Steps 4 and 5, thus having known all pairs of graphs $(H_1,H'_1),\dots$ with $H_i \subseteq G_1$ and $H'_i \subseteq G_2$ such that
$H_i$ and $H'_i$ are isomorphic for all $i$ (with respect to all colored vertices), we can see if $G_1$ and $G_2$ are isomorphic in $O(n)$ time. \qed

\drop{ \secton{Conclusion}\label{sec:conc} \showlabel{sec:conc}

There are two subroutines that required $O(n^3)$ time. We now give a
proof how to make them faster to $O(n)$ time algorithms.

Let $W$ be a 2-connected subgraph of $G$ that is embedded in a
surface $S$ of Euler genus $g$ and of face-width two. 
Let $II$ be the embedding of $W$. Thus each facial walk of $II$ in
$G$ is a cycle. Suppose there is an embedding of $G$ in $S$ that
extends the embedding $II$ of $G$.

Let $F_1,F_2$ be two faces of $W$ that share some vertices. Let $A$
be vertices of $F_1$ such that each of them is of degree at least
three in $W$. Similarly, let $A'$ be vertices of $F_2$ such that
each of them is of degree at least three in $W$.

Suppose that there are two vertex-disjoint paths $P_1,P_2$ of $W$
with both endpoints in $A \cap A'$ that are contained in both $F_1$
and $F_2$ such that each internal vertex of $P_i$ is of degree two
in $W$ ($i=1,2$). Suppose that there is a non-contractible curve
that hits exactly two vertices, one in $P_1$ and the other in $P_2$.

We need to see when we can find such a non-contractible curve $C$.
Since the curve $C$ hits exactly two vertices, the following
observation is important.
\begin{quote}
For any two vertices $v_i \in V(P_i)$ for $i=1,2$, if there is no
bridge $B$ that has attachments in both components of $F_j-v_1-v_2$
for $j=1,2$, then there is a non-contractible curve $C$ in $G$ that
hits only two vertices $v_1,v_2$ of $W$.

Conversely, if there is such a bridge for the two vertices
$v_1,v_2$, then either there is no non-contractible curve $C$ in $G$
that hits only two vertices $v_1,v_2$, or this bridge cannot be
embedded into $F_1$ nor $F_2$.
\end{quote}
The first statement is easy, since no bridge will ``block'' the
non-contractible curve $C$. For the second, if such a bridge exists
and it can be embedded in $F_1,F_2$, then clearly here is no
non-contractible curve $C$ in $G$ that hits only two vertices
$v_1,v_2$. So if

Let us call the face $F_1,F_2$ {\em two adjacent faces}. }

\section{Appendix}
\label{appendix1}

In this section, we give a proof of Theorem \ref{find1} (which was actually given in \cite{kmstoc08}, but we give the proof again).

We need some definitions.

A \DEF{tree decomposition} of a graph $G$ is a pair $(T,R)$, where
$T$ is a tree and $R$ is a family $\{R_t \mid t \in V(T)\}$ of
vertex sets $R_t\subseteq V(G)$, such that the following two
properties hold:

\begin{enumerate}
\item[(W1)] $\bigcup_{t \in V(T)} R_t = V(G)$, and every edge of $G$ has
both ends in some $R_t$.
\item[(W2)] If  $t,t',t''\in V(T)$ and $t'$ lies on the path in $T$
between $t$ and $t''$, then $R_t \cap R_{t''} \subseteq R_{t'}$.
\end{enumerate}

The \emph{width} of a tree decomposition $(T,R)$ is $\max\{|R_t|\mid
t\in V(T)\}-1$, and the \DEF{tree width} of $G$ is defined as
the minimum width taken over all tree decompositions of $G$.
The \emph{adhesion} of our decomposition $(T, R)$ for $tt' \in T$ is $R_t\cap R_{t'}$.

One of the most important results about graphs whose tree-width is
large is the existence of a large grid minor or, equivalently, a large
wall. Let us recall that an \DEF{$r$-wall} is a graph which is
isomorphic to a subdivision of the graph $W_r$ with vertex set
$V(W_r) = \{ (i,j) \mid 1\le i \le r,\ 1\le j \le r \}$ in which two
vertices $(i,j)$ and $(i',j')$ are adjacent if and only if one of
the following possibilities holds:
\begin{itemize}
\item[(1)] $i' = i$ and $j' \in \{j-1,j+1\}$.
\item[(2)] $j' = j$ and $i' = i + (-1)^{i+j}$.
\end{itemize}

We can also define an $(a \times b)$-wall in a natural way, so that
an $r$-wall is the same as an $(r\times r)$-wall. It is easy to
see that if $G$ has an $(a \times b)$-wall, then it has an
$(\lfloor\frac{1}{2}a\rfloor \times b)$-grid minor, and conversely,
if $G$ has an $(a \times b)$-grid minor, then it has an $(a \times
b)$-wall. Let us recall that the $(a \times b)$-grid is the
Cartesian product of paths $P_a\times P_b$.

%

The main result in \cite{RS5} says the following (see also
\cite{rein,yusuke,reed1,RST2}).

\begin{theorem}\label{gridgeneral}
For every positive integer $r$, there exists a constant $f(r)$ such
that if\/ a graph $G$ is of tree-width at least $f(r)$, then $G$
contains an $r$-wall.
\end{theorem}

Very recently, Chekuri and Chuzhoy \cite{ChekuriChuzhoy} gives a polynomial upper bound
for $f(r)$.
The best known lower
bound on $f(r)$ is of order $\Theta (r^2 \log r)$, see~\cite{RST2}.

Let $H$ be an $r$-wall in $G$. If $G$ is embedded in a surface $S$,
then we say that the wall $H$ is \DEF{flat} if the outer cycle of $H$
bounds a disk in $S$ and $H$ is contained in this disk.
The following theorem follows from Demaine et al. (Theorem 4.3) \cite{demaine1}, together with Thomassen \cite{carsten}
(see Proposition 7.3.1 in \cite{MT}).

\begin{theorem}
\label{grid1} \showlabel{grid1}
Suppose $G$ is embedded in a surface with Euler genus $g$. For
any $l$, if
$G$ is of tree-width at least $400lg^{3/2}$, then it contains a flat $l$-wall.
If there is no flat
$l$-wall in $G$, then tree-width of $G$ is less than
$400lg^{3/2}$.
\end{theorem}

Let $G$ be a graph that can be embedded in a surface $S$ of Euler genus $g$ and of face-width $k$.

The proof of Theorem \ref{find1} consists of the following two steps.
\begin{enumerate}
\item
If $G$ is of tree-width $w$ for fixed $w$ (which only depends on $g,k$), then we apply the dynamic programming technique
of Arnborg and Proskurowski \cite{Arn} to obtain in $O(n)$ time the graph $H$ and its embedding
as a surface minor of some embedding of $G$ in $S$.
\item
On the other hand, if $G$ is of tree-width at least $w$, then we keep deleting ``irrelevant'' vertices in $G$ to obtain
a graph $G'$ of tree-width at most $w$ (and moreover, $G'$ does not have such an irrelevant vertex). .
\end{enumerate}

For the second, we need the following result. For the proof,
see \cite{KMR,MT}.
Define a vertex $v$ of $G$ to be an \DEF{irrelevant vertex} if
$G$ has a surface minor of a minimal embedding of
face-width $k$ if and only if $G-v$ has.  Given a planar graph $H$, \textit{face-distance in $H$} of two vertices $x,y\in H$ is the minimal value of $|H\cap C|$ taken over all curves $C$ in $H$ that link $x$ to~$y$ and that meet $H$ only in vertices in this embedding of $H$.

\begin{theorem}
\label{delete}
\showlabel{delete}
Suppose that\/ $G$ contains a planar subgraph $Q$ and that $C$ is the
outer cycle of a planar embedding of $Q$. Suppose also that
for every vertex in $Q -C$, all its neighbors in the graph $G$
are contained in $Q$.
Then every vertex $v$ of\/ $Q$, which is of
face-distance in $Q$ at least $k$ from all the vertices of the
outer cycle $C$, is irrelevant.
\end{theorem}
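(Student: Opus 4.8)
\medskip
\noindent\textbf{Proof plan.}\ This is a classical \emph{irrelevant vertex} argument in the spirit of Robertson and Seymour, and the plan is to establish both implications of the equivalence defining irrelevance: that whenever some minimal face-width-$k$ embedding $H_0$ is a surface minor of an embedding of $G$ in $S$, it can be realized by a model \emph{avoiding} $v$, and conversely that a model living in $G-v$ extends to one in $G$. The whole force of the hypotheses is that $v$ sits deep inside a large planar region of $Q$ that is sealed off from the rest of $G$.

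First I would record the local picture at $v$. By a standard Menger-type duality (levels in the radial graph of the planar embedding of $Q$), face-distance at least $k$ from every vertex of $C$ yields $k$ pairwise vertex-disjoint cycles $D_1,\dots,D_k$ of $Q$, nested in this embedding, with $D_1$ outermost and disjoint from $C$ and $v$ strictly inside the innermost $D_k$; let $\Delta_i$ be the closed disk bounded by $D_i$, so that $\Delta_1\supseteq\cdots\supseteq\Delta_k\ni v$. Since $D_i$ is a cycle of the \emph{planar} graph $Q$ and every vertex of $Q-C$ has all of its $G$-neighbours inside $Q$, no edge of $G$ runs from the interior of $\Delta_i$ to its exterior; thus each $\Delta_i$ meets $G-\Delta_i$ only along $D_i$. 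I would also note that in any embedding of $G$ in $S$ that carries the genus-$g$ embedded graph $H_0$ as a surface minor all of the Euler genus is already ``spent'' on $H_0$, so the region occupied by $\Delta_1$ must be a disk; equivalently, $D_1$ is contractible there. Hence removing or re-inserting a vertex interior to $\Delta_1$ never changes the surface, which is what makes the two implications symmetric.

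The heart of the argument is the rerouting. Suppose $H_0$ is a surface minor of an embedding $\Pi$ of $G$ in $S$; fix a model $\mu$ of $H_0$ (branch sets plus model edges) that uses as few vertices, then edges, as possible. The trace of $\mu$ on $\Delta_1$ is a planar subgraph meeting $D_1$ in a bounded (in terms of $g$ and $|H_0|$) set of terminals and, by minimality together with the fact that $\Delta_1$ communicates with $G-\Delta_1$ only through $D_1$, it reduces to a non-crossing family of trees and paths in the disk $\Delta_1$ with all attachments on $D_1$, realizing one fixed connection pattern among those terminals. At most one of these paths runs through $v$, and the $k$ nested cycles furnish a cycle $D_j$ disjoint from the rest of the trace along which that path can be rerouted so as to go around $v$; this uses that $k$ is large enough that boundedly many strands cannot block all $k$ of the $D_j$. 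After this modification $\mu$ avoids $v$, so $H_0$ is a surface minor of the restriction of $\Pi$ to $G-v$, which is still an embedding in $S$ because $v$ was interior to the disk $\Delta_1$. For the converse, a model of $H_0$ in an embedding of $G-v$ in $S$ is extended to an embedding of $G$ in $S$ by the same device: reroute the small part of the model meeting $\Delta_1$ out to a thin annular neighbourhood of $D_1$, which turns the interior of $\Delta_k$ into a clean disk into which $v$ and its edges can be drawn. Full details of the rerouting are in \cite{KMR,MT}.

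The step I expect to be the main obstacle is the control of $\mu$ inside the nested disks: proving that minimality forces the trace of $\mu$ on $\Delta_1$ to be a bounded, non-crossing linkage, and that its pattern together with $k$ disjoint encircling cycles leaves enough room to route around $v$. This is the only place a genuine planar/linkage argument is needed and the only place the quantitative bound ``face-distance $\ge k$'' is consumed; the remaining ingredients — extracting the nested cycles, the sealing-off observation, and the surface-invariance of deleting or inserting a vertex inside a disk region — are routine.
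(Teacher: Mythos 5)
You should know at the outset that the paper itself contains no proof of Theorem \ref{delete}: it is imported verbatim with the pointer ``for the proof, see \cite{KMR,MT}'', so there is no in-paper argument to compare yours with and your sketch has to stand on its own. Its skeleton is the right family of ideas — extract nested cycles $D_1\supseteq\cdots\supseteq D_k$ around $v$ from the face-distance hypothesis, use the sealing condition to confine all interaction with the rest of $G$ to $D_1$, reroute a minimum model of $H_0$ off $v$, and paste the planar piece into a disk to re-insert $v$ — and your reading of the hypothesis as ``no edge of $G$ runs from the interior of $\Delta_i$ to its exterior'' is the intended one (note it is genuinely stronger than what is literally written, which only places the neighbours of interior vertices in $Q$ as a vertex set and would allow chords of $Q$ escaping the nested region; the statement needs the stronger sealing, so you should say explicitly that you are using it).

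The genuine gap is the step where you assert that in \emph{any} embedding $\Pi$ of $G$ carrying $H_0$ as a surface minor ``the region occupied by $\Delta_1$ must be a disk'' because the Euler genus is already spent on $H_0$. That is not a proof, and nothing in the hypotheses forces it: the piece of $G$ inside $\Delta_1$ is only \emph{abstractly} planar and is attached to the rest solely along $D_1$, so $\Pi$ is free to embed it non-flatly around handles or crosscaps, and the model of $H_0$ may dip deep into $\Delta_1$ precisely in order to use that genus; $D_1$ need not be contractible in $\Pi$. One can only conclude disk-ness \emph{after} the model has been pushed out of the deep interior, yet your rerouting step tacitly relies on disk-ness: rerouting the strand through $v$ along some $D_j$ preserves the abstract minor, but it preserves the \emph{surface} minor (the same embedded $H_0$) only if the trace of $\mu$ on $\Delta_1$ really lives in a disk of $\Pi$, so the argument as written is circular exactly at its crux. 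The converse direction has the same problem in sharper form: an embedding of $G-v$ in $S$ with the model pushed to an annulus near $D_1$ does not by itself provide a face into which $v$ and its incident edges can be drawn, because the graph $G-v$ (not just the model) occupies the region near $v$'s neighbours; the standard fix is to re-embed the entire sealed piece into a disk bounded by $D_1$, and justifying that this can be done without losing the face-width-$k$ surface minor is again the hard content of the proofs in \cite{KMR,MT}, to which you explicitly defer. Two smaller points: face-distance at least $k$ yields $k$ or $k-1$ nested cycles depending on the convention, which should be reconciled with the later definition of $k$-nested; and your claim that only boundedly many strands of $\mu$ can block the cycles $D_j$ needs an argument — minimality of $\mu$ does not obviously bound the number of components of its trace on $\Delta_1$ in terms of $g$ and $|H_0|$. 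As it stands, then, your text is a reasonable outline of the known argument rather than a proof of the theorem.
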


By Theorem \ref{grid1}, if $G$ does not contain a vertex $v$ as in Theorem \ref{delete}, then tree-width of $G$ is less than $400kg^{3/2}$.
Thus by setting $w=400kg^{3/2}$,
it remains to show the above two points. The first point will be discussed in Subsection \ref{bdtw}, while the second
point will be discussed in Subsection \ref{bounding}.

\subsection{Bounded tree-width case}
\label{bdtw}
\showlabel{bdtw}

Our algorithm needs to test whether or not a given graph $G$ is of
bounded tree-width. This can be done in linear time by the algorithm
of Bodlaender \cite{bod}.

\begin{theorem}
\label{consttr}
\showlabel{consttr}
For every fixed $l$, there is a linear time algorithm to determine whether
or not a given graph $G$ is of tree-width at most $l$. Moreover, if
this is the case, then the algorithm gives a tree-decomposition
of tree-width at most $l$.
\end{theorem}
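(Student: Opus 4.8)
The plan is to invoke the algorithm of Bodlaender \cite{bod} directly: the statement is the main theorem of that paper, so in what follows I only indicate the argument one would reproduce. The algorithm is recursive in the number $n=|V(G)|$, and its running time obeys a recurrence of the shape $T(n)\le T\bigl((1-\varepsilon_l)\,n\bigr)+O_l(n)$ for a constant $\varepsilon_l>0$ depending only on $l$, which solves to $O_l(n)$. After the usual preprocessing (handle each connected component separately; if $G$ has more than $l\,|V(G)|$ edges then $\mathrm{tw}(G)>l$ and we stop), the recursion rests on two ingredients: a \emph{structural step} which, assuming $\mathrm{tw}(G)\le l$, produces in linear time a tree decomposition of $G$ of width bounded by a constant $l'=O(l)$; and a \emph{verification step}, the Bodlaender--Kloks dynamic programme, which given any width-$l'$ tree decomposition decides whether $\mathrm{tw}(G)\le l$ and, if so, outputs a tree decomposition of width $\le l$, in time $h(l')\,n$ for some function $h$.

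For the structural step I would use the dichotomy that any graph $G$ on $n$ vertices with $\mathrm{tw}(G)\le l$ has \emph{either} a matching of size at least $c_l\,n$, \emph{or} a set of at least $c_l\,n$ pairwise non-adjacent simplicial vertices (vertices whose neighbourhood is a clique), for a suitable constant $c_l>0$. One gets this from the fact that a treewidth-$\le l$ graph has at most $l\,n$ edges, so at least $n/2$ vertices have degree $\le 4l$; if a maximal matching of $G$ has fewer than $\varepsilon_l\,n$ edges then these low-degree vertices contain a linear-size independent set (the vertices uncovered by the matching), and a counting argument over the $O(n)$ bags of a width-$l$ tree decomposition shows that a constant fraction of them must be simplicial. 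In the second case I would delete the whole independent set $I$ of simplicial vertices: any simplicial vertex of degree $>l$ is itself a certificate that $\mathrm{tw}(G)>l$, so otherwise every $v\in I$ has degree $\le l$, $\mathrm{tw}(G-I)\le l \iff \mathrm{tw}(G)\le l$, and a width-$\le l$ decomposition of $G-I$ extends to one of $G$ by attaching, for each $v\in I$, a fresh bag $N(v)\cup\{v\}$ (of size $\le l+1$). In the first case I would contract the matching, obtaining $G'$ on at most $(1-\varepsilon_l)\,n$ vertices; since edge contraction does not increase treewidth, the recursive call on $G'$ either reports $\mathrm{tw}(G')>l$ (hence $\mathrm{tw}(G)>l$) or returns a width-$\le l$ decomposition of $G'$, which I turn into a decomposition of $G$ of width $\le 2l+1$ by replacing, in every bag, each contracted vertex by the two endpoints of its matching edge.

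Either branch leaves us holding a tree decomposition of $G$ of width at most the constant $l'$, and the verification step then finishes exactly and delivers the ``Moreover'' part. Charging the $O_l(n)$ work at each level against the constant-factor decrease of $n$ yields the claimed $O_l(n)$ total time. The main obstacle is the structural dichotomy: one must pin down the constant $c_l$ bounded away from zero in \emph{both} cases (so that $n$ really drops geometrically), argue the simplicial-vertex count carefully via the tree-decomposition bags, and check that finding the matching, recognising the independent simplicial set, performing the (un-)contractions, and running the bounded-width dynamic programme are each implementable in linear time per level. Since all of this is done in \cite{bod} (the bounded-width dynamic programming being of the type of Arnborg and Proskurowski \cite{Arn}), in the paper we simply cite it.
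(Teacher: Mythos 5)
The paper does not prove this statement at all: it is quoted verbatim as Bodlaender's theorem and justified by the citation \cite{bod}, which is also where your proposal ends up, so on the level of what actually appears in the paper your approach coincides with it. However, the sketch of Bodlaender's algorithm that you supply as backing contains a genuine error in the key structural step. The dichotomy you state --- every graph of tree-width at most $l$ has either a matching of size $\ge c_l n$ or $\ge c_l n$ pairwise non-adjacent \emph{simplicial} vertices --- is false. Take $G=K_{2,n-2}$: it has tree-width $2$, every matching has size at most $2$ (all edges meet the two hub vertices), and it has no simplicial vertex whatsoever, since each degree-two vertex sees the two non-adjacent hubs and each hub sees an independent set. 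So the counting argument you invoke ("a constant fraction of the unmatched low-degree vertices must be simplicial") cannot work, and in the small-matching branch the recursion never shrinks the graph.

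This is precisely the point where Bodlaender's actual proof needs an extra idea: he passes to the \emph{improved graph}, adding an edge between every pair of vertices with at least $l+1$ common low-degree neighbours (an operation that preserves tree-width $\le l$), and shows that when the maximal matching is small there are linearly many vertices that are simplicial \emph{in the improved graph} (``I-simplicial''); these, not simplicial vertices of $G$ itself, are what get deleted and later reattached. In the $K_{2,n-2}$ example the hubs acquire an edge in the improved graph and the degree-two vertices become I-simplicial, which is exactly what rescues the argument. It is also telling that the paper's own Theorem \ref{algmatch2} (used in the appendix) has as its degenerate outcome a large set of degree-two vertices sharing the same pair of neighbours --- the same configuration that defeats plain simpliciality. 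The remainder of your outline (contracting a large matching, expanding a width-$\le l$ decomposition of the contraction to width $\le 2l+1$, and finishing with the Bodlaender--Kloks dynamic programme of Arnborg--Proskurowski type \cite{Arn}) is the standard and correct skeleton; since the paper only cites \cite{bod}, the safest course is to do the same without the flawed intermediate lemma, or to state the structural step in terms of the improved graph.
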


We need to use some tools from the graph minor theory in \cite{RS13}.

A {\em rooted graph} is an undirected graph $G$ with a set
$R(G)\subseteq V(G)$ of vertices specified as roots and an injective
mapping $\rho_G:R(G)\to \mathbb{N}$ assigning a distinct positive
integer label to each root vertex.
Isomorphisms of rooted graphs are
defined in the obvious way, i.e., roots must be mapped to roots with
the same label.

We say that a rooted graph $H$ is a {\em minor} of a rooted graph
$G$ if there is a mapping $\phi$ (a {\em model} of $H$ in $G$) that
assigns to each vertex $v\in V(H)$ a connected subgraph
$\phi(v)\subseteq G$ and to each edge $e\in E(H)$ an edge $\phi(e)$
in $G$ such that the following holds:
\begin{enumerate}
\item The subgraphs $\phi(v)$ ($v\in V(H)$) are pairwise vertex-disjoint connected subgraphs of $G$.
\item
Each edge $\phi(e)$ ($e\in E(H)$) is disjoint from all other edges $\phi(e')$ ($e'\in E(H)$)
and intersects $\cup_{v\in V(H)} \phi(v)$ only at its endvertices.
\item
If $u,v\in V(H)$ are the endpoints of $e\in E(H)$, then $\phi(e)$ is
incident in $G$ with a vertex in $\phi(u)$ and with a vertex in
$\phi(v)$.
\item For every $v\in R(H)$, $\phi(v)$ contains the vertex $u \in R(G)$ such that $\rho_G(u)=\rho_H(v)$.
\end{enumerate}

The {\em folio} of a (rooted) graph $G$ is the set of all rooted minors of $G$.
Clearly, the folio is closed under isomorphism, i.e., if rooted graphs $H$ and
$H'$ are isomorphic and $H$ is in the folio of $G$, then $H'$ is in
the folio as well. Note that there are $2^{\binom{|R(G)|}{2}}$
possible undirected graphs on $R(G)$. If $\delta$ is an integer, then the $\delta$-folio of $G$ contains every model $H$ of $G$ with $|V(H)| \leq \delta$. Obviously, every graph in the
$\delta$-folio has at most $\delta$ vertices.

The folio of a graph $G$ {\em relative to} a set $Z\subseteq
V(G)$ is the $2^{|Z|}$-folio of the rooted graph $G'$, where $G'$ is isomorphic
as unrooted graphs,
but $R(G')=Z$.

By a {\em surface folio} $\mathcal{F}$ of a rooted graph $G$ that can be embedded in a surface $S$ of Euler genus $g$, we mean that
each model $Z_i$ in $\mathcal{F}$ is in a folio of $G$ and moreover, one embedding $II_i$ such that each face is homeomorphic to a disk
and each $Z_i$-bridge can be embedded in a face of $II_i$, is associated with $Z_i$. Moreover,
the embedding $II_i$ of $Z_i$ can be extended to an embedding of $G$ in $S$.
Note that there may be two models $Z_i,Z_j$ in $\mathcal{F}$ that are isomorphic, but their embeddings in $S$ are different.

If $\delta$ is an integer, then the $\delta$-surface-folio of $G$ can be defined in the same way as $\delta$-folio.

It is known that the folio relative to bounded number of vertices can be determined in polynomial time if the
tree-width is bounded.

\begin{theorem}[See~\cite{Arn,RS13}]
\label{thm:twvertex}\showlabel{thm:twvertex}
 For integers $w$ and $l$, there exists a
$(w+l)^{{O}(w+l)} O(n)$ time algorithm for computing the folio
relative to a set of $l$ vertices in graphs of tree-width $w$.
In particular, if $w$ and $l$ are fixed, there exists a linear-time
algorithm.
\end{theorem}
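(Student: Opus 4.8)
The plan is a dynamic programming over a tree decomposition that computes folios bottom-up; since the argument is standard I will only sketch it. First I would invoke Theorem~\ref{consttr} to compute in $O(n)$ time a tree decomposition $(T,R)$ of $G$ of width $w$, and then convert it in linear time into a \emph{nice} rooted tree decomposition of the same width with $O(n)$ nodes, i.e.\ one in which every non-leaf node is an introduce-vertex, a forget-vertex, or a join node. For a node $t$ write $B_t$ for its bag, $T_t$ for the subtree rooted at $t$, $G_t$ for the subgraph of $G$ induced by $\bigcup_{t'\in T_t}B_{t'}$, and $Z_t=Z\cap V(G_t)$, where $Z$ is the given set of $l$ roots with its labelling.

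At each node $t$ I would store the table $\mathcal{A}_t$ consisting of the rooted-isomorphism classes of all rooted minors $H$ of the rooted graph $(G_t,\,B_t\cup Z_t)$ with $|V(H)|\le\delta$, where $\delta=\delta(w,l)$ is a suitable bound depending only on $w$ and $l$ (taking $\delta=2^l+w+1$ suffices for the final output, and a slightly larger $\delta$, still bounded in $w$ and $l$, makes the transitions below self-contained). Since $|B_t\cup Z_t|\le w+1+l$ and $|V(H)|\le\delta$, the number of possible entries of $\mathcal{A}_t$ is bounded by a function $\Gamma(w,l)$ alone; this is exactly what yields the $O(n)$ running time for fixed $w,l$.

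The transitions are routine. At a leaf, $\mathcal{A}_t$ is read off directly from the tiny graph $G_t$. At an introduce-vertex node introducing $v$ over a child $t'$, each entry of $\mathcal{A}_{t'}$ is updated by adjoining a new root branch set $\{v\}$ together with its edges to the already-present neighbours of $v$ in $B_{t'}$, then taking all rooted minors of the result of size $\le\delta$. At a forget-vertex node forgetting $v$ over $t'$: if $v\in Z$ nothing changes except that $v$ now lies outside the bag but is still carried as a root (recorded by $Z_t\ni v$); if $v\notin Z$, each rooted minor in $\mathcal{A}_{t'}$ is transformed by discarding the root label of $v$, in which case the branch set of $v$ may be kept as a non-root vertex, contracted into an adjacent branch set, or deleted, and we take the union over these finitely many choices. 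At a join node with children $t_1,t_2$ (so $B_{t_1}=B_{t_2}=B_t$ and $V(G_{t_1})\cap V(G_{t_2})=B_t$), we build $\mathcal{A}_t$ by \emph{gluing}: for every pair $(H_1,H_2)\in\mathcal{A}_{t_1}\times\mathcal{A}_{t_2}$ whose restrictions to the common vertices $B_t$ identify branch sets and root labels compatibly, we take the union of the two models and record the resulting rooted minor provided it has $\le\delta$ vertices. Finally, at the root $r$ we have $G_r=G$ and $Z_r=Z$, and the desired $2^l$-folio of $(G,Z)$ is obtained from $\mathcal{A}_r$ by dropping all root labels except those in $Z$ and keeping the minors with $\le 2^l$ vertices.

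The main obstacle is the correctness of the join step: one must show that every rooted minor $H$ of $(G_t,B_t\cup Z_t)$ of size $\le\delta$ is obtained, and only such, as a glued union of one entry from each child table. The ``only such'' direction is immediate. For the converse, given a model $\phi$ of such an $H$ in $G_t$, the separation property $V(G_{t_1})\cap V(G_{t_2})=B_t$ together with property (W2) ensures that any branch set of $H$ meeting both $G_{t_1}$ and $G_{t_2}$ must meet $B_t$; hence restricting $\phi$ to each $G_{t_i}$ and promoting the vertices of $B_t$ (and of $Z_{t_i}$) to roots yields bona fide rooted minors $H_1\in\mathcal{A}_{t_1}$, $H_2\in\mathcal{A}_{t_2}$ with $|V(H_i)|\le\delta$ whose glued union is $H$. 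The same decomposition of branch sets shows that truncating to $\le\delta$ vertices is harmless throughout and that the size bound is preserved by the introduce/forget/join operations. Given this, each of the $O(n)$ nodes is processed in $\Gamma(w,l)^{O(1)}$ time, and bookkeeping the constant (the number of rooted graphs on $\le\delta$ vertices with labels among $\le w+1+l$ labels) yields the claimed $(w+l)^{O(w+l)}\cdot O(n)$ bound; the ``in particular'' statement is then immediate.
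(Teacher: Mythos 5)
Your overall route is the same as the paper's: the paper does not actually prove this statement (it cites \cite{Arn,RS13}), and its own sketch of the analogous surface-folio result (Theorem \ref{thm:twvertexs}) is exactly the bottom-up dynamic programming over a bounded-width tree decomposition that you describe, only over a degree-three rooted tree rather than a nice decomposition with introduce/forget/join nodes; that difference is cosmetic, and your table of bounded-size rooted minors relative to $B_t\cup Z_t$ is the standard realization of ``folio relative to the bag plus the roots.''

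There is, however, one genuine soft spot, and it sits precisely at the step you yourself single out as the main obstacle. In the completeness direction of the join step you claim that, because every branch set of $H$ meeting both $G_{t_1}$ and $G_{t_2}$ meets $B_t$, ``restricting $\phi$ to each $G_{t_i}$ and promoting the vertices of $B_t$ to roots yields bona fide rooted minors $H_1,H_2$ whose glued union is $H$.'' This is not true as written: a branch set $\phi(v)$ that crosses the separator is connected in $G_t$ but its restriction $\phi(v)\cap V(G_{t_i})$ need not be connected, so the restriction of $\phi$ is not a model of any rooted minor with the same vertex set as $H$. The correct statement (and the reason the definition of folio carries the exponential size allowance) is that each component of $\phi(v)\cap V(G_{t_i})$ must contain a vertex of $B_t$ (this follows from (W2) and the absence of edges between $G_{t_1}-B_t$ and $G_{t_2}-B_t$); hence $\phi(v)$ must be represented in the child table by \emph{several} root-labelled vertices of $H_i$, one per component, and the gluing rule must be defined so that branch sets of $H_1$ and $H_2$ sharing bag vertices are merged, possibly collapsing several vertices of $H_1$ and of $H_2$ into a single vertex of $H$ (transitively, via the shared labels). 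This is also exactly what forces the intermediate bound $\delta$ to exceed $2^l$: since at most $|B_t|\le w+1$ components over all crossing branch sets can meet $B_t$ on each side, $|V(H_i)|\le |V(H)|+w+1$, so your choice $\delta=2^l+w+1$ is adequate, but the proposal asserts this rather than deriving it. None of this changes the complexity analysis, and the repair is standard (it is the content of the recomputation arguments in \cite{RS13}), but as stated the join-node correctness argument fails and needs to be rewritten with the component-splitting and re-merging made explicit.
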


We prove the following analogue of Theorem \ref{thm:twvertex} for the surface-folio.
\begin{theorem}
\label{thm:twvertexs}\showlabel{thm:twvertexs}
 For integers $w$ and $l$, there exists a
$w^{{O}(w)} O(n)$ time algorithm for computing the surface-folio
relative to a set of $l \leq w$ vertices $Z$ in graphs of tree-width $w$.
In particular, if $w$ and $l$ are fixed, there exists a linear-time
algorithm.
\end{theorem}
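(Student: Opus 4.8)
The plan is to run the dynamic programming that computes the ordinary $\delta$-folio (Theorem~\ref{thm:twvertex}, following Robertson--Seymour~\cite{RS13} and Arnborg--Proskurowski~\cite{Arn}), and to thread through it the extra bookkeeping needed to record, for each rooted minor it produces, every $2$-cell embedding that this minor can inherit as a surface minor of an embedding of $G$ in $S$. First I would apply Theorem~\ref{consttr} (Bodlaender) to compute in $O(n)$ time a tree-decomposition of $G$ of width at most $w$, and then standardise it into a nice tree-decomposition with $O(n)$ nodes rooted at a node $r$. For a node $t$ with bag $B_t$ (so $|B_t|\le w+1$) let $G_t$ denote the subgraph of $G$ spanned by the bags in the subtree at $t$; since $B_t$ separates $G_t$ from the rest of $G$, all further interaction between $G_t$ and $G\setminus G_t$ passes through $B_t$.

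The DP state at $t$ is a set of records. Each record consists of: a rooted minor $Z$ of $(G_t,B_t)$ with $|V(Z)|\le\delta:=2^l$ together with a model of $Z$ in $G_t$, described only by its trace on $B_t$ (which bag vertices lie in which branch set or on which branch of $Z$ --- the usual folio state); and a \emph{partial embedding datum}, namely a rotation system and signature realising a $2$-cell embedding $\Pi$ of $Z$ in a surface of Euler genus at most $g$, in which every face not incident with the trace of $B_t$ is already a closed disk, together with, for every vertex of $B_t$, the face of $\Pi$ containing it and its cyclic slot along that face's boundary walk relative to the branch vertices and the other bag vertices (for bag vertices that are branch vertices of $Z$ the committed part of the rotation is already carried by $\Pi$). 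The boundedness facts that make this work: a graph on at most $\delta\le 2^w$ vertices has only boundedly many rotation systems and signatures, hence only boundedly many $2$-cell embeddings of Euler genus at most $g$; and the number of traces on a bag of size at most $w+1$, each equipped with a slot assignment into the faces of such a $\Pi$, is bounded by a function $h(w,g)$. Thus each DP table has size at most $h(w,g)$, which is $w^{O(w)}$ for $g$ a fixed constant --- the bound of Theorem~\ref{thm:twvertex} times a factor depending only on $w$ and $g$.

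The transitions at \emph{introduce}, \emph{forget} and \emph{join} nodes are the familiar folio operations, now also updating the partial embedding datum. At an introduce node for a vertex $x$ we guess into which face and slot of $\Pi$ the vertex $x$ is drawn, how its edges inside $G_t$ fit into the local rotation, and which branch sets and edges of $Z$ it belongs to, rejecting choices that are not realisable inside the chosen face. At a forget node for $x$ the vertex $x$ becomes unavailable for future attachment, so we \emph{close off} every face whose only remaining open slots were at $x$ and verify that it is a disk (merging records that become identical). At a join node we overlay two records with identical bags, requiring that the two partial embeddings agree on the rotations already committed at the shared branch vertices and on the signatures, and that the face fragments glue along the shared bag vertices into faces without violating $2$-cellness or Euler genus $g$. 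Taking the bag at the root $r$ to be empty, a surviving record yields a genuine closed $2$-cell embedding $W$ of $G$ of Euler genus at most $g$ together with a rooted minor $Z$ of $(G,Z)$ whose inherited embedding is $\Pi$; collecting over all records gives exactly the surface-folio relative to the prescribed $l$-set (the clause ``each $Z$-bridge embeds in a face of $\Pi$'' is automatic, since $(Z,\Pi)$ is obtained from $(G,W)$ by contractions and deletions). Each of the $O(n)$ nodes is processed in time polynomial in $h(w,g)$, so the total running time is $w^{O(w)}O(n)$, linear when $w$, $l$ and $g$ are fixed.

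The main obstacle is getting the partial embedding datum right: it must be simultaneously of bounded size, rich enough that $2$-cellness (``every face is a disk'') and the Euler-genus constraint can be certified \emph{locally} --- at forget nodes and join nodes respectively --- and compatible with a minor model rather than a subgraph, since a single branch set of $Z$ may be spread across many bags and its contraction must still induce a consistent local rotation at the corresponding vertex of $\Pi$. The last point forces one to record, for each branch set currently meeting $B_t$, the cyclic order in which its half-edges toward $Z$ and its boundary with $G\setminus G_t$ appear, so that this local rotation is determined once the branch set is complete. In effect one has to merge the embedding-extension dynamic programming underlying Mohar's linear-time embedding algorithm~\cite{mohar1,mohar2} with the minor-folio dynamic programming of~\cite{RS13}; once the datum is set up correctly, all transitions and the final read-off are routine bounded-size case analysis.
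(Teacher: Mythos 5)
Your proposal is correct and takes essentially the same route as the paper: a bottom-up dynamic program over a width-$w$ tree-decomposition (from Theorem \ref{consttr}), whose tables store rooted-minor (folio) states enriched with embedding information and are combined at each node by a bounded brute-force consistency check, exactly as in the paper's sketch following Arnborg--Proskurowski. The only real difference is presentational: you work with a nice tree-decomposition (introduce/forget/join nodes) and make explicit the ``partial embedding datum'' (rotations, signature, face/slot assignments of bag vertices) that the paper leaves implicit in its phrase ``check if these objects are consistent''.
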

\begin{proof}
Our algorithm follows the standard
dynamic programming approach of Arnborg and Proskurowski \cite{Arn}. So let us give just a sketch.
As in \cite{Arn}, we may assume that each degree in $T$ is at most three.

Given a tree-decomposition $(T,R)$,
the dynamic programming approach of Arnborg and Proskurowski \cite{Arn} assumes that $T$
is a rooted tree whose edges are directed away from the root. We fix the root node $t$ and assume that
$Z$ is in $R_t$.

For $t_1t'_1\in E(T)$ (where $t_1$ is closer to the root than $t'_1$), define
$S(t_1,t'_1)=R_{t_1}\cap R_{t'_1}$ and $G(t_1,t'_1)$ to be the induced subgraph of $G$
on vertices $\bigcup R_s$, where the union runs over all nodes of $T$ that are
in the component of $T-t_{1}t'_1$ that does not contain the root. The algorithm of
Arnborg and Proskurowski starts at all the leaves of $T$ and then
we have to compute the following:
\begin{quote}
For every $t_{1}t'_1\in E(T)$ (where $t_1$ is closer to the root
than $t'_1$), we compute the $w^w$-surface-folio relative to $S(t_1,t'_1)$ in
$G(t_1,t'_1)$.
\end{quote}

Note that since $|S(t_1,t'_1)| \leq w$, the size of surface-folio relative to $S(t_1,t'_1)$ is at most $w^w$.

If $t'_1$ is a leaf, we can compute the $w^w$-surface-folio relative to $S(t_1,t'_1)$ in
$G(t_1,t'_1)$ by a brute force in $O(w^w)$ time.

We assume we have this information for each child $t'_2$ and $t''_2$ of
  $t'_1$.
  A simple brute force solution goes as follows.  We try combining the
 $w^w$-surface-folio relative to $S(t'_1,t'_2)$ in $G(t'_1,t'_2)$ and the $w^w$-surface-folio relative to $S(t'_1,t''_2)$ in $G(t'_1,t''_2)$,
 together with each model in the $w^w$-surface-folio relative to $S(t_1,t'_1)$ in $R_{t'_1}$.
    We can
  easily check if these three objects are consistent to
  represent a model in the $w^w$-surface-folio relative to $S(t_1,t'_1)$ in $G(t_1,t'_1)$.
  For each model in the $w^w$-surface-folio relative to $S(t_1,t'_1)$ in $R_{t_1}$,
  we keep such a solution.  The number of
  combinations to consider is $w^w \times w^w \times w^w$, and each
  can be checked in $O(w^{w})$ time, so the total time needed to
  compute the information for $t_1$ is $O(w^{w})$.

  When we come to the root $t$, we can compute the $w^w$-surface-folio relative to $Z$. Since each iteration can be done in $O(w^{w})$ time,
  thus we can compute the $w^w$-surface-folio relative to $Z$ in $O(n)$ time.
\end{proof}

\subsection{Bounding tree-width}
\label{bounding}
\showlabel{bounding}

We first give the following result shown in \cite{KMR}.

\begin{theorem}
\label{algmatch2} \showlabel{algmatch2}
Let $G$ be a graph with minimum degree at least 2 with at most\/
$4n$ edges. Let $d > 8g\cdot 2^{16\sqrt{g}}$ and $\epsilon = d^{-6}$.
Then we can find in linear time one of the following:
  \begin{enumerate}\myitemsep
  \item
  A vertex set $Z$ of at least $5\epsilon n$ vertices of degree 2,
  each of which has the same pair of neighbors as at least one
  other vertex in $Z$.
  \item
    An induced matching $M$ in $G$ containing at least $\epsilon n$ edges.
  \item
    A minor $G'$ of $G$ which is a forbidden minor for the
    surface $S$ of Euler genus $g$.
  \end{enumerate}
\end{theorem}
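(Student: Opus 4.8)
The plan is to prove Theorem~\ref{algmatch2} by a case analysis that successively peels off: parallel length-two paths (feeding alternative~1), long chains of degree-two vertices (feeding alternative~2 via a greedy induced matching), and finally the ``reduced'' part of $G$, where either a density/degree argument produces a surface obstruction (alternative~3) or a bounded-degree greedy argument produces an induced matching (alternative~2). Euler's formula and a topological bound on how much ``high-degree'' structure a surface of Euler genus $g$ can carry do the heavy lifting, and the constants $d>8g\cdot2^{16\sqrt g}$ and $\epsilon=d^{-6}$ are exactly what is needed to make the bookkeeping close.

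First I would classify every vertex of degree exactly $2$ by its unordered pair of neighbors; with radix-sorting of these pairs this takes linear time. If the classes of size at least two together contain at least $5\epsilon n$ vertices, we let $Z$ be that set and output alternative~1. Otherwise fewer than $5\epsilon n$ degree-two vertices lie in such classes. I would then suppress all degree-two vertices to obtain the topological multigraph $H$ (minimum degree $\ge 3$), recording for each edge of $H$ the corresponding \emph{branch} of $G$, i.e.\ the maximal path of degree-two vertices. Let $D$ be the total number of degree-two vertices. If $D\ge c_1\epsilon n$ for a suitable absolute constant $c_1$, then the branches carry at least $D-5\epsilon n$ internal vertices, and from each branch I pick every third edge among the edges whose \emph{both} endpoints are internal (hence not adjacent to any branch vertex of $H$); edges picked from distinct branches are non-adjacent because distinct branches meet only in branch vertices, so the picked edges form an induced matching of size at least $(D-5\epsilon n)/c_2\ge\epsilon n$, and we output alternative~2.

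Hence we may assume $D<c_1\epsilon n$, so all but $<c_1\epsilon n$ vertices of $G$ have degree at least $3$, while still $|E(G)|\le 4n$. If $|E(G)|>3(n+g)-6$, then by Euler's formula $G$ does not embed in the surface $S$ of Euler genus $g$; repeatedly deleting an edge and suppressing or deleting vertices of degree at most two while the inequality $|E|>3(|V|+g)-6$ persists produces a minor of $G$ that is a minimal forbidden minor for $S$, which we output as alternative~3 (linear time, since $g$ is fixed). Otherwise $|E(G)|\le 3(n+g)-6$; let $X$ be the set of vertices of degree greater than $d$, so $|X|\le 8n/d$ by the degree-sum bound. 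If the edges incident to $X$ account for all but $o(n)$ of $E(G)$, then $\Omega(n)$ vertices of degree at least $3$ send almost all their edges into the small set $X$; using the $3$-connectivity/face-width structure of any putative embedding this dense ``hub'' configuration contains a $K_{3,t}$-minor or a large grid minor with $t$ large in terms of $g$, which — precisely because $d>8g\cdot2^{16\sqrt g}$ exceeds the threshold that a surface of Euler genus $g$ can tolerate — is a forbidden minor, found by a linear-time disjoint-paths routine for fixed $g$; output alternative~3. In the remaining case $G-X$ still has $\Omega(n)$ edges and maximum degree at most $d$, and the standard greedy argument (repeatedly take an edge, delete it together with the at most $2d$ vertices at distance $\le 1$ from it) yields an induced matching of size at least $|E(G-X)|/(2d^2)\ge\epsilon n$, using $\epsilon=d^{-6}$ to absorb the $O(n/d)$ vertices lost with $X$; output alternative~2.

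The main obstacle is the last sub-case: making precise how a bounded-genus surface limits the amount of high-degree-plus-connectivity structure, extracting an explicit forbidden minor from it, and doing so in \emph{linear} time. This is where the numerical bound $d>8g\cdot2^{16\sqrt g}$ is genuinely used, and it rests on topological facts about $K_{3,t}$- and grid-minors in surfaces together with the linear-time minor/disjoint-paths machinery for fixed $g$; the remaining steps are routine linear-time bucketing, suppression of degree-two vertices, Euler's inequality, and greedy matching, none of which I expect to cause difficulty.
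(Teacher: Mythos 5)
You should first be aware that the paper does not prove this statement at all: Theorem \ref{algmatch2} is imported verbatim from \cite{KMR}, so the only question is whether your argument is sound on its own. It is not, and the first gap is concrete. In your second step you claim that if the number $D$ of degree-two vertices satisfies $D\ge c_1\epsilon n$, then picking every third edge ``whose both endpoints are internal'' in the branches yields an induced matching of size at least $(D-5\epsilon n)/c_2\ge \epsilon n$. A branch with $k$ internal (degree-two) vertices contributes only $k-1$ such edges, so branches with a single internal vertex contribute nothing. Take $G$ to be a $3$-regular graph with every edge subdivided exactly once: then $D=\Theta(n)$, no two degree-two vertices share a neighbor pair (so alternative~1 gives nothing), and every branch has exactly one internal vertex, so your construction returns an empty matching and the proof stops; your later bounded-degree greedy (which would in fact succeed on this example) is never reached because your case analysis gates it behind $D<c_1\epsilon n$. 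Handling non-twin degree-two vertices is exactly the delicate part of this lemma: one has to work with the suppressed multigraph (or the ``neighbor-pair'' graph), use that it is a minor of $G$ and hence either embeddable with bounded average degree or already outcome~3, and then extract an induced matching from edges incident to the degree-two vertices with care at the shared high-degree endpoints; the shortcut you propose does not do this.

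The final ``hub'' sub-case is also not a proof. The theorem makes no $3$-connectivity assumption and no embedding is available, so ``the $3$-connectivity/face-width structure of any putative embedding'' cannot be invoked; and the assertion that $\Omega(n)$ low-degree vertices sending almost all their edges into a set $X$ with $|X|\le 8n/d$ forces a $K_{3,t}$- or large-grid-minor is left entirely unproven (pigeonhole over triples of $X$ only gives $t\approx n/|X|^3=o(1)$, so the naive route fails). There is a clean correct replacement you missed: the bipartite subgraph between $V\setminus X$ and $X$ is simple and triangle-free, so if it embedded in $S$ it would have at most $2(n+g)-4$ edges, whereas your hypotheses give it at least $3|V\setminus X|-o(n)$ edges; this contradiction certifies non-embeddability without any $K_{3,t}$ extraction. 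Finally, ``repeatedly deleting edges while $|E|>3(|V|+g)-6$ persists'' produces a non-embeddable minor (a certificate), but not a \emph{minimal} forbidden minor as you claim; minimality would require embeddability tests after each deletion/contraction, which is not obviously linear time. Since the branch step and the hub step are the two places where the statement's real content lives, the proposal as written has genuine gaps rather than repairable slips in bookkeeping.
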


We are given a graph $G$ on a surface $S$ with Euler genus $g$. We want to
bound its tree-width by deleting many vertices at once, and our goal
is to do this in linear time. Moreover, we want that the deleted
vertex set $U$ is \DEF{irrelevant}.
Recall that a cycle $C$ in $G$ in
a surface $S$ is called \emph{flat} if $C$ bounds an open disc
$D(C)$ in $S$.
We say that a vertex $v \in G$ is \emph{$k$-nested}, if there are $k$ disjoint cycles $C_1,\ldots,C_k$ such
that $D(C_k) \supseteq \dots \supseteq D(C_1)$, and
$v$ is contained in the disk $D(C_1)$.
Therefore, following Theorem \ref{delete}, a vertex in $G$ is \emph{irrelevant} if $v$ is
$k$-nested in $G$. Let us restate our result here.

\begin{lemma}
\label{irrelevant11}
\showlabel{irrelevant11}
Given a graph $G$ that can be embedded in a surface $S$ with Euler genus $g$,
for fixed $g,k$, there is a linear time algorithm to find a
vertex set $X\subseteq V(G)$, such that
each vertex in $X$ is irrelevant.
Moreover tree-width of the resulting graph $G-X$ is less than $400kg^{3/2}$.
\end{lemma}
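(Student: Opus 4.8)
The plan is to prove Lemma~\ref{irrelevant11} by an iterative ``peeling'' procedure: as long as the current graph still has tree-width at least $400kg^{3/2}$, delete in linear time a constant fraction of its vertices, each certified irrelevant by Theorem~\ref{delete}, and recurse. Since the sizes of the successive graphs then decrease geometrically, the total running time is $O(n)$. First I would preprocess by iteratively deleting vertices of degree at most $1$ (such deletions obviously do not change whether a minimal face-width-$k$ embedding is a surface minor, so they preserve irrelevance), arriving at a graph of minimum degree $\ge 2$; since it still embeds in $S$, for $n$ large it has at most $3n-6+3g\le 4n$ edges, so Theorem~\ref{algmatch2} applies to it.

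At each round I would first test in linear time, via Theorem~\ref{consttr}, whether the current graph $G_i$ has tree-width $<400kg^{3/2}$; if so, we stop. Otherwise, by Theorem~\ref{grid1} the graph $G_i$ contains a flat $l$-wall for a large $l=l(g,k)$, so there is ample ``flat room''. We then run Theorem~\ref{algmatch2}. Outcome~3 (a genus obstruction) is impossible because $G_i$ embeds in $S$. Outcome~1 produces $\Omega(n_i)$ degree-$2$ vertices grouped into bundles of parallel length-$2$ paths with common endpoint pairs; outcome~2 produces an induced matching on $\Omega(n_i)$ edges. In either case the next step is a local surgery inside the embedding: each such local configuration sits (after rerouting, and after discarding a bounded amount near the genus-carrying ``core'') inside a disk bounded by $k$ concentric flat cycles, and then its innermost vertex is $k$-nested and hence irrelevant by Theorem~\ref{delete}. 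Choosing $\Omega(n_i)$ pairwise disjoint such disks gives a set $X_i$ of $\Omega(n_i)$ vertices that are simultaneously $k$-nested; we delete $X_i$, set $X:=X\cup X_i$, and recurse on $G_{i+1}=G_i-X_i$.

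For correctness, the disjointness of the nesting disks guarantees that every vertex of $X_i$ is $k$-nested in $G_i$ and stays $k$-nested after the remaining vertices of $X_i$ (and, in later rounds, of $X$) are removed; composing the equivalences supplied by Theorem~\ref{delete} over all rounds then shows that $G$ has a minimal face-width-$k$ embedding as a surface minor if and only if $G-X$ does, and in fact each individual vertex placed into $X$ is irrelevant in the graph from which it is removed. The loop terminates with tree-width $<400kg^{3/2}$ by the stopping test (which is consistent with Theorem~\ref{grid1}: once there is no flat $k$-wall the tree-width is already below the bound). For the running time, each round costs $O(n_i)$ --- Theorem~\ref{consttr}, Theorem~\ref{algmatch2}, and the linear-time local surgeries and planarity/disk checks --- and $n_{i+1}\le(1-c)n_i$ for a constant $c=c(g,k)>0$, so $\sum_i n_i=O(n)$ and the whole algorithm is linear.

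The hard part will be the local surgery step: converting the purely combinatorial output of Theorem~\ref{algmatch2} (an induced matching, or a bundle of digon-like configurations, with no a priori control over how it lies in the surface) into $\Omega(n_i)$ pairwise disjoint flat disks each carrying $k$ nested cycles, and extracting them all in $O(n_i)$ time. This is precisely the ``local flatness'' analysis underlying the linear-time embedding algorithm of Kawarabayashi--Mohar--Reed: one must argue that outside a bounded-genus part the large matching forces large planar pieces in which concentric cycles can be read off directly, and must do the bookkeeping so that disks chosen in different rounds, and near the boundary of the flat wall guaranteed by Theorem~\ref{grid1}, do not interfere. Once this structural and algorithmic ingredient is in place, everything else is routine.
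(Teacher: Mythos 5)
There is a genuine gap, and it is exactly at the step you yourself flag as ``the hard part'': converting the output of Theorem~\ref{algmatch2} into $\Omega(n_i)$ pairwise disjoint flat disks each containing $k$ nested cycles. The two non-obstruction outcomes of Theorem~\ref{algmatch2} (a large induced matching, or many degree-2 vertices with duplicated neighbour pairs) carry no information about nested cycles at all: a huge tree, or a graph in which every edge is subdivided, has a linear-size induced matching and not a single $k$-nested vertex. Moreover, tree-width $\geq 400kg^{3/2}$ only guarantees, via Theorem~\ref{grid1}, \emph{one} flat wall of bounded size, hence only $O(1)$ vertices satisfying the hypothesis of Theorem~\ref{delete}; a graph consisting of a constant-size high-tree-width gadget attached to a huge forest has tree-width above the threshold, $n$ arbitrarily large, and only $O(1)$ irrelevant vertices. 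So the claim that each round deletes a constant fraction of the vertices is false in general, the geometric decrease $n_{i+1}\leq(1-c)n_i$ fails, and since every round already costs $\Omega(n_i)$ (the tree-width test of Theorem~\ref{consttr} alone), the claimed $O(n)$ running time collapses. Detecting \emph{which} vertices are irrelevant, in linear time, is precisely the algorithmic content of the lemma, and your proposal delegates it to an unproved ``local surgery'' that cannot be extracted from the matching/stable-set structure.

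The paper uses Theorem~\ref{algmatch2} for a different purpose: not as a source of irrelevant vertices, but to build a geometrically shrinking sequence of minors $G=G_0,G_1,\dots,G_b$ (contracting the induced matching, or deleting the duplicated degree-2 vertices and adding the corresponding edge) down to a constant-size graph $G_b$, where irrelevant vertices are found by brute force. It then walks back up: after removing the irrelevant vertices at level $i+1$ the graph has tree-width less than $400kg^{3/2}$ by Theorem~\ref{grid1}, uncontracting one matching at most doubles this, so each intermediate graph $G''_i$ has bounded tree-width and its irrelevant vertices can be found in time $O(|G_i|)$ by the surface-folio dynamic programming of Theorems~\ref{consttr} and~\ref{thm:twvertexs} (the two-phase leaf-to-root and root-to-leaf computation), together with the key monotonicity observation, proved by a rerouting argument on nested cycles, that a vertex deleted as irrelevant at a later stage remains irrelevant at all earlier stages. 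Your proposal is missing both this bounded-tree-width detection mechanism and that monotonicity argument, and without them neither the correctness of simultaneous deletions across rounds nor the linear running time can be salvaged.
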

\begin{proof}
Here is a description.
Hereafter, we assume that $G$ has minimum degree at least 2.

\smallskip

{\bf Step 1.} Find a sequence of graphs $G=G_0,G_1,...,G_b$ such
that $G_i$ is obtained from $G_{i-1}$ by either contracting an induced
matching $M_i$ with at least $\epsilon|G_{i-1}|$ edges for
some small but constant $\epsilon>0$, or deleting a stable set of
$\epsilon |G_{i-1}|$ vertices, each of degree 2. In the second case, every deleted vertex has
the same neighbors as another vertex of degree 2 in the stable set. In addition, we add an edge between
two neighbors of each vertex $x$ in the stable set.

In both cases,
the resulting graph $G_{i}$ is a minor of $G_{i-1}$.

\smallskip

This step can be done as discussed in Theorem \ref{algmatch2}.
We may assume that the third output in Theorem \ref{algmatch2} would not happen.

We keep doing it $b$ steps, where $b$ is minimum integer
such that $G_b$ has fewer than $B$ vertices for some absolute constant $B$.
Then $b \le log_{1/\epsilon} n$ and the sum of the sizes of all $G_i$
is $O(n)$.

At each step $i$, we can either find a desired induced matching or a desired
stable set in time $O(|G_i|)$ as explained in Theorem \ref{algmatch2}. Note that
since $G$ can be embedded into the surface $\Sigma$ of Euler genus $g$,
we never get the third outcome of Theorem \ref{algmatch2}.

\smallskip

{\bf Step 2.} Apply a brute force algorithm to find irrelevant
vertices of $G_b$. Since $|G_b| < B$, this can
be done in constant time. Let $G'_b$ be the subgraph of $G_b$
obtained from $G_b$ by deleting irrelevant vertices. Since $G'_b$
has no vertex that is $k$-nested, so $G'_b$ has
tree-width less than $400kg^{3/2}$ by Theorem \ref{grid1}.

\smallskip

We recursively apply Step 3 for $i=b,b-1,\dots$.

\smallskip

Let $G_{i+1}$ be the graph obtained in the previous iteration.
$G'_{i+1}$ is a subgraph of $G_{i+1}$ with the following properties;

\begin{enumerate}
\item
$G'_{i+1}$ is embedded into a surface $\Sigma'$ of Euler genus $g$.
\item
$G'_{i+1}$ does not have a vertex that is irrelevant.
\item
Each vertex in $V(G_{i+1}) - V(G'_{i+1})$ is irrelevant.
\end{enumerate}

The purpose of Step 3 is to start with $G'_{i+1}$, and then to
construct a graph $G'_{i}$ satisfying the above properties for $i$
in $O(|G_i|)$ time. A short computation implies that if we can do it
in $O(|G_i|)$ time for each $i$, Step 2 can be done in $O(n)$ time.
Note that by the above properties, $G'_{i+1}$ is of tree-width at most
$400kg^{3/2}$ by Theorem \ref{grid1}.

\smallskip

{\bf Step 3.} We shall find a vertex set
$X$ that consists of irrelevant vertices in the graph $G''_i$ in
time $O(|G_i|)$, where $G''_i$ can be obtained from $G'_{i+1}$ by
uncontracting the induced matching, or adding a stable set of
$\epsilon |G_i|$ vertices each of degree 2, as in Theorem
\ref{algmatch2}.
Then
output the graph $G'_i=G''_i - X$.

\medskip

This step is crucial. It consists of several phases.
Let us first observe the following;

\begin{quote}
If a vertex $x$ is irrelevant for $i$,
then $x$ is irrelevant for $i'
< i$.
\end{quote}

In order to show this observation, we must prove that after deleting
irrelevant vertices in $G_i$, all the previously deleted vertices
are also irrelevant in $G_{i}$. We now argue that this
is, indeed, true. Suppose not.
In this case, we may assume that
 in the current graph $G_i$, each of all the previously
deleted vertices is $k$-nested, but when we delete an irrelevant vertex $v$ from $G_i$,
there is a vertex $w$ which was deleted previously, such that $w$ is
not $k$-nested in $G_i-v$. Let $C_1,\dots,C_{k}$ be the
$k$ nested cycles surrounding $w$ in $G_i$, and let
$C'_1,\dots,C'_{k}$ be the $k$ nested cycles surrounding $v$ in $G_i$.
Assume that $v$ is in one of $C_1,\dots,C_{k}$, say $C_l$. Let us assume $l \geq k/2$, as the other case is identical.

If we
cannot reroute $C_l$ using $C'_1$, this means that $C'_1$ hits both
$C_{l-1}$ and $C_{l+1}$. Inductively, it can be shown that if we
cannot reroute $C_{l-j+1},\dots,C_l,\dots,C_{l+j-1}$ using
$C'_1,\dots,C'_j$, this means that $C'_j$ hits both $C_{l-j}$ and
$C_{l+j}$. However, we can reroute
$C_{2l-1},\dots,C_l,\dots,C_{1}$ using $C'_1,\dots,C'_{l}$. So, there
are $k$ nested cycles in $G_i-v$ surrounding $w$, a contradiction. Thus
the observation holds.

\medskip

This observation implies that we only need to consider the graph $G''_i$
to construct the subgraph $G'_i$ of $G_i$.

\medskip

First, if there is a stable set of $\epsilon |G_i|$ vertices in
$G_i$, each of degree 2, and $G_{i+1}$ is obtained from $G_i$ by
deleting this stable set, then since every vertex in the stable set has the
same neighbors as at least one vertex in the stable set, and
moreover, the edge in its neighbors is added to $G_{i}$, it is easy
to see that the resulting graph $G''_i$ has no vertex that is
$k$-nested, and hence we are done, as we just
output $G''_i$.

So we may assume that $G_i$ has an induced matching $M_i$ of order
$\epsilon |G_i|$. Recall that $G''_i$ is the graph obtained from
$G'_{i+1}$ by uncontracting the matching $M_i$ restricted to the
graph $G'_{i+1}$. First, let us observe that tree-width of  $G''_i$
is at most twice of that of $G'_{i+1}$ (since the uncontraction
increases tree-width by factor $2$).
So it follows that $G''_{i}$ is of tree-width $w$
at most $800kg^{3/2}$. Thus
let us keep in mind that we are only working on the tree-width
bounded graph $G''_i$, and we just need to find a desired set $X$ as
in Lemma \ref{irrelevant11} in $G''_i$.

We now show how to obtain the graph $G'_i$ from $G''_i$.
Recall that $G''_i$ is embedded into a surface $S$ of
Euler genus $g$.

%

By Theorem \ref{consttr}, we can obtain a tree-decomposition $(T,R)$ of $G''_i$ of width $w$.
As in the proof of Theorem \ref{thm:twvertexs}, we may assume that each degree in $T$ is at most three.
We fix the root note $t$.
Thus $T$ is a rooted tree. As above, for $t_1t'_1\in E(T)$ (where $t_1$ is closer to the root than $t'_1$), define
$S(t_1,t'_1)=R_{t_1}\cap R_{t'_1}$ and $G(t_1,t'_1)$ to be the induced subgraph of $G$
on vertices $\bigcup R_s$, where the union runs over all nodes of $T$ that are
in the component of $T-t_{1}t'_1$ that does not contain the root.


The main idea in the rest of the proof is the following:

\begin{quote}
For each $R_t$ in the tree-decomposition $(T,R)$, if
we can compute the $w^w$-surface-folio relative to $R_t$ in $G$,
then we can find all the irrelevant vertices in $R_t$.
\end{quote}

Indeed, if an irrelevant vertex is contained in one graph $R_t$,
then we can detect it by finding the $w^w$-surface-folio relative to $R_t$.

So our algorithm will do the following two things simultaneously:
constructing the $w^w$-surface-folio relative to $R_t$ and deleting
an irrelevant vertex is contained in $R_t$

We are now ready to describe our algorithm here.
Because we need to compute the $w^w$-surface-folio relative to $R_{t'}$ in $G$ for each $t' \in T$, thus
we need to consider the two phases; working from the leaves, and
working from the root.

\smallskip

{\bf Phase 1.} Working from the leaves.

\smallskip

We first work from the leaves of the tree-decomposition.
For all the
leaves of $T$,
we can find all the irrelevant vertices in constant time, as
each leaf has at most $w \leq 800kg^{3/2}$ vertices.

%
%
%
%
%
%
%
Let us look at a node $t' \in T$.
Let $F_{t_i}$ be the $w^w$-surface-folio relative to $S(t',t_i)$ in $G(t',t_i)$ for $i=1,2$, where $t_1,t_2$ are the children of $t'$.
For each model $F \in F_{t_1}$ and for each model $F' \in F_{t_2}$, we compute the $w^w$-surface-folio relative to $S(t'',t')$
in $R_{t'} \cup F \cup F'$,
where $t''$ is the parent of $t'$ (if $t'$ is a leaf of $T$, then $F_{t_1}=F_{t_2}=\emptyset$).
This can be easily done in $O(|R_{t'}\cup F \cup F'|^{|R_{t'}\cup F \cup F'|})$ time by a simple brute force. So at this moment, we can compute
the $w^w$-surface-folio relative to $S(t'',t')$ in $G(t'',t')$.
Then we delete all the the
irrelevant vertices in $R_{t'}$ in $O(|R_{t'}\cup F \cup F'|^{|R_{t'}\cup F \cup F'|})$ time by again a simple brute force.
After deleting the irrelevant vertices in $R_{t'}$, we update the $w^w$-surface-folio relative to $S(t'',t')$
in $R_{t'} \cup F \cup F'$, in time $O(|R_{t'}\cup F \cup F'|^{|R_{t'}\cup F \cup F'|})$.

Since $|F_{t_1}|, |F_{t_2}|, |R_{t'}|$ are all bounded in terms of $k,g$, thus
in order to compute the $w^w$-surface-folio relative to $S(t'',t')$ and delete all the irrelevant vertices in $R_{t'}$, it only takes
$O(f_1(k,g)|R_{t'}|)$ time in total for some function $f_1$ of $k,g$. Then we look at the parent of $t''$,
and so on.

By doing this procedure,
we can reach the root node $t$ from
all the leaves. When we perform this algorithm at the root node $t$,
we can delete all the irrelevant vertices in $R_{t}$, because
we can compute
the $w^w$-surface-folio relative to $R_{t}$ in $G$.

In each iteration, the time complexity is $O(f_1(k,g)|R_{t'}|)$ for each $t' \in T$.
Thus in total, we can do Phase 1
in time $O(f_1(k,g)n)$, which is linear with respect to $n$.

This finishes the phase 1.
Note that at the moment, we can detect all the irrelevant vertices in the root $R_{t}$, but
we may not be able to detect all the irrelevant vertices in other nodes $R_{t'}$. This is
because we need the information about the $w^w$-surface-folio relative to $R_{t'}$ in $G$. So far, for each $t' \in T$, we only get
the information about the $w^w$-surface-folio relative to $R_{t'}$ in $G(t',t_1)$ and $G(t',t_2)$ where $t_1,t_2$ are the children of $t'$.

\smallskip

{\bf Phase 2.} Working from the root in the resulting graph.

\smallskip

After the first phase, we need to work from the root. Let $G'$ be
the resulting graph from the phase 1, and let $(T,R)$ be the
resulting tree-decomposition of $G'$. Note that this
tree-decomposition has still tree-width at most $w \leq 800kg^{3/2}$.

We now work from the root $t$ to the leaves of $(T,R)$.
For each $t'$, we need to compute the $w^w$-surface-folio relative to $S(t'',t')$ in
$(G-G(t'',t')) \cup S(t'',t')$, where
$t''$ is the parent of $t'$. As in Phase 1, we are done with the root $t$. Suppose $t'\not=t$.
Note also that the $w^w$-surface-folio $F_{t_1}$ relative to $S(t'',t_1)$ in $G(t'',t_1)$ is already computed by Step 1, where
$t_1$ is the child of $t''$ with $t_1\not=t'$. Suppose we know the $w^w$-surface-folio $F_{t''}$ relative to
$S(t''',t'')$ in $(G-G(t''',t'')) \cup S(t''',t'')$, where
$t'''$ is the parent of $t''$.

For each model $F \in F_{t_1}$ and for each model $F' \in F_{t''}$, we compute the $w^w$-surface-folio relative to $S(t'',t')$
in $R_{t''} \cup F \cup F'$.
This can be easily done in $O(|R_{t''}\cup F \cup F'|^{|R_{t''}\cup F \cup F'|})$ time by a simple brute force.
Note that we have already deleted the irrelevant vertices in $R_{t''}$ because $t''$ is the parent of $t'$.
We also note that at this moment, together with
two $w^w$-surface-folios relative to $R_{t'}$ in $G(t',t'_1)$ and in $G(t',t'_2)$ (computed in Phase 1), where $t'_1,t'_2$ are the children of $t'$, we can compute the $w^w$-surface-folio
relative $R_{t'}$.
Then we delete all the the
irrelevant vertices in $R_{t'}$ by using the $w^w$-surface-folio relative to $S(t'',t')$
in $R_{t''} \cup F \cup F'$, together with two $w^w$-surface-folios relative to $R_{t'}$ in $G(t',t'_1)$ and in $G(t',t'_2)$ 
 (computed in Phase 1),
in $O((|R_{t'}|+3w^w)^{|R_{t'}|+3w^w})$ time. Since $|F_{t_1}|, |F_{t''}|, |R_{t'}|$ are all bounded in terms of $k,g$, thus
in order to compute the $w^w$-surface-folio relative to $S(t'',t')$ in $(G-G(t'',t')) \cup R_{t'}$, and
delete all the irrelevant vertices in $R_{t'}$, it only takes
$O(f_2(k,g)|R_{t'}|)$ time in total for some function $f_2$ of $k,g$. 
After deleting the irrelevant vertices in $R_{t'}$, we update the $w^w$-surface-folio relative to $S(t'',t')$. 
This can be also done in $O(f_2(k,g)|R_{t'}|)$ time, by following the above arguments. 

Then we look at the children of $t'$,
and so on.
%
%
%
%

We keep applying this procedure until we reach all the leaves. Then for each $t' \in T$, we can find the $w^w$-surface-folio relative to $R_{t'}$ in $G$, and
detect all the irrelevant vertices in all the nodes $R_{t'}$.

In each iteration of Phase 2,
the time complexity is $O(f_2(k,g)|R_{t'}|)$  for each $t' \in T$.
Thus in total, we can do Phase 2
in time $O(n \times f_2(k,g))$, which is linear with respect to $n$.

This completes the description of the algorithm.\qed

\medskip

As observed above,
all the irrelevant vertices in $G''_i$ are deleted in the
above algorithm.

In summary, we can, in time $O(|G''_i|)$, find a vertex set $X$ in $G''_i$ such that each vertex in $X$ is
irrelevant in $G''_i$, and the graph $G'_i=G''_i-X$ has no irrelevant vertex.
So $G'_i$ is of
tree-width at most $400kg^{3/2}$ by Theorem \ref{grid1}, and hence $G'_i$ is as desired.

Thus Step 3 is done and this completes the proof of Lemma \ref{irrelevant11}.
\end{proof}
\drop{
\section{Face-width two case and Dynamic programming}

However, we now mention that such flexibility of embeddings happens
in a specific way if embeddings are of face-width two, and one
face-width two
embedding of a subgraph of $G$ is given. 

Let $W$ be a 2-connected subgraph of $G$ that is embedded in a
surface $S$ of Euler genus $g$ and of face-width two.
Let $II$ be the embedding of $W$. Thus each facial walk of $II$ in
$G$ is a cycle. Suppose there is an embedding of $G$ in $S$ that
extends the embedding $II$ of $G$.

Suppose a $W$-bridge $B$ is stable and is attached to a face $F$ of
the embedding $II$. If $F'$ is another face and intersection of $F$
and $F'$ is either a single vertex or a single edge, then $B$ cannot
be embedded in $F'$. Consequently, if all other faces satisfy this
property, then the bridge $B$ can be uniquely embedded in the face
$F$ (see \cite{MT}).

We now look at the case when the embedding $II$ of $W$ does not have
an unique face that accommodates a stable $W$-bridge $B$.

Let $F_1,F_2$ be two faces of $W$ that share some vertices. Let $A$
be vertices of $F_1$ such that each of them is of degree at least
three in $W$. Similarly, let $A'$ be vertices of $F_2$ such that
each of them is of degree at least three in $W$.

Suppose that there are two disjoint paths $P_1,P_2$ of $W$ with both
endpoints in $A \cap A'$ that are contained in both $F_1$ and $F_2$
such that
\begin{enumerate}
\item
each internal vertex of $P_i$ is of degree two in $W$ ($i=1,2$), and
\item
there is a non-contractible curve $C$ in $W$ that hits only two
vertices $v_1,v_2$ of $W$ such that $v_i \in V(P_i)$ for $i=1,2$.
\end{enumerate}
Suppose a $W$-bridge $B$ has attachments in both $P_1$ and $P_2$,
but there is no other attachment of $B$ in $W-P_1-P_2$. Then we say
such a bridge $B$ \DEF{bad}. The reason why we call $B$ bad is
because $B$ could be embedded in either $F_1$ or $F_2$ in the
embedding $II$ of $W$. This gives us flexibility of non-polyhedral
embeddings in any surface (other than the sphere). On the other
hand, since $B$ does not have any attachment in $W-P_1-P_2$, there
is no other face that can accommodate the bridge $B$ in $II$ of $W$.
Moreover, if $B$ has an attachment either in $W-V(F_1)$ or
$W-V(F_2)$, then one of $F$ and $F'$ would be the only face that can
accommodate $B$ in $II$ of $W$. Therefore the following property
holds.
\begin{quote}{\bf Property.}
Given the embedding $II$ of $W$ as above, a stable bad bridge $B$
can give us two possibilities of the embedding of $B$ in $S$.
Moreover, the two faces $F$ and $F'$ that can accommodate the bridge
$B$ give rise to a non-contractible curve that hits exactly two
vertices in any embedding of $G$ that extends the embedding $II$ of
$W$.

All other stable non-bad bridges are embedded in the unique face of
$II$ of $W$.
\end{quote}

We say these two possible embeddings of the bad bridge $B$ \DEF{bad
flip}. }
\end{document}